\documentclass[12pt,letterpaper]{article}
\usepackage{times}
\usepackage[utf8]{inputenc}
\usepackage[authoryear,round]{natbib}
\usepackage{amsmath,amsfonts,amssymb,amsthm,enumitem}
\usepackage{bibunits}
\usepackage{rotating}
\usepackage{caption}
\usepackage[normalem]{ulem}
\usepackage{subcaption}
\usepackage{graphicx}
\usepackage{pgf}
\usepackage{booktabs}
\usepackage{multirow}
\usepackage[margin=1in]{geometry}
\usepackage{hyperref}
\usepackage{bbm}
\usepackage{comment}
\usepackage{float}
\usepackage{mathrsfs}
\usepackage{longtable}
\usepackage{pdflscape}
\usepackage{threeparttable}
\usepackage{setspace}
\usepackage{titlesec}
\titleformat{\section}{\Large\bfseries}{\thesection}{1em}{}
\hypersetup{
    colorlinks=true,
    linkcolor=black,
    citecolor=black,
    urlcolor=black,
    pdfborder={0 0 0}
}
\usepackage{tikz,calc}
\usetikzlibrary{chains}
\usetikzlibrary{positioning}

\renewcommand\labelitemi{-}
\usepackage{accents}

\newtheoremstyle{boldname}
  {\topsep}   
  {\topsep}   
  {\itshape}  
  {}          
  {\bfseries} 
  {.}         
  { }         
  {\thmname{#1}~\thmnumber{#2}\thmnote{~(#3)}} 

\theoremstyle{boldname}
\newtheorem{theorem}{Theorem}
\newtheorem{remark}{Remark}
\newtheorem{proposition}{Proposition}
\newtheorem{lemma}{Lemma}

\makeatletter
\renewcommand{\maketitle}{%
  \begin{center}
    {\LARGE \bfseries \@title \par}
    \vskip 0.8em
    {\normalsize
      \lineskip .4em
      \begin{tabular}[t]{c}
        \@author
      \end{tabular}\par}
    \vskip 0.8em
    {\small \@date \par}
  \end{center}
  \vskip 0.8em
}
\makeatother
\begin{document}

\title{The Anatomy of Commodity Risk: Micro, Market, and Economy-Wide Sources}

\author{
Nektarios Aslanidis$^{a}$,
Aurelio F. Bariviera$^{a}$,
George Kapetanios$^{b}$,
Vasilis Sarafidis$^{c}$,
Alexia Ventouri$^{b}$\\[0.5em]
\small
$^{a}$Universitat Rovira i Virgili, ECO-SOS, Department of Economics, Spain\\
\small
$^{b}$King's College London, Department of Banking and Finance, UK\\
\small
$^{c}$Brunel University London, Brunel Business School, UK\\[0.3em]
}

\date{}

\maketitle

\vspace{-0.2em}

\begin{abstract}
We study the anatomy of commodity risk by distinguishing micro, market-level, and economy-wide sources. We develop a two-stage ``divide-and-conquer'' framework that allows sensitivities to these risk sources to vary across commodities while treating economy-wide risk as latent. The first stage uses defactored instrumental-variable estimation to recover commodity-specific sensitivities to micro and market conditions. The second combines principal components with high-dimensional variable selection to identify an observable representation of macro-financial risk. We then construct Risk Intensity Indices (RIIs), which combine estimated sensitivities with prevailing risk conditions to quantify the relative importance of each risk source on a common scale. Market risk is the largest component on average, accounting for about two fifths of total risk intensity and more than half for energy commodities. Risk intensity is also highly concentrated across individual commodities: the top 20\% account for approximately half of micro and market risk intensity, whereas macro risk is more broadly dispersed. The composition of risk varies substantially across sectors and over time, with market risk becoming particularly prominent during episodes of commodity-market stress. Micro and market RIIs also contain information about future volatility and absolute returns. These findings provide investors, risk managers, and policymakers with a diagnostic of where commodity risk is concentrated, which risk layers are most important, and how their importance changes over time. More broadly, our divide-and-conquer framework provides a flexible approach to decomposing layered risk in settings where common risk is latent.
\end{abstract}

\noindent \textbf{Keywords:} Commodity risk, risk intensity, heterogeneous sensitivities, latent common factors, high-dimensional variable selection

\noindent \textbf{JEL Classification:} C23, C38, C55, C58, G13

{\small
\noindent\textit{Acknowledgements:} The authors would like to thank T. Diasakos, T. Panagiotidis, D. Robertson, C. Savva, A. Urquhart, as well as participants at the 31st IPDC (Exeter, 6--7 July 2026), 29th IPDC (Orléans, 4--5 July 2024), and the Macro-Finance Cluster Seminar of the University of York (22 October 2024) for useful comments and suggestions. Aslanidis acknowledges financial support by the Spanish Government, Ministry of Science and Innovation under Project reference PID2022-137382NB-I00.
}

\doublespacing
\section{Introduction \label{sec:intro}}

Commodity markets occupy a distinctive position in the global economy, linking financial markets to the production, consumption, storage, and trade of essential inputs ranging from energy and metals to agricultural goods. Commodity prices are influenced by physical supply and demand conditions, inventories, and other commodity-specific fundamentals
\citep{GortonRouwenhorst2006,GortonHayashiRouwenhorst2013}. At the same time, the increasing participation of financial investors has strengthened the integration of commodity futures with broader financial markets and increased comovement across commodities \citep{TangXiong2012,BasakPavlova2016,Baker2021}. In particular, index investment can generate common movements across commodities that are not fully explained by commodity-specific fundamentals, while recent evidence suggests that index trading can also facilitate the transmission of nonfundamental shocks across commodities included in an index
\citep{DaTangTaoYang2024}. Commodity returns therefore reflect both highly commodity-specific conditions and forces operating more broadly across commodity and financial markets.

Taken together, these considerations suggest that commodity risk has an inherently layered structure. At the micro level, each commodity is exposed to idiosyncratic risks arising from its own price volatility, liquidity, trading activity, inventories, and supply-demand fundamentals. At the market level, commodities are exposed to common movements generated by developments elsewhere in commodity and financial markets, including market returns, volatility, and speculative activity. The importance of such common sources is supported by evidence that common commodity factors help explain the cross-sectional variation in commodity returns \citep{BakshiGaoRossi2019}, as well as by the broader literature on commodity-market financialisation \citep{TangXiong2012,BasakPavlova2016,Baker2021,DaTangTaoYang2024}. In particular, financialisation can increase comovement across commodities and strengthen the transmission of financial shocks, while index trading can propagate nonfundamental shocks across commodities within the same index. Beyond these two layers, commodities are exposed to economy-wide conditions, including global economic activity, monetary and financial conditions, inflation, exchange rates, and economic uncertainty, which can affect many commodity prices simultaneously. Commodity risk can therefore be viewed as arising from three conceptually distinct layers: \emph{micro}, \emph{market}, and \emph{economy-wide} sources.

Identifying sensitivity to different risk sources is informative, but sensitivities alone do not reveal their relative importance to a commodity's overall risk. A large sensitivity to a specific risk source may yield minimal risk intensity if the underlying variable rarely moves. Conversely, a moderate sensitivity to a highly volatile variable can be far more consequential. Furthermore, a collection of disparate sensitivity coefficients lacks a common metric to compare risk contributions across commodities or track how the composition of risk evolves over time. We address this limitation by constructing \emph{Risk Intensity Indices} (RIIs) that combine heterogeneous sensitivities with prevailing risk conditions and measure the magnitude of their contribution on a common scale. Separate indices are constructed for idiosyncratic, market, and economy-wide sources. The RIIs therefore shift the focus from asking simply \emph{how sensitive is a commodity to a particular source of risk?} to asking \emph{where does commodity risk come from, how important are its different sources, and how does its anatomy differ across commodities and over time?} In this sense, the indices provide a parsimonious mapping from a potentially large and heterogeneous set of risk sources into an economically interpretable decomposition of commodity risk intensity.

Our approach differs from existing commodity factor and risk-premium decompositions in both its objective and its unit of analysis. \citet{SzymanowskaDeRoonNijmanVanDenGoorbergh2014} decompose commodity futures risk premia into spot and term premia and study the factors explaining their cross-sectional variation, while \citet{BakshiGaoRossi2019} show that average commodity returns, carry, and momentum describe the cross section of commodity returns and investigate the economic sources underlying these factors. Related work examines how financialisation changes commodity prices, risk premia, volatility, and return dependence \citep{Baker2021,DaTangTaoYang2024}. Our objective is different. Rather than explaining the cross section of expected returns or decomposing risk premia, we identify heterogeneous sensitivities of realised commodity returns to micro, market-level, and economy-wide conditions and use these sensitivities, together with prevailing risk conditions, to measure the intensity of each risk source. The model does not impose an asset-pricing restriction linking these sensitivities to expected returns and does not
estimate the price of any source of risk. Whether the exposures identified here are priced in the cross section of expected commodity returns is a distinct question from the one studied in this paper. Our analysis instead uses the resulting risk intensities to characterise the anatomy of commodity risk across individual commodities, sectors, and time.

Building on this distinction, our contribution is fourfold. First, we develop a two-stage ``divide-and-conquer''estimation framework that enables the consistent identification of heterogeneous sensitivities across all three risk layers in the presence of latent macro risk. Second, we map these sensitivities into Risk Intensity Indices (RIIs), which provide comparable measures of risk intensity at the micro, market, and economy-wide levels. Third, we employ the resulting indices to characterize the anatomy of commodity risk. Specifically, we quantify the relative importance of the three sources of risk for the commodity market as a whole, examine differences in their composition across agricultural, energy, and metal commodities as well as across individual commodities, and trace their evolution over time by relating changes in the structure of risk to major economic and financial episodes. Finally, we establish the forward-looking predictive relevance of the RIIs and find that micro- and market-level risk intensity predict future volatility and absolute returns over horizons of up to twelve periods.

A central methodological challenge arises from the inherently latent nature of macro risk. The set of macrofinancial indicators that may proxy for this unobservable risk factor is typically high-dimensional, heterogeneous, and highly collinear, with limited guidance on their empirical relevance. Consequently, the selection of informative proxies entails a fundamental trade-off between omitted-variable bias, which may arise from excluding relevant information, and overfitting, which may result from incorporating an excessive number of noisy and highly correlated indicators.

To address this challenge, we develop a sequential ``divide-and-conquer'' estimation framework that partitions risk into observed micro and market components and latent macro risk. In Stage 1, heterogeneous sensitivities to the observed micro- and market-level risk components are estimated using commodity-specific instrumental-variable (IV) regressions. IV identification exploits a latent common-factor structure in the commodity-specific risk characteristics that is sufficiently rich to span the latent macro component responsible for endogeneity. Despite the endogeneity induced by the latent component, the resulting IV estimator delivers consistent estimates of heterogeneous sensitivities to the observed risk layers. In Stage 2, PCA is applied to the residuals across commodities to recover the dominant directions of common macro risk. These principal components are mapped to a large set of candidate macro-financial proxies, from which relevant predictors are selected using the Boosting with Multiple Testing (BMT) procedure of \citet{KapetaniosEtal2026}. This yields a sparse and observable representation of macro risk while mitigating overfitting. Commodity-specific macro sensitivities are then obtained by regressing each commodity's residual component on the selected macro risk proxies.

From a methodological perspective, the paper contributes along several dimensions. First, before extracting the latent factors, we partial out the observed sector-return space from the commodity-specific risk characteristics. This ensures that factor extraction targets latent common variation that is incremental to observed market-return movements, thereby separating the two sources of common variation. Second, in contrast to standard factor-augmented models, which approximate the macro component using a small and fixed number of latent factors (e.g., \citet{Bai2009,JuodisSarafidis2018,NorkuteEtal2021,JuodisSarafidis2022b}), we search over a high-dimensional set of observable macro-financial indicators, while allowing only a sparse subset to be relevant. Our candidate set comprises 36 indicators spanning uncertainty, sentiment, market attention, and global financial conditions, assembled from a broad range of studies.\footnote{To the best of our knowledge this is
the first study to employ such a diverse set of indicators within a single framework.} This yields an observable representation of macro risk without requiring the relevant indicators to be specified a priori. Third, to our knowledge, integrating PCA with high-dimensional selection in this way is new to the econometric and broader statistical literature: rather than applying boosting directly to the commodity-specific residuals, it is applied to principal components extracted by PCA. Selection therefore targets the common residual variation and produces a common set of observable macro-financial proxies. We examine the finite-sample performance of this procedure through a Monte Carlo study reported in the Appendix, which shows that the proposed approach performs well in recovering the relevant macro-financial proxies and the associated sensitivities. Fourth, we derive a commodity-specific over-identifying restrictions test for the defactored moment conditions, which provides the basis for a panel-level diagnostic obtained by combining individual test statistics across commodities. Finally, with respect to the RIIs, we formally establish conditions under which the three risk layers are non-reducible: in general, micro, market, and economy-wide risk intensity cannot be collapsed into a single aggregate risk component without losing information about the underlying structure of commodity risk.

The empirical analysis reveals a clear anatomy of commodity risk. For the full sample, market risk is the dominant source, accounting for 42.1\% of total risk intensity, followed by economy-wide and micro risk at 29.7\% and 28.2\%, respectively. The composition of risk nevertheless varies substantially across commodity sectors. Market risk is particularly important for energy commodities, where it accounts for 52.4\% of total risk intensity, compared with 35.4\% for agricultural commodities and 38.1\% for metals. Economy-wide risk, by contrast, is considerably more prominent in agriculture and metals than in energy. The time-varying RIIs also exhibit pronounced increases around major episodes of commodity-market stress, including the COVID-19 pandemic and associated oil-market collapse in 2020, the global energy-market disruptions of late 2021, Russia’s invasion of Ukraine in 2022, the escalation of global trade tensions in April 2025, and the renewed commodity-market turmoil in early 2026. Finally, the RIIs contain forward-looking information about subsequent commodity risk: micro- and market-level risk intensity significantly predict future volatility and absolute returns over horizons of up to twelve periods.

The distribution of risk intensity across commodities reveals substantial differences in concentration across the three risk layers. Micro and market-level risk are relatively concentrated among a subset of commodities: the top 20\% account for approximately half of total intensity in both cases, and the top 10\% account for around one-third. Economy-wide risk is more dispersed, with the corresponding shares falling to 40.6\% and 26.6\%, respectively. Aggregate risk intensity is more evenly distributed still, indicating that concentration within individual risk layers does not translate mechanically into concentration of overall commodity risk.

These findings have implications for both market participants and policymakers. Commodity risk reflects the interaction of commodity-specific conditions, common market forces, and broader macro-financial conditions, with their relative importance varying substantially across commodities and over time. For investors and risk managers, the RIIs provide information on both the concentration of risk across commodities and the sources underlying prevailing risk conditions. Their ability to predict future volatility and absolute returns further gives the indices a forward-looking dimension, suggesting their potential usefulness for monitoring emerging risk conditions and supporting portfolio and risk-management decisions. For policymakers, distinguishing market-level from economy-wide risk can help assess whether periods of elevated commodity risk are primarily associated with conditions within commodity markets or with broader economic and financial developments. More broadly, the “divide-and-conquer” framework can be applied to other settings in which risk reflects multiple sources and a common economy-wide component is latent, particularly when a large set of observable macro-financial indicators is available to proxy the underlying latent component.

\section{Model and Measurement of Layered Risk \label{sec:specification}}
\subsection{Model Specification and Risk Measurement}

We consider the following heterogeneous panel data model of commodity log-returns:
\begin{equation}
r_{i,t}
=
\rho_{i} r_{-i,t} + \boldsymbol{\alpha}_{i}^{\prime}\mathbf{x}_{i,t}
+
\boldsymbol{\beta}_{i}^{\prime}\mathbf{y}_{i,t}
+
\boldsymbol{\gamma}_{i}^{\prime}\mathbf{g}_{t}
+
\varepsilon_{i,t},
\quad
i=1,\dots,N; \quad t=1,\dots,T,
\label{model}
\end{equation}
where $r_{i,t}$ denotes the log-return of commodity $i$ in week $t$.
Let $\mathcal{S}_i$ denote the set of commodities belonging to the same sector as commodity $i$, and let $N_i=|\mathcal{S}_i|$, the cardinality of the $i$th sector. We define the leave-one-out sector return as
\begin{equation}
r_{-i,t}
=
\frac{1}{N_i-1}
\sum_{\substack{j\in\mathcal{S}_i\\j\neq i}}
r_{j,t}.
\label{eq:loo_return}
\end{equation}
Thus, $r_{-i,t}$ captures contemporaneous return movements elsewhere in commodity $i$'s sector while excluding commodity $i$ itself, and $\rho_i$ measures commodity $i$'s heterogeneous sensitivity to these market-level return movements.

The vector
\[
\mathbf{x}_{i,t}
=
(\mathrm{VLT}_{i,t},\mathrm{SKW}_{i,t},
\mathrm{KTS}_{i,t},\mathrm{OI}_{i,t})'
\in\mathbb{R}^{K_x}
\]
collects time-varying commodity-specific risk characteristics: volatility, realised skewness, realised kurtosis, and logged open interest (see Section~\ref{sec:data} for more details). These variables capture distinct dimensions of commodity-specific conditions. Volatility measures the dispersion of returns, while skewness and kurtosis capture asymmetry and tail behaviour, respectively. Open interest captures the scale of outstanding futures positions and provides information about market participation and trading activity specific to each commodity.

The vector
\[
\mathbf{y}_{i,t}
=
(\mathrm{CVLT}_{i,t},\mathrm{CSKW}_{i,t},
\mathrm{CKTS}_{i,t},\mathrm{COI}_{i,t})'
\in\mathbb{R}^{K_y}
\]
collects the leave-one-out sector-level counterparts of the commodity-specific risk characteristics in $\mathbf{x}_{i,t}$. Specifically, $\mathrm{CVLT}_{i,t}$, $\mathrm{CSKW}_{i,t}$, $\mathrm{CKTS}_{i,t}$, and $\mathrm{COI}_{i,t}$ measure volatility, realised skewness, realised kurtosis, and open interest, respectively, among the remaining commodities in commodity $i$'s sector. Each element is constructed as a leave-one-out average, so that $\mathbf{y}_{i,t}$ varies across both $i$ and $t$. The leave-one-out construction captures risk conditions prevailing elsewhere in the same sector while excluding commodity $i$'s own contribution. It therefore distinguishes commodity-specific risk conditions, $\mathbf{x}_{i,t}$, from corresponding sector-level risk conditions, $\mathbf{y}_{i,t}$.

Finally, $\mathbf{g}_{t}\in\mathbb{R}^{K_g}$ represents broader macro-financial risk, including uncertainty, sentiment, market attention, and global financial conditions. Unlike the micro and market layers, the relevant macro-financial variables are not specified a priori and are treated as latent at the estimation stage. Section~\ref{sec:Estimation} describes how an observable sparse representation of this layer is recovered from a
high-dimensional set of candidate macro-financial indicators. The disturbance $\varepsilon_{i,t}$ captures the remaining commodity-specific variation.

The slope coefficients $\boldsymbol{\alpha}_{i}$, $\boldsymbol{\beta}_{i}$, and $\boldsymbol{\gamma}_{i}$ are commodity-specific and capture heterogeneous sensitivities to the three risk layers. The coefficients $\boldsymbol{\alpha}_{i}$ measure sensitivity to commodity-specific risk characteristics, while $\boldsymbol{\beta}_{i}$ capture sensitivity to conditions
prevailing elsewhere in the same sector, which we refer to as market risk. The coefficients $\boldsymbol{\gamma}_{i}$ measure sensitivity to pervasive economy-wide macro-financial conditions. Throughout the paper, we refer to the three sources as the micro, market, and macro risk layers, respectively. Importantly, these sensitivities should be distinguished from prices of risk. They are estimated from conditional equations for realised commodity returns, whereas a risk-pricing interpretation would require an additional asset-pricing restriction linking exposures to expected returns and the estimation of the associated risk premia. Neither is imposed here.

Although the micro, market, and macro risk layers may be correlated, the coefficients represent partial effects conditional on the remaining sources of risk. This distinction is important because commodity-specific conditions, common movements elsewhere in the commodity market, and broader macro-financial conditions may coexist and need not affect individual commodities in the same way. The decomposition is therefore not intended to imply that the three layers originate from mutually exclusive primitive shocks; rather, it distinguishes their separate contributions to commodity returns conditional on the remaining risk sources.

Heterogeneous sensitivities alone do not reveal the relative importance of the different sources of commodity risk. We therefore map the estimated sensitivities into Risk Intensity Indices (RIIs). Throughout, $\widetilde{x}_{k,i,t}$, $\widetilde{y}_{k,i,t}$, and $\widetilde{g}_{k,t}$ denote standardized variables with zero mean and unit variance. Commodity-specific risk intensity is defined as the time-average
magnitude of the sensitivity-weighted impact associated with each risk layer:
\begin{equation}
\mathrm{RII}^{\mathrm{micro}}_{i}
=
\frac{1}{T}\sum_{t=1}^{T}
\left|
\sum_{k=1}^{K_x}
\alpha_{k,i}\widetilde{x}_{k,i,t}
\right|;
\quad
\mathrm{RII}^{\mathrm{market}}_{i}
=
\frac{1}{T}\sum_{t=1}^{T}
\left|
\sum_{k=1}^{K_y}
\beta_{k,i}\widetilde{y}_{k,i,t}
\right|;
\quad
\mathrm{RII}^{\mathrm{macro}}_{i}
=
\frac{1}{T}\sum_{t=1}^{T}
\left|
\sum_{k=1}^{K_g}
\gamma_{k,i}\widetilde{g}_{k,t}
\right|.
\label{eq:RII_i_all}
\end{equation}

Total commodity-level risk intensity is defined as
\begin{equation}
\mathrm{RII}^{\mathrm{total}}_{i}
=
\mathrm{RII}^{\mathrm{micro}}_{i}
+
\mathrm{RII}^{\mathrm{market}}_{i}
+
\mathrm{RII}^{\mathrm{macro}}_{i}.
\label{eq:RII_i_total}
\end{equation}

The RIIs combine the magnitude of estimated sensitivities with prevailing risk conditions. Thus, a large sensitivity need not imply high risk intensity when the corresponding risk condition is close to its typical level, while a more moderate sensitivity can be consequential when the associated risk condition is unusually pronounced. Standardisation places the variables on a common scale, allowing risk intensity to be compared across variables and layers.\footnote{Sensitivities are estimated using variables in their original economic units, preserving their economic interpretation. The RIIs are constructed using standardised variables to remove differences in measurement units and unconditional variability across risk measures.}

Absolute values are used in Eq. \eqref{eq:RII_i_all} because the RIIs measure the intensity rather than the direction of the sensitivity-weighted impact. The sign of an estimated coefficient determines the direction in which commodity returns are associated with a given risk variable, whereas the RII is concerned with the magnitude of that association under prevailing conditions. For example, our estimates indicate that, on average, commodity returns are positively associated with commodity-specific skewness, which captures asymmetry in the commodity's own return distribution, but negatively to market skewness, which captures asymmetry in returns elsewhere in the commodity's sector. These sensitivities operate in opposite directions, but both can represent important sources of risk intensity when the corresponding risk conditions are pronounced. Importantly, the absolute value is applied only after aggregating across variables within each risk layer. This allows individual variables within the same layer to reinforce or offset one another and ensures that the index measures the magnitude of the net sensitivity-weighted impact of that layer.

To examine how the anatomy of commodity risk evolves over time, we construct
cross-sectional average risk intensities for each layer:
\begin{align}
\mathrm{RII}^{\mathrm{micro}}_{t}
&=
\frac{1}{N}\sum_{i=1}^{N}
\left|
\sum_{k=1}^{K_x}
\alpha_{k,i}\widetilde{x}_{k,i,t}
\right|,
\nonumber\\
\mathrm{RII}^{\mathrm{market}}_{t}
&=
\frac{1}{N}\sum_{i=1}^{N}
\left|
\sum_{k=1}^{K_y}
\beta_{k,i}\widetilde{y}_{k,i,t}
\right|,
\nonumber\\
\mathrm{RII}^{\mathrm{macro}}_{t}
&=
\frac{1}{N}\sum_{i=1}^{N}
\left|
\sum_{k=1}^{K_g}
\gamma_{k,i}\widetilde{g}_{k,t}
\right|.
\label{eq:RII_t_layers}
\end{align}

The corresponding total risk intensity is
\begin{equation}
\mathrm{RII}^{\mathrm{total}}_{t}
=
\mathrm{RII}^{\mathrm{micro}}_{t}
+
\mathrm{RII}^{\mathrm{market}}_{t}
+
\mathrm{RII}^{\mathrm{macro}}_{t}.
\label{eq:RII_t_total}
\end{equation}

\subsection{Conceptual Framework and Motivation}

The specification in Eq.~\eqref{model} distinguishes three sources of commodity risk: conditions specific to commodity $i$, conditions prevailing elsewhere in the commodity market, and broader macro-financial conditions. The distinction between the first two layers deserves particular attention because the market variables are constructed as leave-one-out aggregates of
the corresponding commodity-specific characteristics. Although this creates an explicit link between the micro and market layers, it does not make them equivalent. The micro layer captures the conditions of commodity $i$ itself, whereas the market layer captures its sensitivity to conditions prevailing among the remaining commodities.

More generally, correlation across the three risk layers does not imply that one can be recovered from another. Aggregating commodity-specific risk characteristics does not reproduce the sensitivity-weighted market contribution for a given commodity, while correlation between market and macro-financial conditions does not make their respective contributions equivalent. The
following theorem formalises this non-reducibility and establishes the conditions under which such equivalence could arise.

\begin{theorem}[Non-reducibility of layered risk]
\label{theorem:non_reducibility}
For expositional simplicity, consider one scalar characteristic from each risk layer and define the model-implied layered contribution for commodity $i$ as $\ell_{i,t}=\alpha_i x_{i,t}+\beta_i y_{i,t}+\gamma_i g_t$. Let $\mathcal S_i$ denote the sector containing commodity $i$, with $N_i=|\mathcal S_i|$, and suppose that the market  characteristic is constructed as the leave-one-out sector average $y_{i,t}=(N_i-1)^{-1}\sum_{j\in\mathcal S_i,j\neq i}x_{j,t}$. Suppose further that market and macro conditions are related according to $y_{i,t}=\lambda_i g_t+\eta_{i,t}$, where $\lambda_i\neq0$ and $g_t$ is not identically zero. Then:

\begin{enumerate}
\renewcommand{\labelenumi}{(\roman{enumi})}

\item The aggregated micro contribution of the remaining commodities in sector $\mathcal S_i$ and the market contribution are distinct unless
\[
\sum_{\substack{j\in\mathcal S_i\\j\neq i}}
\left(\alpha_j-\frac{\beta_i}{N_i-1}\right)x_{j,t}=0
\qquad\text{for every }t.
\]
A sufficient condition for the two contributions to coincide is
$\alpha_j=\beta_i/(N_i-1)$ for $j\in\mathcal S_i$, $j\neq i$.
This condition is also necessary if the sample second-moment matrix of the characteristics $\{x_{j,t}:j\in\mathcal S_i,\ j\neq i\}$ is nonsingular, or if equality is required as an identity for every possible realization of these characteristics.

\item The macro contribution $\gamma_i g_t$ and the market contribution $\beta_i y_{i,t}$ are distinct unless
\[
\beta_i\eta_{i,t}=(\gamma_i-\beta_i\lambda_i)g_t
\qquad\text{for every }t.
\]
Thus, when $\beta_i\neq0$, equality between the two contributions requires $\eta_{i,t}=(\gamma_i/\beta_i-\lambda_i)g_t$ for every $t$. In other words, the market-specific component $\eta_{i,t}$ must be exactly proportional to the macro factor $g_t$, with a proportionality coefficient
determined by the market and macro sensitivities. If $g_t$ and $\eta_{i,t}$ are linearly independent over the sample, this condition cannot hold for $\beta_i\neq0$. In that case, equality between the two contributions is possible only in the trivial case $\beta_i=\gamma_i=0$.

\end{enumerate}
\end{theorem}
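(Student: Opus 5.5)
The plan is to treat both parts as elementary algebraic identities. In each case I will write the difference between the two contributions explicitly and ask when it vanishes identically in $t$. No earlier result beyond the definitions in the statement is needed. The only real care is in the necessity claims, where the rank (nonsingularity) or linear-independence assumption does the work.

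\emph{Part (i).} The aggregated micro contribution of the other commodities in the sector is $\sum_{j\in\mathcal S_i,j\neq i}\alpha_j x_{j,t}$. Substituting the leave-one-out definition of $y_{i,t}$, the market contribution is $\beta_i y_{i,t}=\sum_{j\neq i}\frac{\beta_i}{N_i-1}x_{j,t}$. Subtracting the two gives
\[
\sum_{\substack{j\in\mathcal S_i\\j\neq i}}\Bigl(\alpha_j-\tfrac{\beta_i}{N_i-1}\Bigr)x_{j,t}.
\]
The two contributions coincide for every $t$ if and only if this sum is zero for every $t$, which is the displayed condition. Sufficiency of $\alpha_j=\beta_i/(N_i-1)$ is immediate.

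For necessity, let $c$ be the $(N_i-1)$-vector with entries $c_j=\alpha_j-\beta_i/(N_i-1)$, and let $X$ be the $T\times(N_i-1)$ matrix of the characteristics $x_{j,t}$. The condition reads $Xc=0$. Premultiplying by $T^{-1}X'$ gives $(T^{-1}X'X)c=0$, so nonsingularity of the sample second-moment matrix forces $c=0$. Alternatively, if equality must hold as an identity for every possible realization, choose the realization $x_{j,t}=1$ for one $j$ and $x_{k,t}=0$ for all $k\neq j$. This yields $c_j=0$ for each $j$ in turn.

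\emph{Part (ii).} Substituting $y_{i,t}=\lambda_i g_t+\eta_{i,t}$ gives
\[
\gamma_i g_t-\beta_i y_{i,t}=(\gamma_i-\beta_i\lambda_i)g_t-\beta_i\eta_{i,t}.
\]
Equality of the two contributions for every $t$ is therefore exactly $\beta_i\eta_{i,t}=(\gamma_i-\beta_i\lambda_i)g_t$. When $\beta_i\neq0$, dividing by $\beta_i$ gives the stated proportionality $\eta_{i,t}=(\gamma_i/\beta_i-\lambda_i)g_t$.

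Now suppose $g_t$ and $\eta_{i,t}$ are linearly independent as vectors over $t=1,\dots,T$. Proportionality would give a nontrivial linear combination, with coefficient $1$ on $\eta$, that vanishes, which is a contradiction. So equality is impossible when $\beta_i\neq0$. If instead $\beta_i=0$, the equality condition reduces to $\gamma_i g_t=0$ for all $t$. Since $g_t$ is not identically zero, this forces $\gamma_i=0$, which is the trivial case $\beta_i=\gamma_i=0$.

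\emph{Main obstacle.} There is no substantive difficulty; the argument is direct. The only delicate step is stating the necessity in (i) correctly. Vanishing of a linear combination over the sample implies zero coefficients only under full column rank of $X$, which is what nonsingularity of the sample second-moment matrix delivers. Without that assumption one must appeal to the "every possible realization" version.
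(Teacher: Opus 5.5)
Your proposal is correct and follows the paper's proof essentially step for step: the same difference expressions in (i) and (ii), the same unit-realization argument for the ``every possible realization'' version, and the same case split on $\beta_i$ together with the fact that $g_t$ is not identically zero. The only cosmetic difference is in the necessity step of (i). You deduce $(T^{-1}X'X)c=0$ from $Xc=0$ and then use nonsingularity. The paper instead uses the quadratic form $d'\mathbf Q d = T^{-1}\sum_t (d'x_t)^2 = 0$ and concludes $d=0$ from nonsingularity of the positive semidefinite matrix $\mathbf Q$. Both routes reach the same conclusion.
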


\begin{proof}
See Online~\ref{reg_conditions_lemmas}.
\end{proof}

The first statement of Theorem~\ref{theorem:non_reducibility} shows that aggregation of micro contributions does not, in general, recover the market contribution. Exact equality requires the coefficient discrepancies between the two contributions to generate a linear combination of the underlying commodity characteristics that vanishes at every sample date. A sufficient condition is $\alpha_j=\beta_i/(N_i-1)$ for all $j\in\mathcal S_i$, $j\neq i$, which is also necessary when the corresponding sample second-moment matrix is nonsingular. Thus, absent exact linear dependence among the underlying characteristics, equality requires the micro sensitivities to satisfy a restrictive proportionality condition determined by commodity $i$'s market sensitivity and the number of commodities in its sector.

The second statement establishes an analogous result for market and macro risk. Given $y_{i,t}=\lambda_i g_t+\eta_{i,t}$, equality between the market
and macro contributions requires the market-specific component
$\eta_{i,t}$ to be exactly proportional to the macro factor $g_t$, with the
proportionality coefficient determined by the market and macro sensitivities.
Thus, the presence of market-specific variation does not by itself establish
non-reducibility; what matters is whether that variation contains a component
that is linearly independent of the macro factor. In particular, if $g_t$
and $\eta_{i,t}$ are linearly independent, the market and macro contributions
cannot coincide except in the trivial case $\beta_i=\gamma_i=0$.

\begin{remark}
Theorem~\ref{theorem:non_reducibility} characterizes non-reducibility in terms of the restrictions required for distinct risk layers to generate identical contributions. Correlation across layers is not sufficient for one layer to subsume another: exact reducibility requires specific restrictions on either the sensitivities and aggregation structure or the dependence between the underlying risk drivers.
\end{remark}

If the macro-risk variables $\mathbf{g}_t$ were directly observed, the model in Eq.~\eqref{model} could, under standard exogeneity conditions, be estimated commodity by commodity using least squares. In practice, however, $\mathbf{g}_t$ is latent. Because latent macro risk may be correlated with the observed micro- and market-level variables, omitting it induces endogeneity and renders ordinary least squares estimation inconsistent.

Since the macro-risk variables are unobserved, they need to be proxied by observable macro-financial indicators. The difficulty is that the set of potentially relevant indicators is large, diverse, and highly collinear, with limited guidance as to which indicators are empirically relevant.\footnote{Macro risk itself is not assumed to be high-dimensional. High dimensionality arises from the set of observable macro-financial indicators considered as potential proxies for the latent macro-risk process.} Uncertainty, sentiment, and financial conditions, for example, can each be represented by several competing measures, often available across different dimensions and horizons.\footnote{A case in point is ``economic uncertainty'', which can be proxied by several alternative indices \citep{BakerBloomDavis2016,CaldaraIacoviello2022,AhirBloomFurceri2022,JuradoLudvigsonNg2015}, often available at different forecast horizons and levels of geographic aggregation.} Identifying the informative subset therefore involves a trade-off between omitted-variable bias, if relevant indicators are excluded, and overfitting, if too many noisy and collinear proxies are  retained.\footnote{This issue parallels the broader problem of variable proliferation in empirical finance, where a growing set of candidate predictors complicates model selection and increases the risk of overfitting and spurious inference \citep{Cochrane2011,HarveyLiuZhu2016}.}

Our ``divide-and-conquer'' procedure addresses these challenges sequentially. In Stage~1, heterogeneous sensitivities to observed micro- and market-level risk are estimated using commodity-specific instrumental-variable regressions. IV identification exploits a latent common-factor structure in the commodity-specific risk characteristics that is sufficiently rich to span the latent macro component responsible for endogeneity. Instruments are constructed from variation in the observed risk characteristics after removing this latent common component. Despite the endogeneity induced by the latent macro component, the resulting estimator consistently recovers the heterogeneous sensitivities to the observed risk layers.

In Stage~2, we extract the $r$ leading principal components from the Stage~1 residuals, where $r$ is determined from the data using eigenvalue-based procedures or information criteria
\citep{AhnHorenstein2013,BaiNg2002}. These components span the dominant directions of common residual variation. We then map each component to a high-dimensional set of observable macro-financial indicators using the Boosting with Multiple Testing procedure of \citet{KapetaniosEtal2026}. Collecting the indicators selected across components yields a sparse set of observable proxies for macro risk.\footnote{Online~\ref{sec:Appendix_MC} examines the finite-sample performance of the PCA--BMT procedure.}

The commodity-level RIIs in Eqs.~\eqref{eq:RII_i_all}--\eqref{eq:RII_i_total} characterise the cross-sectional distribution of risk intensity and allow us to assess its concentration across individual commodities. The time-varying
indices in Eqs.~\eqref{eq:RII_t_layers}--\eqref{eq:RII_t_total} instead track the evolution and changing composition of commodity risk over time. Together, these measures show where risk intensity is concentrated across commodities and how the relative importance of its micro, market, and macro sources evolves over time.

It is useful at this point to distinguish the RIIs from measures used to quantify downside financial risk, such as Value-at-Risk (VaR) and Expected Shortfall (ES). VaR and ES characterise the tail of the return distribution at a specified horizon and confidence level, addressing the magnitude of potential losses under adverse outcomes. The RIIs address a different question: what is the relative importance of the micro, market, and macro sources? Thus, while VaR and ES characterise the distribution and severity of downside losses, the RIIs characterise the relative intensity and composition of the different sources of commodity risk.

\section{Divide-and-Conquer Estimation Framework \label{sec:Estimation}}

This section provides the analytical formulation of the ``divide-and-conquer'' estimation framework for model~\eqref{model}. We begin by making explicit that the macro-risk component is latent:
\begin{equation}
r_{i,t}
=\rho_{i} r_{-i,t} +
\boldsymbol{\alpha}_{i}^{\prime}\mathbf{x}_{i,t}
+
\boldsymbol{\beta}_{i}^{\prime}\mathbf{y}_{i,t}
+
u_{i,t},
\qquad
u_{i,t}
=
\boldsymbol{\gamma}_{i}^{\prime}\mathbf{g}_{t}
+
\varepsilon_{i,t},
\quad
i=1,\dots,N;\quad t=1,\dots,T.
\label{model_with_u}
\end{equation}
The macro-risk vector $\mathbf g_t$ is initially unobserved and is therefore absorbed into the composite error $u_{i,t}$.
Stacking the $T$ observations for commodity $i$ gives
\begin{equation}
\mathbf r_i
=\rho_{i} \mathbf{r}_{-i} +
\mathbf X_i\boldsymbol{\alpha}_i
+
\mathbf Y_i\boldsymbol{\beta}_i
+
\mathbf u_i,
\qquad
\mathbf u_i
=
\mathbf G\boldsymbol{\gamma}_i
+
\boldsymbol{\varepsilon}_i,
\label{model_vectori}
\end{equation}
where $\mathbf r_i=(r_{i,1},\ldots,r_{i,T})'$,
$\mathbf X_i=(\mathbf x_{i,1},\ldots,\mathbf x_{i,T})'$
is $T\times K_x$, $\mathbf Y_i=(\mathbf y_{i,1},\ldots,\mathbf y_{i,T})'$ is $T\times K_y$, $\mathbf G=(\mathbf g_1,\ldots,\mathbf g_T)'$, and $\boldsymbol{\varepsilon}_i
=(\varepsilon_{i,1},\ldots,\varepsilon_{i,T})'$.

A first challenge in estimating Eq.~\eqref{model_vectori} is that the commodity-specific and market-level risk characteristics may be correlated with the latent macro-financial conditions that also enter the commodity
return equation, $\mathbf G\boldsymbol{\gamma}_i$, giving rise to endogeneity.
A separate issue is that the commodity-specific characteristics may co-move with the observed leave-one-out sector return, $r_{-i,t}$. We distinguish this observed source of common variation from the latent common variation in the commodity-specific characteristics through the representation
\begin{equation}
\mathbf{x}_{i,t}
=
\boldsymbol{\phi}_{i} r_{-i,t}
+
\boldsymbol{\Lambda}_{i}\mathbf f_t
+
\mathbf v_{i,t},
\label{x_factor}
\end{equation}
where $\mathbf f_t\in\mathbb R^{K_f}$ denotes a vector of latent common
factors, $\boldsymbol{\phi}_i$ and $\boldsymbol{\Lambda}_i$ contain the
corresponding commodity-specific loadings, and $\mathbf v_{i,t}$ is
idiosyncratic. The first term captures common variation in the
commodity-specific characteristics associated with observed sector-return
movements, while the second captures latent common variation that is
incremental to the observed sector-return space. The precise identifying
separation between these two components is stated below.

Note that because the market-level risk variables $\mathbf{y}_{i,t}$ are constructed as leave-one-out averages of the corresponding commodity-specific characteristics, they inherit both the sector-return-related and latent common variation present in the cross-section of $\mathbf{x}_{j,t}$. In particular, the latent factor structure represented by $\mathbf{f}_t$ is also embedded in $\mathbf{y}_{i,t}$. Hence, $\mathbf{y}_{i,t}$ does not provide an independent source from which the latent factor space needs to be estimated.

For compactness, define
\begin{equation}
\mathbf C_i
=
(\mathbf r_{-i},\mathbf X_i,\mathbf Y_i),
\qquad
\boldsymbol{\theta}_i
=
(\rho_i,\boldsymbol{\alpha}_i',
\boldsymbol{\beta}_i')',
\end{equation}
so that
\begin{align}
\mathbf r_i
&=
\mathbf C_i\boldsymbol{\theta}_i+\mathbf u_i,
\label{compact}\\
\mathbf X_i
&=
\mathbf r_{-i}\boldsymbol{\phi}_i'
+
\mathbf F^0\boldsymbol{\Lambda}_i'
+
\mathbf V_i,
\label{pca_vectori}
\end{align}
where
$\mathbf F^0=(\mathbf f_1,\ldots,\mathbf f_T)'$
denotes the $T\times K_f$ population factor matrix.

Our two-stage estimation proceeds as follows.

\noindent
\textbf{Stage 1: Micro and Market Risk}

We first partial out the observed sector-return variation from the commodity-specific risk characteristics. Let $\mathcal S_1,\ldots,\mathcal S_J$ denote the $J$ commodity sectors, with $N_s=|\mathcal S_s|$ and $\sum_{s=1}^J N_s=N$, and define the sector-average return
$$
\bar{\mathbf r}_s
=
\frac{1}{N_s}
\sum_{j\in\mathcal S_s}\mathbf r_j,
\qquad s=1,\ldots,J.
$$
Collect the sector-average returns in the $T\times J$ matrix $\mathbf R = \left(
\bar{\mathbf r}_1,\ldots,\bar{\mathbf r}_J
\right)$, and define the associated residual-maker matrix $\mathbf M_{\mathbf R} = \mathbf I_T
- \mathbf R(\mathbf R'\mathbf R)^{-1}\mathbf R'$.
Rather than partialling out a different leave-one-out sector return for each commodity before factor extraction, we project all commodity-specific risk characteristics onto the orthogonal complement of the common space spanned by the observed sector returns. This removes the pervasive observed sector-return component in Eq.~\eqref{pca_vectori} while applying the same transformation to every commodity. In particular, for $i\in\mathcal S_s$,
$$
\mathbf r_{-i}
=
\frac{N_s}{N_s-1}\bar{\mathbf r}_s
-
\frac{1}{N_s-1}\mathbf r_i,
$$
and therefore
$$
\mathbf M_{\mathbf R}\mathbf r_{-i}
=
-\frac{1}{N_s-1}
\mathbf M_{\mathbf R}\mathbf r_i.
$$
Hence, under the regularity conditions stated below,  the appropriately normalised magnitude of the sector-return component remaining after the common projection is of order $N_s^{-1}$ and is asymptotically negligible as the sector size increases.

For lagged characteristics, the projection is defined using the correspondingly lagged sector-return space. Specifically, for $\tau=0,\ldots,\max\{\zeta_x,\zeta_y\}$, let
\[
\mathbf R_{-\tau}
=
\left(
\bar{\mathbf r}_{1,-\tau},\ldots,
\bar{\mathbf r}_{J,-\tau}
\right),
\]
where $\bar{\mathbf r}_{s,-\tau}$ denotes the sector-average return
aligned with $\mathbf X_{i,-\tau}$, and define
\[
\mathbf M_{\mathbf R_{-\tau}}
=
\mathbf I_T
-
\mathbf R_{-\tau}
\left(
\mathbf R_{-\tau}'\mathbf R_{-\tau}
\right)^{-1}
\mathbf R_{-\tau}'.
\]
For $\tau=0$, $\mathbf R_{-0}=\mathbf R$ and
$\mathbf M_{\mathbf R_{-0}}=\mathbf M_{\mathbf R}$.

We define $\mathbf F^0_{-\tau}$ as the latent pervasive component of the commodity-specific risk characteristics that remains after accounting for the observed sector-return space. Accordingly, the decomposition is identified through the normalization
$\mathbf R_{-\tau}'\mathbf F^0_{-\tau}
= \mathbf 0$, $\tau=0,\ldots,\max\{\zeta_x,\zeta_y\}$. This normalization distinguishes the latent common component from the observed sector-return component in the reduced-form representation of the commodity-specific characteristics; it does not imply that the economic forces underlying the two components are mutually independent.

It follows that, for $i\in\mathcal S_s$,
$$
\mathbf M_{\mathbf R_{-\tau}}\mathbf X_{i,-\tau}
=
\mathbf F^0_{-\tau}\boldsymbol{\Lambda}_i'
+
\mathbf M_{\mathbf R_{-\tau}}\mathbf V_{i,-\tau}
+
\mathbf D_{i,-\tau},
$$
where
$$
\mathbf D_{i,-\tau}
=
\mathbf M_{\mathbf R_{-\tau}}
\mathbf r_{-i,-\tau}\boldsymbol{\phi}_i',
$$
and the contribution of $\mathbf D_{i,-\tau}$ is asymptotically negligible as $N_s\rightarrow\infty$.

We therefore estimate the latent factor space separately for each $\tau=0,\ldots,\max\{\zeta_x,\zeta_y\}$ by applying PCA to the projected characteristics. Denote the resulting estimated factor matrices by $\widehat{\mathbf F}_{-\tau}$. Up to rotation, they are obtained from the leading eigenvectors of
\begin{equation}
\frac{1}{NT}
\sum_{i=1}^{N}
\mathbf M_{\mathbf R_{-\tau}}
\mathbf X_{i,-\tau}
\mathbf X_{i,-\tau}'
\mathbf M_{\mathbf R_{-\tau}}.
\label{factor_covariance}
\end{equation}
The number of factors, $K_f$, is determined using standard eigenvalue-based criteria.

\begin{remark}
The projection on the observed sector-return space is required because sector returns themselves constitute a pervasive source of common variation in the commodity-specific risk characteristics. If PCA were applied directly to the unprojected characteristics, the extracted components could combine common variation associated with observed sector returns with the latent
common variation of interest. By first removing the finite-dimensional sector-return space, PCA is instead applied to the remaining common variation. Because the same sector-return space is removed from every commodity at a given lag, the projected characteristics retain a common factor representation and PCA can exploit information from the full cross-section.
\end{remark}

Last, we eliminate the influence of $\widehat{\mathbf F}$ from the entire model in Eq.~\eqref{compact} using the transformation
\begin{equation}
\mathbf M_{\widehat{\mathbf F}}\mathbf r_i
=
\mathbf M_{\widehat{\mathbf F}}\mathbf r_{-i}\rho_i
+
\mathbf M_{\widehat{\mathbf F}}\mathbf X_i\boldsymbol{\alpha}_i
+
\mathbf M_{\widehat{\mathbf F}}\mathbf Y_i\boldsymbol{\beta}_i
+
\mathbf M_{\widehat{\mathbf F}}\mathbf u_i
=
\mathbf M_{\widehat{\mathbf F}}\mathbf C_i\boldsymbol{\theta}_i
+
\mathbf M_{\widehat{\mathbf F}}\mathbf u_i .
\label{projected_model}
\end{equation}
The factor space recovered in Eq.~\eqref{factor_covariance} is statistical rather than inherently macro-financial. Common variation across commodity risk characteristics may reflect, for example, common trading or hedging activity, financialisation, sectoral conditions, as well as broader macro-financial forces. We therefore do not require the latent factor matrix $\mathbf F^0$ to coincide with the macro-risk matrix $\mathbf G$. For identification, it is sufficient that $\operatorname{col}(\mathbf G) \subseteq \operatorname{col}(\mathbf F^0)$, or equivalently that $\mathbf G=\mathbf F^0\mathbf A'$ for some conformable
matrix $\mathbf A$. This spanning condition requires the latent macro-financial variation affecting commodity returns to be contained in the common factor space of the commodity-specific risk characteristics.
Consequently, $\mathbf M_{\mathbf F^0}\mathbf G=\mathbf 0$, and consistency of the estimated factor space implies that the transformation in Eq.~\eqref{projected_model} asymptotically removes the latent macro component from the estimating equation. Importantly, the spanning condition forms part of the maintained restrictions underlying instrument validity, which can be assessed indirectly using the over-identification tests developed below.

To estimate $\boldsymbol{\theta}_i$ we construct instruments for the observed micro- and market-level variables in the defactored model. The instrument set is given by\footnote{In practice, the lag orders $\zeta_x$ and $\zeta_y$ are selected using the Lasso approach of
\citet{BelloniEtal2012}, as in \citet{ChenEtal2025}.}
\begin{align}
\widehat{\mathbf Z}_i
=
\Big(
&
\mathbf r_{-i},\;
\mathbf r_{-i,-1},\ldots,
\mathbf r_{-i,-\zeta_r},
\nonumber\\
&
\mathbf M_{\widehat{\mathbf F}}\mathbf X_i,\;
\mathbf M_{\widehat{\mathbf F}_{-1}}\mathbf X_{i,-1},
\ldots,
\mathbf M_{\widehat{\mathbf F}_{-\zeta_x}}
\mathbf X_{i,-\zeta_x},
\nonumber\\
&
\mathbf M_{\widehat{\mathbf F}}\mathbf Y_i,\;
\mathbf M_{\widehat{\mathbf F}_{-1}}\mathbf Y_{i,-1},
\ldots,
\mathbf M_{\widehat{\mathbf F}_{-\zeta_y}}
\mathbf Y_{i,-\zeta_y}
\Big).
\label{Instruments_matrix}
\end{align}
Instrument relevance is provided by the idiosyncratic component of the commodity-specific risk characteristics. Defactoring removes their latent common component but preserves idiosyncratic variation that remains informative about the corresponding endogenous regressors. Formally, this relevance requirement is captured by the full-column-rank condition on $\mathbf A_{i}$ in Assumption 5 below.

For the micro- and market-risk blocks, projecting the contemporaneous and lagged variables onto the orthogonal complement of $\widehat{\mathbf F}_{-\tau}$ asymptotically removes the latent common component underlying their endogeneity. The resulting defactored variables and their selected lags therefore provide valid instruments for the
commodity-specific and market-level risk characteristics under the maintained moment conditions.

The leave-one-out sector return $\mathbf r_{-i}$ is treated separately. Unlike the micro- and market-risk characteristics, it is not defactored when constructing the instrument set. The observed sector-return space is
separated from the latent factor space in the factor decomposition above, while the spanning condition ensures that the latent macro component of the composite error is removed by the transformation in
Eq.~\eqref{projected_model}. Subject to the maintained orthogonality conditions with the remaining idiosyncratic disturbance, the contemporaneous leave-one-out sector return and its selected lags can therefore enter the instrument set directly.

Using the instruments in Eq.~\eqref{Instruments_matrix}, the
commodity-specific IV estimator of $\boldsymbol{\theta}_{i}$ is defined as
\begin{equation}
\widehat{\boldsymbol{\theta}}_{i}
=
\left(
\widehat{\mathbf A}_{i,T}^{\prime}
\widehat{\mathbf B}_{i,T}^{-1}
\widehat{\mathbf A}_{i,T}
\right)^{-1}
\widehat{\mathbf A}_{i,T}^{\prime}
\widehat{\mathbf B}_{i,T}^{-1}
\widehat{\mathbf c}_{i,T},
\label{ivi}
\end{equation}
\begin{equation}
\widehat{\mathbf A}_{i,T}
=
\frac{1}{T}
\widehat{\mathbf Z}_{i}^{\prime}
\mathbf M_{\widehat{\mathbf F}}\mathbf C_i,
\qquad
\widehat{\mathbf B}_{i,T}
=
\frac{1}{T}
\widehat{\mathbf Z}_{i}^{\prime}
\mathbf M_{\widehat{\mathbf F}}
\widehat{\mathbf Z}_{i},
\qquad
\widehat{\mathbf c}_{i,T}
=
\frac{1}{T}
\widehat{\mathbf Z}_{i}^{\prime}
\mathbf M_{\widehat{\mathbf F}}\mathbf r_i .
\label{tabgi}
\end{equation}
The IV estimator in Eq.~\eqref{ivi} exploits the population orthogonality condition that, for each $i$,
\begin{equation}
p\!\lim_{T\to\infty} T^{-1}\mathbf{Z}_{i}^{\prime}\mathbf{M}_{\mathbf{F}^{0}}\mathbf{u}_{i}=\mathbf{0}.
\label{moments_pop}
\end{equation}
For a generic parameter vector $\boldsymbol\theta$, define the sample moment vector
\begin{equation}
\overline{\mathbf g}_{iT}(\boldsymbol\theta)
=
\frac{1}{T}
\widehat{\mathbf Z}_i'
\mathbf M_{\widehat{\mathbf F}}
\left(
\mathbf r_i-\mathbf C_i\boldsymbol\theta
\right)
=
\widehat{\mathbf c}_{i,T}
-
\widehat{\mathbf A}_{i,T}\boldsymbol\theta.
\label{sample_moments}
\end{equation}
The estimator in Eq.~\eqref{ivi} satisfies the first-order conditions
\begin{equation}
\widehat{\mathbf A}_{i,T}'
\widehat{\mathbf B}_{i,T}^{-1}
\overline{\mathbf g}_{iT}
\left(
\widehat{\boldsymbol\theta}_i
\right)
=
\mathbf 0.
\label{moments_sample}
\end{equation}
When $q_i>p$, these $p$ first-order conditions do not in general imply that all $q_i$ sample moments in
$\overline{\mathbf g}_{iT}(\widehat{\boldsymbol\theta}_i)$ are individually equal to zero.

Under the regularity conditions stated below and instrument validity, $\widehat{\boldsymbol{\theta}}_i$ is $\sqrt{T}$-consistent and asymptotically normal. Once the latent common component is projected out, the remaining estimation problem takes the form of a standard commodity-specific IV regression. We summarise below the main conditions governing identification, factor structure, heterogeneity, and limiting inference. More primitive moment, dependence, factor-estimation, and projection conditions used to establish these high-level results are collected in Online Appendix D.


\begin{description}

\item[\textbf{Assumption 1 (Idiosyncratic errors).}]
$\varepsilon_{i,t}$ and $\text{v}_{k,i,t}$ have zero mean, finite $(8+\delta)$ moments for some $\delta>0$, and satisfy weak dependence conditions over $t$ and independence across $i$.

\item[\textbf{Assumption 2 (Common component structure).}]
$\mathbf f_t=\boldsymbol C_x(L)\mathbf q_{f,t}$ and
$\mathbf g_t=\boldsymbol C_g(L)\mathbf q_{g,t}$, where
$\boldsymbol C_x(L)$ and $\boldsymbol C_g(L)$ are absolutely summable,
$\mathbf q_{f,t}\sim i.i.d.(\mathbf 0,\boldsymbol\Sigma_f)$ and
$\mathbf q_{g,t}\sim i.i.d.(\mathbf 0,\boldsymbol\Sigma_g)$, with finite
fourth moments. Letting
$\mathbf F^0=(\mathbf f_1,\ldots,\mathbf f_T)'$ and
$\mathbf G=(\mathbf g_1,\ldots,\mathbf g_T)'$, we assume $\mathrm{col}(\mathbf G) \subseteq
\mathrm{col}(\mathbf F^0)$.

\item[\textbf{Assumption 3 (Sector structure and factor loadings).}]
The number of sectors $J$ is fixed and, for each
$s=1,\ldots,J$, $\frac{N_s}{N}\longrightarrow\pi_s$,
$0<\pi_s<1$. The loadings $\boldsymbol{\phi}_i$, $\boldsymbol{\Lambda}_i$, and $\boldsymbol{\gamma}_i$ have finite fourth moments and are independent of $\varepsilon_{i,t}$ and $\mathbf v_{i,t}$. Moreover, for each sector, $\frac{1}{N_s} \sum_{i\in\mathcal S_s}
\boldsymbol{\Lambda}_i \boldsymbol{\Lambda}_i'
\overset{p}{\longrightarrow} \mathbf Q_{\Lambda,s}$,
$\mathbf Q_{\Lambda,s}>0$. The relevant remaining second-moment matrices are positive definite.

\item[\textbf{Assumption 4 (Random coefficients).}]
$\boldsymbol{\theta}_i
=
\boldsymbol{\theta}+\boldsymbol{\eta}_i$,
where
\[
\boldsymbol{\theta}_i
=
(\rho_i,
 \boldsymbol{\alpha}_i',
 \boldsymbol{\beta}_i')',
\]
and $\boldsymbol{\eta}_i$ is i.i.d. with mean zero, finite fourth moments, and sub-exponential tails.

\item[\textbf{Assumption 5 (Identification and limiting distribution).}]
Let $\mathcal H$ denote the sigma-field generated by the heterogeneous coefficients and loadings. For each fixed $i$, the following conditions hold conditionally on $\mathcal H$, for almost every realization, as $N,T\to\infty$ with $N/T\to c\in(0,\infty)$. The instrument dimension $q_i\geq p$ is fixed. We assume
\[
\widehat{\mathbf A}_{i,T}\overset{p}{\to}\mathbf A_i,
\qquad
\widehat{\mathbf B}_{i,T}\overset{p}{\to}\mathbf B_i,
\]
where $\mathbf A_i$ has full column rank $p$ and $\mathbf B_i$ is positive definite. Moreover,
\[
\frac{1}{\sqrt T}
\mathbf Z_i'\mathbf M_{\mathbf F^0}\mathbf u_i
\overset{d}{\longrightarrow}
N(\mathbf 0,\boldsymbol\Omega_i),
\]
where $\boldsymbol\Omega_i$ is positive definite. For uniform results over $i$, $\mathbf B_i$ and $\boldsymbol\Omega_i$ are uniformly positive definite.

\end{description}
The full-column-rank condition on $\mathbf A_i$ is the standard IV relevance condition, requiring the instrument set to retain sufficient independent variation to identify the $p$ elements of $\boldsymbol{\theta}_{i}$. In the present setting, this identifying variation is provided by the idiosyncratic component of the observed risk characteristics that remains after their latent common component has been removed.

Additional high-level moment, dependence, pervasiveness, and regularity conditions required for the asymptotic arguments are reported in
Online~\ref{reg_conditions_lemmas}. These conditions ensure the uniform laws of large numbers and central limit results required for PCA-based factor extraction and commodity-specific IV estimation, as well as the
asymptotic negligibility of the residual leave-one-out sector-return component after projection on the observed sector-return space.

Theorem~\ref{TH2} formalises that, despite the endogeneity of the observed micro- and market-level variables induced by latent macro risk, the proposed commodity-specific IV estimator consistently recovers the heterogeneous sensitivities and supports standard inference under the stated conditions.

\begin{theorem}[Asymptotic Distribution of the Commodity-Specific IV Estimator]
\label{TH2}
Consider the model in Eqs.~\eqref{compact} and \eqref{x_factor}, under Assumptions 1--5 and the additional regularity conditions reported in the Online Appendix. Then, as $N,T\to\infty$ such that $N/T\rightarrow c$, with $0<c<\infty$, for each fixed $i$ and almost every realization of $\mathcal H$,
\begin{equation}
\sqrt{T}
\left(
\widehat{\boldsymbol{\theta}}_i-\boldsymbol{\theta}_i
\right)
\overset{d}{\longrightarrow}
N\left(
\mathbf 0,\,
\mathbf V_i
\right)
\qquad\text{conditionally on }\mathcal H,
\end{equation}
where
\begin{equation}
\mathbf V_i
=
(\mathbf A_i'\mathbf B_i^{-1}\mathbf A_i)^{-1}
\mathbf A_i'\mathbf B_i^{-1}
\boldsymbol{\Omega}_i
\mathbf B_i^{-1}\mathbf A_i
(\mathbf A_i'\mathbf B_i^{-1}\mathbf A_i)^{-1}.
\end{equation}
\end{theorem}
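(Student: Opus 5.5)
The argument is the standard GMM sandwich expansion. The only non-standard work is showing that replacing the population defactoring matrix $\mathbf M_{\mathbf F^0}$ and the infeasible instruments $\mathbf Z_i$ by their estimated counterparts $\mathbf M_{\widehat{\mathbf F}}$ and $\widehat{\mathbf Z}_i$ costs only $o_p(1)$ after $\sqrt T$ scaling.

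\textbf{Step 1: the sampling-error identity.} Substituting $\mathbf r_i=\mathbf C_i\boldsymbol\theta_i+\mathbf u_i$ into \eqref{ivi} and using the definitions in \eqref{tabgi} gives the exact identity
\[
\sqrt T(\widehat{\boldsymbol\theta}_i-\boldsymbol\theta_i)=\left(\widehat{\mathbf A}_{i,T}'\widehat{\mathbf B}_{i,T}^{-1}\widehat{\mathbf A}_{i,T}\right)^{-1}\widehat{\mathbf A}_{i,T}'\widehat{\mathbf B}_{i,T}^{-1}\,\frac{1}{\sqrt T}\widehat{\mathbf Z}_i'\mathbf M_{\widehat{\mathbf F}}\mathbf u_i .
\]
By Assumption 5, $\widehat{\mathbf A}_{i,T}\to_p\mathbf A_i$ with full column rank and $\widehat{\mathbf B}_{i,T}\to_p\mathbf B_i>0$. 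The continuous mapping theorem then sends the leading matrix to $(\mathbf A_i'\mathbf B_i^{-1}\mathbf A_i)^{-1}\mathbf A_i'\mathbf B_i^{-1}$. By Slutsky's theorem and the CLT in Assumption 5, the theorem follows once I show
\[
\frac{1}{\sqrt T}\widehat{\mathbf Z}_i'\mathbf M_{\widehat{\mathbf F}}\mathbf u_i=\frac{1}{\sqrt T}\mathbf Z_i'\mathbf M_{\mathbf F^0}\mathbf u_i+o_p(1).
\]
All statements are conditional on $\mathcal H$, so the coefficients and loadings are treated as fixed. The sandwich form of $\mathbf V_i$ then follows immediately.

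\textbf{Step 2: decomposing the score.} I write $\mathbf u_i=\mathbf G\boldsymbol\gamma_i+\boldsymbol\varepsilon_i$. Under Assumption 2, $\mathbf G=\mathbf F^0\mathbf A'$, so
\[
\mathbf M_{\widehat{\mathbf F}}\mathbf G\boldsymbol\gamma_i=(\mathbf M_{\widehat{\mathbf F}}-\mathbf M_{\mathbf F^0})\mathbf F^0\mathbf A'\boldsymbol\gamma_i .
\]
I then split the score difference into three pieces:
\begin{enumerate}
\renewcommand{\labelenumi}{(\alph{enumi})}
\item $T^{-1/2}(\widehat{\mathbf Z}_i-\mathbf Z_i)'\mathbf M_{\mathbf F^0}\mathbf u_i$;
\item $T^{-1/2}\mathbf Z_i'(\mathbf M_{\widehat{\mathbf F}}-\mathbf M_{\mathbf F^0})\boldsymbol\varepsilon_i$;
\item $T^{-1/2}\widehat{\mathbf Z}_i'(\mathbf M_{\widehat{\mathbf F}}-\mathbf M_{\mathbf F^0})\mathbf F^0\mathbf A'\boldsymbol\gamma_i$, together with the cross terms.
\end{enumerate}
The instrument blocks built from $\mathbf M_{\widehat{\mathbf F}_{-\tau}}\mathbf X_{i,-\tau}$ and $\mathbf M_{\widehat{\mathbf F}_{-\tau}}\mathbf Y_{i,-\tau}$ have the same structure. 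Piece (a) therefore reduces to terms of the same type as (b) and (c), one for each lag $\tau$.

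\textbf{Step 3: factor-space estimation (the main obstacle).} Each piece needs rate results for PCA on the projected characteristics. The relevant facts are the following.
\begin{itemize}
\item $\mathbf M_{\mathbf R_{-\tau}}\mathbf X_{i,-\tau}=\mathbf F^0_{-\tau}\boldsymbol\Lambda_i'+\mathbf M_{\mathbf R_{-\tau}}\mathbf V_{i,-\tau}+\mathbf D_{i,-\tau}$.
\item $\mathbf M_{\mathbf R}\mathbf r_{-i}=-(N_s-1)^{-1}\mathbf M_{\mathbf R}\mathbf r_i$, so $\|\mathbf D_{i,-\tau}\|/\sqrt T=O_p(N_s^{-1})=O_p(N^{-1})$ by Assumption 3.
\item Projecting out the fixed-dimensional $\mathbf R_{-\tau}$ changes the idiosyncratic part only by a rank-$J$ perturbation of order $O_p(1)$ in the relevant norms.
\end{itemize}
Given these, the Bai (2003)/Bai--Ng arguments under Assumptions 1--3 give $T^{-1}\|\widehat{\mathbf F}_{-\tau}-\mathbf F^0_{-\tau}\mathbf H_\tau\|^2=O_p(\delta_{NT}^{-2})$ with $\delta_{NT}=\min(\sqrt N,\sqrt T)$. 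They also give $\|P_{\widehat{\mathbf F}}-P_{\mathbf F^0}\|=O_p(\delta_{NT}^{-1})$.

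The naive bound $\sqrt T\cdot O_p(\delta_{NT}^{-2})$ for (c) is $O_p(T^{-1/2})\cdot O_p(1)$ only because $N/T\to c$. I would make this sharp by adapting the standard expansion (as in Bai 2009; Norkute et al.\ 2021):
\[
\widehat{\mathbf F}-\mathbf F^0\mathbf H=T^{-1}N^{-1}\sum_j\left(\mathbf V_j\mathbf V_j'\widehat{\mathbf F}+\mathbf F^0\boldsymbol\Lambda_j'\mathbf V_j'\widehat{\mathbf F}+\mathbf V_j\boldsymbol\Lambda_j\mathbf F^{0\prime}\widehat{\mathbf F}\right)(\mathbf V_{NT})^{-1}+\text{(projection and }\mathbf D\text{ terms)}.
\]
Terms (b) and (c) then become double sums over $j$ and $t$. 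The dominant piece is $T^{-1/2}N^{-1}\sum_j\mathbf Z_i'\mathbf V_j\cdots$. For $j\ne i$ it is a centred sum and is $O_p(N^{-1/2})+O_p(T^{-1/2})$, by independence across $i$ in Assumption 1 and the $(8+\delta)$-moment conditions. The own term $j=i$ contributes $O_p(N^{-1})\sqrt T=O_p(T^{-1/2})$.

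The delicate point is that there is no $\sqrt{T/N}$ bias of the kind found in pooled estimators, which would not vanish under $N/T\to c$. The reason is that $\mathbf M_{\mathbf F^0}\mathbf G=\mathbf 0$ exactly, so (c) is a product of two estimation errors. The one genuinely first-order cross-sectional term is $\boldsymbol\varepsilon_i$ interacting with the loading-weighted averages $N^{-1}\sum_j\boldsymbol\Lambda_j\mathbf V_j'$, and it is $O_p(\sqrt{T}\,\delta_{NT}^{-2})$-small once independence of $\boldsymbol\varepsilon_i$ from $\mathbf V_j$ ($j\ne i$) is used. Finally, the unprojected $\mathbf r_{-i}$ block of $\widehat{\mathbf Z}_i$ involves no estimated quantity, so it only enters through (b) and (c) with $\mathbf Z_i$ replaced by an observed regressor, and it is handled identically. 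Combining Steps 1--3 yields the stated limit.
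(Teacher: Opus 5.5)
Your proposal is correct and follows essentially the same route as the paper. The exact sampling-error identity, the convergence of $\widehat{\mathbf A}_{i,T}$ and $\widehat{\mathbf B}_{i,T}$ from Assumption~5, and Slutsky reduce everything to the score equivalence $T^{-1/2}\widehat{\mathbf Z}_i'\mathbf M_{\widehat{\mathbf F}}\mathbf u_i=T^{-1/2}\mathbf Z_i'\mathbf M_{\mathbf F^0}\mathbf u_i+o_p(1)$, which is exactly the paper's supporting proposition; the paper then proves that equivalence from the same ingredients you use (Bai-type PCA rates on the sector-projected characteristics, the $O_p(N^{-1})$ leave-one-out remainder, and the spanning condition giving $\sqrt T\,O_p(\delta_{NT}^{-2})$ cross-product bounds), organised block by block over the instrument set rather than through your (a)--(c) split.
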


\begin{proof}
See Online~\ref{reg_conditions_lemmas}.
\end{proof}

The ``divide-and-conquer'' framework generates over-identifying restrictions that allow the validity of the instrument set to be assessed separately for each commodity. Importantly, these restrictions provide an indirect diagnostic relevant to the spanning condition in Assumption~2, according to which the latent macro-risk component is contained in the common factor space extracted from the commodity-specific risk characteristics after accounting for observed sector-return variation. We therefore derive a commodity-specific over-identification test, which also provides the basis for the panel-level diagnostic reported later in Section~\ref{sec:results}. Because the defactored moment conditions may exhibit serial dependence, the test is based on a heteroskedasticity and autocorrelation consistent (HAC) estimator of the long-run covariance matrix.

Let $\widehat{\mathbf u}_i = \mathbf r_i-\mathbf C_i\widehat{\boldsymbol\theta}_i$ denote the residual vector from the baseline IV estimator in
Eq.~\eqref{ivi}, and define
\[
\widehat{\mathbf m}_{i,t}
=
\widehat{\mathbf z}_{i,t}^{*}\,
\widehat u_{i,t}^{*},
\qquad
t=1,\dots,T,
\]
where $\widehat{\mathbf z}_{i,t}^{*}$ denotes the $t$th row of
$\mathbf M_{\widehat{\mathbf F}}\widehat{\mathbf Z}_i$ and
$\widehat u_{i,t}^{*}$ is the $t$th element of
$\mathbf M_{\widehat{\mathbf F}}\widehat{\mathbf u}_i$.
Define the sample autocovariance matrices
\begin{equation}
\widehat{\boldsymbol\Gamma}_{i,\ell}
=
\frac{1}{T}
\sum_{t=\ell+1}^{T}
\widehat{\mathbf m}_{i,t}
\widehat{\mathbf m}_{i,t-\ell}^{\prime},
\qquad
\ell=0,1,\dots,L_T,
\label{vc1}
\end{equation}
and the HAC estimator of the long-run covariance matrix
\begin{equation}
\widehat{\boldsymbol\Omega}_{iT}
=
\widehat{\boldsymbol\Gamma}_{i,0}
+
\sum_{\ell=1}^{L_T}
k\!\left(\frac{\ell}{L_T+1}\right)
\left(
\widehat{\boldsymbol\Gamma}_{i,\ell}
+
\widehat{\boldsymbol\Gamma}_{i,\ell}^{\prime}
\right),
\label{hac_omega}
\end{equation}
where $k(\cdot)$ is a kernel function and $L_T$ is a bandwidth parameter satisfying $L_T\to\infty$ and $L_T/T\to0$ as $T\to\infty$. Using $\widehat{\boldsymbol\Omega}_{iT}$ as the second-step weighting matrix, define the efficient two-step estimator
\begin{equation}
\widetilde{\boldsymbol\theta}_i
=
\left(
\widehat{\mathbf A}_{i,T}'
\widehat{\boldsymbol\Omega}_{iT}^{-1}
\widehat{\mathbf A}_{i,T}
\right)^{-1}
\widehat{\mathbf A}_{i,T}'
\widehat{\boldsymbol\Omega}_{iT}^{-1}
\widehat{\mathbf c}_{i,T}.
\label{efficient_iv}
\end{equation}
The corresponding residual vector is
\[
\widetilde{\mathbf u}_i
=
\mathbf r_i-\mathbf C_i\widetilde{\boldsymbol\theta}_i.
\]
The estimator $\widetilde{\boldsymbol\theta}_i$ is introduced solely for constructing the over-identification test; the baseline estimator $\widehat{\boldsymbol\theta}_i$ in Eq.~\eqref{ivi} remains the estimator used in the two-stage empirical procedure.
The commodity-specific over-identification statistic is then defined as
\begin{equation}
S_{iT}
=
\frac{1}{T}
\left(
\widetilde{\mathbf u}_i'
\mathbf M_{\widehat{\mathbf F}}
\widehat{\mathbf Z}_i
\right)
\widehat{\boldsymbol\Omega}_{iT}^{-1}
\left(
\widehat{\mathbf Z}_i'
\mathbf M_{\widehat{\mathbf F}}
\widetilde{\mathbf u}_i
\right).
\label{sargan}
\end{equation}
Equivalently,
\[
S_{iT}
=
T\,
\overline{\mathbf g}_{iT}
(\widetilde{\boldsymbol\theta}_i)'
\widehat{\boldsymbol\Omega}_{iT}^{-1}
\overline{\mathbf g}_{iT}
(\widetilde{\boldsymbol\theta}_i),
\]
where $\overline{\mathbf g}_{iT}(\boldsymbol\theta)$ is defined in
Eq.~\eqref{sample_moments}.
The limiting distribution of the statistic, allowing for serial correlation in the moment process, is given in the following theorem.\footnote{Related IV estimators for panel models with latent common components have been studied in the literature (e.g., \citet{NorkuteEtal2021}). To the best of our knowledge, however, the commodity-specific over-identification test for the defactored moment conditions considered
here has not previously been developed.  Theorem~\ref{TH3} therefore provides a formal basis for assessing instrument validity within the present divide-and-conquer framework.}

\begin{theorem}[Commodity-Specific Over-Identification Test]
\label{TH3}
Consider the model in Eqs.~\eqref{compact}--\eqref{pca_vectori}, and suppose
that Assumptions 1--5 hold together with the additional regularity conditions
reported in the Online Appendix, including Assumption~A.5 on consistency of
the long-run covariance estimator. Let $S_{iT}$ be defined as in
Eq.~\eqref{sargan}, with the sample moments evaluated at
$\widetilde{\boldsymbol\theta}_i$ defined in Eq.~\eqref{efficient_iv}.
Under the null hypothesis that the $q_i$ population moment conditions are
valid, as $N,T\to\infty$ with $N/T\to c\in(0,\infty)$, for each fixed $i$
and almost every realization of $\mathcal H$,
\[
S_{iT}
\overset{d}{\longrightarrow}
\chi^2_{q_i-p}
\qquad\text{conditionally on }\mathcal H,
\]
where $q_i$ is the fixed number of instruments for commodity $i$ and
$p=\dim(\boldsymbol\theta_i)$.
\end{theorem}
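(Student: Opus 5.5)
The argument follows the standard Hansen $J$-test route, applied to the defactored moment conditions. The first step is to write the sample moment at the efficient estimator as a linear function of the sample moment at the true parameter. Since $\overline{\mathbf g}_{iT}(\boldsymbol\theta)=\widehat{\mathbf c}_{i,T}-\widehat{\mathbf A}_{i,T}\boldsymbol\theta$ is linear in $\boldsymbol\theta$, Eq.~\eqref{efficient_iv} gives
\[
\overline{\mathbf g}_{iT}(\widetilde{\boldsymbol\theta}_i)
=
\Big[\mathbf I_{q_i}-\widehat{\mathbf A}_{i,T}\big(\widehat{\mathbf A}_{i,T}'\widehat{\boldsymbol\Omega}_{iT}^{-1}\widehat{\mathbf A}_{i,T}\big)^{-1}\widehat{\mathbf A}_{i,T}'\widehat{\boldsymbol\Omega}_{iT}^{-1}\Big]\overline{\mathbf g}_{iT}(\boldsymbol\theta_i).
\]
Here $\overline{\mathbf g}_{iT}(\boldsymbol\theta_i)=T^{-1}\widehat{\mathbf Z}_i'\mathbf M_{\widehat{\mathbf F}}\mathbf u_i$. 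This is an exact algebraic identity, with no Taylor expansion needed.

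The second, and main, step is to establish
\[
\sqrt T\,\overline{\mathbf g}_{iT}(\boldsymbol\theta_i)
=
T^{-1/2}\mathbf Z_i'\mathbf M_{\mathbf F^0}\mathbf u_i+o_p(1)
\]
conditionally on $\mathcal H$. Assumption~5 then gives $N(\mathbf 0,\boldsymbol\Omega_i)$ in the limit. This is the same approximation that underlies Theorem~\ref{TH2}, so I would reuse its proof, which I take as given. The components are as follows.
\begin{enumerate}
\item Replacing $\mathbf M_{\widehat{\mathbf F}_{-\tau}}$ by $\mathbf M_{\mathbf F^0_{-\tau}}$ in the instruments and in the outer projection. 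I would bound $\|\mathbf P_{\widehat{\mathbf F}}-\mathbf P_{\mathbf F^0}\|$ using the PCA rate $\delta_{NT}^{-1}=\min(\sqrt N,\sqrt T)^{-1}$. The induced terms in $T^{-1/2}\mathbf Z_i'(\mathbf M_{\widehat{\mathbf F}}-\mathbf M_{\mathbf F^0})\mathbf u_i$ would be controlled by Bai-type expansions of $\widehat{\mathbf F}-\mathbf F^0\mathbf H$, using $N/T\to c$ so that terms of order $\sqrt T/N$ and $\sqrt T/\delta_{NT}^2$ vanish.
\item Removing the residual leave-one-out terms $\mathbf D_{i,-\tau}$, which are $O_p(N_s^{-1})$ after normalisation and hence negligible since $N_s/N\to\pi_s>0$.
\item Using the spanning condition, which gives $\mathbf M_{\mathbf F^0}\mathbf G\boldsymbol\gamma_i=\mathbf 0$, so that only the $\boldsymbol\varepsilon_i$ part survives. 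Note that $\mathbf G\boldsymbol\gamma_i$ is annihilated exactly only by $\mathbf M_{\mathbf F^0}$, so the term $T^{-1/2}\mathbf Z_i'(\mathbf M_{\widehat{\mathbf F}}-\mathbf M_{\mathbf F^0})\mathbf G\boldsymbol\gamma_i$ needs $\sqrt T\,\|\mathbf P_{\widehat{\mathbf F}}-\mathbf P_{\mathbf F^0}\|\to0$.
\end{enumerate}
The third of these is the main obstacle: a crude $\sqrt T/\delta_{NT}$ bound does not suffice, and the leading term has to be shown to have mean zero, exactly as in the bias analysis of \citet{NorkuteEtal2021}. Since $i$ is fixed, the incidental contribution of commodity $i$ to the factor estimate is $O_p(N^{-1})$, and this helps.

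The final step assembles the limit. By Assumption~5, $\widehat{\mathbf A}_{i,T}\to_p\mathbf A_i$, and by Assumption~A.5, $\widehat{\boldsymbol\Omega}_{iT}\to_p\boldsymbol\Omega_i$. For the latter I would verify that the residual-based HAC estimator is consistent using $\widetilde{\boldsymbol\theta}_i-\boldsymbol\theta_i=O_p(T^{-1/2})$, which follows as in Theorem~\ref{TH2}, together with $L_T/T\to0$. Write $\boldsymbol\Omega_i^{-1/2}\sqrt T\,\overline{\mathbf g}_{iT}(\boldsymbol\theta_i)\to_d\boldsymbol\xi\sim N(\mathbf 0,\mathbf I_{q_i})$ and $\mathbf W_i=\boldsymbol\Omega_i^{-1/2}\mathbf A_i$. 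Then
\[
S_{iT}\to_d \boldsymbol\xi'\big(\mathbf I-\mathbf W_i(\mathbf W_i'\mathbf W_i)^{-1}\mathbf W_i'\big)\boldsymbol\xi .
\]
This limit is a quadratic form in a symmetric idempotent matrix. Its rank is $q_i-p$ because $\mathbf A_i$ has full column rank $p$ and $\boldsymbol\Omega_i>0$. The limit is therefore $\chi^2_{q_i-p}$ conditionally on $\mathcal H$, by the continuous mapping theorem.
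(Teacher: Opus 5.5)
Your proposal is correct and follows the paper's proof essentially step for step. Both arguments use the exact linear identity expressing $\overline{\mathbf g}_{iT}(\widetilde{\boldsymbol\theta}_i)$ through $\overline{\mathbf g}_{iT}(\boldsymbol\theta_i)=T^{-1}\widehat{\mathbf Z}_i'\mathbf M_{\widehat{\mathbf F}}\mathbf u_i$, the approximation $T^{-1/2}\widehat{\mathbf Z}_i'\mathbf M_{\widehat{\mathbf F}}\mathbf u_i=T^{-1/2}\mathbf Z_i'\mathbf M_{\mathbf F^0}\mathbf u_i+o_p(1)$ (Proposition~\ref{PROP_ZMFi}, also used for Theorem~\ref{TH2}), Assumptions~5 and A.5, and the rank-$(q_i-p)$ idempotent quadratic form. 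Two small remarks follow.

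\begin{itemize}
\item Your item~3 needs no separate mean-zero or bias argument. The paper obtains it from the second-order bounds $T^{-1}(\widehat{\mathbf F}-\mathbf F^0\mathbf H_x)'(\cdot)=O_p(\delta_{NT}^{-2})$ of Lemma~\ref{bailemma}, applied via Lemma~\ref{LEM_ZMU0}. The remainder is then $\sqrt T\,O_p(\delta_{NT}^{-2})=O_p(T^{-1/2})$ for fixed $i$.
\item $\widehat{\boldsymbol\Omega}_{iT}$ is built from the baseline residuals. Its consistency, which A.5 simply assumes, should therefore rest on $\widehat{\boldsymbol\theta}_i$ rather than on $\widetilde{\boldsymbol\theta}_i$, which itself depends on $\widehat{\boldsymbol\Omega}_{iT}$.
\end{itemize}
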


\begin{proof}
See Online~\ref{reg_conditions_lemmas}.
\end{proof}

\noindent
\textbf{Stage 2: Latent Economy-Wide Common Risk Sources}

In Stage~2, we use the commodity-specific IV residuals from Stage~1 to identify observable proxies for latent economy-wide macro risk and estimate the corresponding heterogeneous sensitivities. We begin by forming $\widehat{\mathbf{u}}_{i} = \mathbf{r}_{i} - \mathbf{C}_{i}\widehat{\boldsymbol{\theta}}_{i}$
which consistently estimates $\mathbf u_i
=
\mathbf G\boldsymbol\gamma_i+\boldsymbol\varepsilon_i$ and captures the variation in commodity returns remaining after controlling for the observed sector-return, micro-risk, and market-risk sources. We next extract the $r$ leading principal components from the empirical covariance matrix $(NT)^{-1} \sum_{i=1}^{N} \widehat{\mathbf{u}}_{i} \widehat{\mathbf{u}}_{i}^{\prime}$, where $r$ is determined using standard eigenvalue-based methods \citep{BaiNg2002,AhnHorenstein2013}. Under the maintained factor structure, these principal components recover, up to rotation, the common component of the Stage~1 residuals generated by $\mathbf G$. They therefore provide a statistical representation of the latent macro-risk space remaining after the observed risk sources have been controlled for. We then apply the Boosting with Multiple Testing (BMT) algorithm of \citet{KapetaniosEtal2026} to each extracted principal component using a high-dimensional set of candidate macro-financial variables $\boldsymbol{\mathcal{Z}} \in \mathbb{R}^{T \times n}$, where $\boldsymbol{\mathcal{Z}} = \left(\boldsymbol{\mathcal{Z}}_{1}, \dots, \boldsymbol{\mathcal{Z}}_{n}\right)$. Methodological details of BMT are provided in the Online \ref{subsec:BMT_method}. The high dimensionality arises because macro risk is latent, while observable macro-financial indicators available to proxy it are numerous, highly collinear, and provide limited guidance as to which are truly informative. The procedure selects, for each component $j=1,\dots,r$, a sparse index set $\mathcal{S}_{j} \subset {1,\dots,n}$.
The union of predictors selected across the $r$ components defines the set of observable macro-financial proxies for the latent macro-risk
space. Formally, letting $\mathcal{S} = \bigcup_{j=1}^{r} \mathcal S_{j}$, the corresponding design matrix is denoted by $\boldsymbol{\mathcal Z}_{(\mathcal{S})}$. Finally, we regress $\widehat{\mathbf{u}}_{i}$ on $\boldsymbol{\mathcal Z}_{(\mathcal{S})}$ to obtain commodity-specific sensitivities to the identified macro risk sources, $\widehat{\boldsymbol{\gamma}}_i$.\footnote{\citet{KapetaniosEtal2026} show that the post selection least squares estimator is asymptotically equivalent to the infeasible oracle estimator that would be obtained if the true model were known in advance.}

The PCA--BMT step therefore links the common variation remaining in commodity returns after Stage~1 to observable macro-financial risk sources through a
structured dimension-reduction and selection procedure. PCA recovers the common residual variation associated with latent macro risk, while BMT identifies the macro-financial indicators that are informative about this
variation. The resulting predictor set provides a sparse and interpretable set of observable proxies for the latent macro-risk space and permits estimation of heterogeneous macro-risk sensitivities across commodities. Finite-sample evidence in Online~\ref{sec:Appendix_MC} documents the performance of the procedure in high-dimensional settings and compares it with alternative regularisation methods such as Lasso.

\begin{remark}
\textnormal{The treatment of the latent macro-risk component differs from standard factor-augmented approaches, which typically retain a small number of estimated latent factors in the final specification \citep[e.g.,][]{Bai2009,NorkuteEtal2021}. Here, the extracted factors serve an intermediate role: they recover the common residual variation to which BMT is subsequently applied. The high-dimensionality arises from the candidate set of observable macro-financial indicators, not from the dimension of the underlying macro-risk process. BMT selects a sparse subset of these indicators to provide an observable representation of the latent common component.
}
\end{remark}

\begin{remark}
\textnormal{To the best of our knowledge, applying BMT to principal components extracted from the commodity-specific IV residuals, rather than directly to the residuals themselves, is novel. The distinction is important. Direct commodity-by-commodity selection could produce different sets of macro-financial indicators across commodities, making it difficult to identify a common observable representation of macro risk. Moreover, non-selection of an indicator could reflect either a zero sensitivity or insufficient power to pass the selection threshold, complicating aggregation and subsequent Mean Group inference. The PCA--BMT procedure instead performs selection on the common residual variation and then estimates heterogeneous commodity-specific sensitivities conditional on a common selected indicator set.
}
\end{remark}

Once the commodity-specific coefficients $\widehat{\boldsymbol{\theta}}_i$ and
$\widehat{\boldsymbol{\gamma}}_i$ have been obtained, they can be aggregated to conduct population-level inference while preserving heterogeneity in the underlying commodity-specific sensitivities. In particular, the Mean Group (MG) estimator of the population average $\boldsymbol{\theta}$ is
\begin{equation}
\widehat{\boldsymbol{\theta}}
=
\frac{1}{N}
\sum_{i=1}^{N}
\widehat{\boldsymbol{\theta}}_i .
\label{ivmg}
\end{equation}

The following result establishes that averaging the commodity-specific estimates consistently recovers the population-average sensitivities and permits standard inference on these averages.
\noindent
\begin{proposition}[Asymptotic Distribution of the Mean Group Estimator]
\label{prop:MG_distribution}
Consider the model in Eqs.~(\ref{compact})--(\ref{pca_vectori}), under Assumptions 1--5. Then, as $N,T\to\infty$ such that $N/T\to c$, with $0<c<\infty$,
\begin{equation}
\sqrt{N}
\left(
\widehat{\boldsymbol{\theta}}-\boldsymbol{\theta}
\right)
\overset{d}{\rightarrow}
N\left(
\mathbf{0},\boldsymbol{\Sigma}_{\eta}
\right),
\end{equation}
and
\begin{equation}
\widehat{\boldsymbol{\Sigma}}_{\eta}
-
\boldsymbol{\Sigma}_{\eta}
\overset{p}{\rightarrow}
\mathbf{0},
\end{equation}
where
\begin{equation}
\widehat{\boldsymbol{\Sigma}}_{\eta}
=
\frac{1}{N-1}
\sum_{i=1}^{N}
\left(
\widehat{\boldsymbol{\theta}}_i-\widehat{\boldsymbol{\theta}}
\right)
\left(
\widehat{\boldsymbol{\theta}}_i-\widehat{\boldsymbol{\theta}}
\right)' .
\label{MG_varcov}
\end{equation}
\end{proposition}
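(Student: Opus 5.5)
We prove Proposition~\ref{prop:MG_distribution} with the standard Mean Group argument (Pesaran--Smith; Norkute et al.\ 2021), splitting the estimation error into a coefficient-heterogeneity part and an estimation-noise part. Writing $\widehat{\boldsymbol\theta}_i=\boldsymbol\theta_i+(\widehat{\boldsymbol\theta}_i-\boldsymbol\theta_i)$ and using Assumption~4,
\[
\sqrt N(\widehat{\boldsymbol\theta}-\boldsymbol\theta)
=\frac{1}{\sqrt N}\sum_{i=1}^N\boldsymbol\eta_i
+\frac{1}{\sqrt N}\sum_{i=1}^N(\widehat{\boldsymbol\theta}_i-\boldsymbol\theta_i).
\]
The first term converges to $N(\mathbf 0,\boldsymbol\Sigma_\eta)$ by the Lindeberg--L\'evy CLT, because the $\boldsymbol\eta_i$ are i.i.d.\ with mean zero and finite fourth moments. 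The whole proof therefore reduces to showing that the second term is $o_p(1)$.

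For the second term, we expand the IV estimator in Eq.~\eqref{ivi} as
\[
\widehat{\boldsymbol\theta}_i-\boldsymbol\theta_i=(\widehat{\mathbf A}_{i,T}'\widehat{\mathbf B}_{i,T}^{-1}\widehat{\mathbf A}_{i,T})^{-1}\widehat{\mathbf A}_{i,T}'\widehat{\mathbf B}_{i,T}^{-1}\,T^{-1}\widehat{\mathbf Z}_i'\mathbf M_{\widehat{\mathbf F}}\mathbf u_i .
\]
We then decompose $T^{-1}\widehat{\mathbf Z}_i'\mathbf M_{\widehat{\mathbf F}}\mathbf u_i$ into three pieces:
\begin{enumerate}
\item[(a)] the infeasible term $T^{-1}\mathbf Z_i'\mathbf M_{\mathbf F^0}\mathbf u_i$;
\item[(b)] factor-estimation errors from replacing $\mathbf F^0$ by $\widehat{\mathbf F}$ (and $\mathbf F^0_{-\tau}$ by $\widehat{\mathbf F}_{-\tau}$);
\item[(c)] the residual sector-return term $\mathbf D_{i,-\tau}$, which is of order $N_s^{-1}$.
\end{enumerate}

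\textbf{Term (a).} By the spanning condition $\mathbf M_{\mathbf F^0}\mathbf G=\mathbf 0$, this term equals $T^{-1}\mathbf Z_i'\mathbf M_{\mathbf F^0}\boldsymbol\varepsilon_i$. Conditional on $\mathcal H$ and $\mathbf F^0$, these terms have mean zero and are independent across $i$ (Assumption~1). Their average over $i$ is therefore $O_p((NT)^{-1/2})$, which contributes $O_p(T^{-1/2})=o_p(1)$ after multiplication by $\sqrt N$, since $N/T\to c$. We will need the Hessian-type matrices $\widehat{\mathbf A}_{i,T},\widehat{\mathbf B}_{i,T}$ to be replaceable by their limits uniformly in $i$. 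For this we use the uniform positive definiteness in Assumption~5 and the sub-exponential tails of $\boldsymbol\eta_i$, which give $\max_i\|\widehat{\mathbf B}_{i,T}^{-1}\|=O_p(1)$.

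\textbf{Terms (b) and (c).} For (b), we use the standard PCA results of Bai (2003) and Bai and Ng (2002) applied to the projected characteristics: $\|\mathbf P_{\widehat{\mathbf F}}-\mathbf P_{\mathbf F^0}\|=O_p(\delta_{NT}^{-1})$ with $\delta_{NT}=\min(\sqrt N,\sqrt T)$, together with the usual expansion of $\widehat{\mathbf F}-\mathbf F^0\mathbf H$. Following Norkute et al.\ (2021), the cross-sectionally averaged contribution is $O_p(\delta_{NT}^{-2})$. For (c), each summand is bounded by $O_p(N_s^{-1})=O_p(N^{-1})$ (Assumption~3), so it contributes $O_p(N^{-1/2})$ after scaling.

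\textbf{Main obstacle.} The crucial step is establishing that the average estimation error is $O_p(\delta_{NT}^{-2})$ rather than merely $O_p(\delta_{NT}^{-1})$. Multiplied by $\sqrt N$, the sharper bound gives $O_p(\sqrt N/\min(N,T))=o_p(1)$ under $N/T\to c$; the cruder bound would leave a non-vanishing term. Proving it requires showing that the first-order bias terms, such as $T^{-1}\mathbf Z_i'(\mathbf P_{\widehat{\mathbf F}}-\mathbf P_{\mathbf F^0})\mathbf G\boldsymbol\gamma_i$ and the term coupling $\mathbf V_i$ with $\mathbf u_i$ through $\widehat{\mathbf F}$, average out across $i$. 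This relies on independence of $\boldsymbol\gamma_i,\boldsymbol\Lambda_i$ from the idiosyncratic errors and on the $(8+\delta)$ moments. We would try first the lemma structure of Norkute et al.\ (2021), adapted to the sector projection $\mathbf M_{\mathbf R}$. Any remaining $\sqrt N/T$ bias term is $O(T^{-1/2})$ and hence negligible, because the MG rate $\sqrt N$ is slower than the individual rate $\sqrt T$.

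\textbf{Variance estimator.} For $\widehat{\boldsymbol\Sigma}_\eta$, we write $\widehat{\boldsymbol\theta}_i-\widehat{\boldsymbol\theta}=(\boldsymbol\eta_i-\bar{\boldsymbol\eta})+(\mathbf e_i-\bar{\mathbf e})$, where $\mathbf e_i=\widehat{\boldsymbol\theta}_i-\boldsymbol\theta_i$. The pure heterogeneity part $(N-1)^{-1}\sum(\boldsymbol\eta_i-\bar{\boldsymbol\eta})(\boldsymbol\eta_i-\bar{\boldsymbol\eta})'\to\boldsymbol\Sigma_\eta$ by the weak law of large numbers. The error part satisfies $N^{-1}\sum\|\mathbf e_i\|^2=O_p(T^{-1})$, using Theorem~\ref{TH2} strengthened to uniform second-moment bounds over $i$ via Assumption~5's uniformity. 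The cross terms then vanish by Cauchy--Schwarz.
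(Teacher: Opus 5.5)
The paper never writes this proof out. The appendix it cites proves Theorems~\ref{theorem:non_reducibility}--\ref{TH3} and the supporting lemmas, but contains nothing for the Mean Group result. The benchmark is therefore the standard Pesaran--Smith/Norkute et al.\ argument that the paper's assumptions are built for. Your skeleton matches it: $N^{-1/2}\sum_i\boldsymbol\eta_i$, plus an averaged estimation error that is $o_p(1)$ because $\sqrt N\,\delta_{NT}^{-2}\to0$.

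However, the step you treat as routine is justified incorrectly. In term (a), the summands $\boldsymbol\xi_i=T^{-1}\mathbf Z_i'\mathbf M_{\mathbf F^0}\boldsymbol\varepsilon_i$ are neither independent across $i$ nor exactly mean zero, so Assumption~1 does not give $O_p((NT)^{-1/2})$. Two features of the model break it. First, $\mathbf r_{-i}$ and $\mathbf Y_i$ are leave-one-out averages of other commodities' returns and characteristics, so $\boldsymbol\xi_i$ and $\boldsymbol\xi_j$ share $\mathbf v_j$ and $\boldsymbol\varepsilon_j$. Second, returns are determined simultaneously: $r_{j,t}$ loads on $r_{-j,t}$, which contains $r_{i,t}/(N_s-1)$. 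Hence $\mathbf r_{-i}$ loads on $\boldsymbol\varepsilon_i$ itself with weight of order $N_s^{-1}$. A direct second-moment calculation repairs this. The conditional means are $O(N^{-1})$ and the cross-covariances are $O((N^2T)^{-1})$, so $N^{-1/2}\sum_i\mathbf K_i\boldsymbol\xi_i=O_p(T^{-1/2}+N^{-1/2})=o_p(1)$, with $\mathbf K_i$ the limit of $\widehat{\mathbf K}_{iT}$ as in the proof of Theorem~\ref{TH2}. That calculation, not independence, is the argument.

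The real difficulty is also not where you locate it. Your ``main obstacle'' is already settled unit by unit in the paper. With $\mathbf G=\mathbf F^0\mathbf A'$ one has $\mathbf M_{\widehat{\mathbf F}}\mathbf G=-\mathbf M_{\widehat{\mathbf F}}(\widehat{\mathbf F}-\mathbf F^0\mathbf H_x)\mathbf H_x^{-1}\mathbf A'$. Lemmas~\ref{bailemma}--\ref{LEM_ZMU0} and Proposition~\ref{PROP_ZMFi} then make the whole factor-estimation remainder in $T^{-1}\widehat{\mathbf Z}_i'\mathbf M_{\widehat{\mathbf F}}\mathbf u_i$ of order $O_p(\delta_{NT}^{-2})$ for each $i$; no cancellation across $i$ is needed.

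What is genuinely missing is uniformity. Those results, Theorem~\ref{TH2} and Assumption~5 are all pointwise in a fixed $i$, and $N$ pointwise $O_p(\delta_{NT}^{-2})$ remainders need not average to $O_p(\delta_{NT}^{-2})$. You need $\max_i$ or averaged-$L^1$ versions, which is where the $(8+\delta)$ moments and union bounds (exploiting $N\asymp T$) enter. Similarly, $\max_i\|\widehat{\mathbf B}_{i,T}^{-1}\|=O_p(1)$ follows from uniform positive definiteness of $\mathbf B_i$ plus $\max_i\|\widehat{\mathbf B}_{i,T}-\mathbf B_i\|=o_p(1)$. It does not come from the tails of $\boldsymbol\eta_i$, since $\widehat{\mathbf B}_{i,T}$ is built from the instruments, not from $\boldsymbol\theta_i$. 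Given uniform consistency of $\widehat{\mathbf K}_{iT}$, the cross piece satisfies $\|N^{-1/2}\sum_i(\widehat{\mathbf K}_{iT}-\mathbf K_i)\boldsymbol\xi_i\|\le\max_i\|\widehat{\mathbf K}_{iT}-\mathbf K_i\|\cdot O_p(\sqrt{N/T})=o_p(1)$.

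Two smaller points. First, $\mathbf D_{i,-\tau}$ never enters the score; it only perturbs the PCA input, so it belongs to (b). Second, for $\widehat{\boldsymbol\Sigma}_\eta$, do not rely on second moments of $\widehat{\boldsymbol\theta}_i$, which IV estimators need not have. Instead bound $\|\widehat{\boldsymbol\theta}_i-\boldsymbol\theta_i\|\le\max_j\|\widehat{\mathbf K}_{jT}\|\,\|T^{-1}\widehat{\mathbf Z}_i'\mathbf M_{\widehat{\mathbf F}}\mathbf u_i\|$ and average the squared scores.
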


\begin{proof}
See Online~\ref{reg_conditions_lemmas}.
\end{proof}

\noindent
We also summarize the heterogeneous macro-risk sensitivities using the Mean Group estimator
\begin{equation}
\widehat{\boldsymbol{\gamma}}
=
\frac{1}{N}
\sum_{i=1}^{N}
\widehat{\boldsymbol{\gamma}}_i,
\end{equation}
with its variance-covariance matrix estimated from the cross-sectional dispersion of the commodity-specific estimates, analogously to Eq.~\eqref{MG_varcov}. The post-selection oracle property established by \citet{KapetaniosEtal2026} provides the basis for  inference following the BMT selection step.

\begin{remark}
\textnormal{
In the empirical implementation, we also consider a Median Group estimator as a robustness check. This replaces the cross-sectional mean with the cross-sectional median of the commodity-specific coefficients and is less sensitive to outliers and heavy-tailed heterogeneity. Results are reported in Online~\ref{sec:Appendix_Robustness}.
}
\end{remark}

\section{Data \label{sec:data}}

Our sample consists of $N=52$ exchange-traded commodity futures contracts obtained from LSEG Datastream (Eikon). For each commodity we retrieve the daily settlement price of the nearest-maturity  series (Datastream mnemonic suffix \texttt{TRc1}) together with the corresponding open interest, defined as the total number of contracts entered into and not yet liquidated, that is, the aggregate purchase or sale commitment outstanding at the close of trading.

The cross-section is deliberately broad along three dimensions. First, it spans the three conventional sectors of the commodity complex: agriculture and livestock (24 contracts), energy and petrochemicals (14 contracts), and metals (14 contracts). Second, it spans venues in North America (CBOT, CME, CSCE, KCBT, NYMEX, COMEX, NYCE), Europe (ICE, LIFFE, LME), and Asia (DCE, ZCE, MCX, NCDEX, KLSE, SICOM, GME). Third, and most important for our purposes, it deliberately mixes contracts that are constituents of the major investable commodity indices, such as WTI and Brent crude, gold, copper, and the LME base metals, with contracts that lie outside the index universe and are traded predominantly by local commercial participants, such as cardamom, kapas, mentha oil, barley and maize on the Indian exchanges, eggs on the Dalian Commodity Exchange, and the Chinese ferroalloys. This design is central to the identification of heterogeneous exposures: the financialisation literature predicts that index membership, rather than physical characteristics alone, governs the strength of an asset's comovement with the commodity market as a whole \citep{TangXiong2012,BasakPavlova2016}, and a sample restricted to index constituents cannot test that prediction. Table~\ref{tab:commodity_list} lists the full cross-section by sector, settlement currency, and Datastream series.

Our classification assigns the petrochemical contracts, namely linear low-density polyethylene, polypropylene, polyvinyl chloride, and methanol, to the energy sector, since their production costs and price dynamics are driven principally by crude oil, naphtha, and coal feedstocks rather than by agricultural or metallurgical fundamentals. We assign the ferroalloys (ferrosilicon, silicon manganese) and the steel complex (iron ore, hot-rolled coil steel) to metals. Livestock and dairy contracts are grouped with agriculture.

The sample runs at weekly frequency from August 2014 to April 2026, yielding a balanced panel of $T=611$ weekly observations per commodity and $NT=31,772$ observations in total. The start date is dictated by the availability of continuous price and open-interest histories for the Asian contracts, which are the most recently listed in the cross-section.

Thirteen of the fifty-two contracts settle in currencies other than the US dollar: the MCX and NCDEX contracts in Indian rupees, the DCE and ZCE contracts in Chinese renminbi, and the KLSE palm oil contract in Malaysian ringgit. We convert all settlement prices to US dollars at the contemporaneous spot exchange rate before computing returns, realised moments, and market aggregates, so that the panel is expressed throughout in a single numeraire and the cross-section is directly comparable. This is the convention of the commodity asset-pricing literature \citep{GortonRouwenhorst2006,TangXiong2012}, and it has the natural interpretation that returns are those realised by an unhedged dollar-based investor.

{color{red} Based on daily and weekly data, returns and higher moments of both individual contracts and market measures are computed, as detailed in Online Appendix A. }

Stage 2 of our sequential approach involves projecting dominant principal components onto a high-dimensional set of candidate macro-financial risk indicators. To this end, we collect 36 variables that serve as proxies for a range of  risk channels, including uncertainty, sentiment, market attention, and global financial conditions. These variables have been carefully assembled from a wide array of influential papers, and to the best of our knowledge this is the first study to employ such a diverse set of indicators within a single framework.
The full set of these variables and their definitions are listed in Table~\ref{tab:meta-variables}.
This list is by no means exhaustive. Additional indices not considered here include the global financial uncertainty index by \cite{Caggiano2023} (only available up to 2020);  and the global financial cycle index by \cite{Miranda-Agrippino2020} (only recently extended to include recent years).


\section{Results \label{sec:results}}

\subsection{Micro, Market, and Macro Risk Sensitivities \label{sec:risk_sensitivities}}

Table~\ref{tab:MG_sensitivities} reports Mean Group estimates of the commodity return sensitivities to the micro, market, and macro risk sources identified by our two-stage procedure. Results are shown for the full sample and separately for agriculture, energy, and metals.

At the micro level, in the full sample, commodity-specific volatility is negatively associated with returns (\(-0.047\)). This negative contemporaneous association is consistent with related evidence linking volatility and commodity returns \citep{FuertesMiffreFernandezPerez2015} and, more broadly, with the negative contemporaneous relationship documented between unexpected stock returns and unexpected changes in market volatility \citep{FrenchSchwertStambaugh1987}. This relationship is strongest for energy (\(-0.110\)), weaker for agriculture (\(-0.052\)), and absent for metals. Commodity-specific skewness is positively and highly significantly associated with returns in all three sectors, with remarkably similar coefficients ranging from \(0.019\) to \(0.025\). The positive coefficient indicates that, conditional on market-level distributional characteristics, weeks characterised by an increase in the right-skewness of an individual commodity’s return distribution are associated with higher contemporaneous returns.\footnote{This contemporaneous sensitivity should not be interpreted as evidence on a skewness risk premium. Existing asset-pricing studies, such as \citet{FernandezPerezFrijnsFuertesMiffre2018}, examine whether skewness predicts subsequent futures returns. The two objects address different economic questions and need not have the same sign.} Thus, while asymmetry in the commodity's own return distribution is systematically related to contemporaneous returns, the corresponding average sensitivity to own tail thickness is considerably weaker. Open interest has no statistically significant Mean Group coefficient in any sector. This is an average cross-commodity result and does not imply that open interest is irrelevant for every individual commodity.

Market conditions have a strong association with individual commodity returns. The leave-one-out market return enters positively and significantly in every sector, with a full-sample sensitivity of \(0.509\). The association is particularly strong for energy, where the coefficient exceeds one (\(1.038\)), compared with \(0.489\) for metals and \(0.213\) for agriculture. This strong comovement is consistent with evidence that commodity returns contain an important common component \citep{ChristoffersenLundeOlesen2019,DieboldLiuYilmaz2018}. The sectoral estimates further show that this common market component is especially important for energy returns.

Market volatility is significantly positively associated with returns in the energy sector (\(0.198\)), but significantly negatively associated with returns in the metals sector (\(-0.067\)). The difference in signs is consistent with the distinct exposure of these sectors to commodity-market stress. Energy markets may be particularly sensitive to episodes in which supply disruptions simultaneously increase uncertainty and prices, whereas metals may be more closely exposed to the deterioration in global demand conditions that often accompanies broader periods of market stress. This interpretation is also consistent with the commodity-market connectedness evidence of \citet{DieboldLiuYilmaz2018}, who identify energy as an important transmitter of volatility shocks within commodity markets. By contrast, the corresponding coefficient for agriculture is small and statistically insignificant. At the aggregate level, the opposing sensitivities of energy and metals may therefore contribute to the lack of statistical significance of the pooled market-volatility coefficient. The opposing signs can also be interpreted in the context of the layered specification. Conditional on market volatility, higher commodity-specific volatility represents an increase in risk that is relatively concentrated in the individual commodity. Conversely, conditional on a commodity’s own volatility, higher market volatility captures a change in broader risk conditions affecting other commodities in the relevant market. Thus, although the two volatility measures may be related, they capture different dimensions of volatility exposure once each is conditioned on the other. Consequently, there is no theoretical requirement for them to exhibit the same association with contemporaneous returns.

This distinction between commodity-specific and market-level risk is even more systematic when considering higher moments. Market skewness enters negatively and highly significantly in all three sectors, in contrast to the positive coefficient on commodity-specific skewness. Conditional on market skewness, an increase in a commodity’s own skewness captures a change in the asymmetry of its return distribution that is specific to that commodity. Conversely, conditional on its own skewness, an increase in market skewness reflects a change in the asymmetry of return distributions elsewhere in the sector. The opposing coefficients therefore indicate that the association between distributional asymmetry and returns depends on the level at which the change in skewness occurs. In other words, the economic relevance of higher-moment variation is not determined solely by whether returns become more or less skewed, but also by whether the change in skewness is commodity-specific or occurs at the broader market level. A similar, although less pronounced, distinction emerges for kurtosis: commodity-specific kurtosis is weakly negative, whereas market kurtosis is positive and highly significant across all three sectors. Taken together, these results provide empirical evidence that the level at which higher-moment variation occurs matters for its association with returns, supporting a distinction between commodity-specific and market-level higher-moment risk rather than treating skewness and kurtosis as undifferentiated measures of risk.

Before turning to the latent macro component, we assess the empirical validity of the Stage~1 identification strategy using standard IV diagnostics. Based on the stacked specification across commodities, the Kleibergen--Paap test rejects the null of underidentification ($LM=50.602$, $p=0.004$), while the Kleibergen--Paap rk Wald $F$-statistic is 276.612, providing no indication of weak identification. The panel over-identification test does not reject the null of valid moment conditions ($J_{NT}=28.610$, $p=0.329$).\footnote{The test builds directly on the commodity-specific over-identification test developed in Theorem 3: letting $p_{iT}$ denote the p-value of the test for commodity $i$, the individual p-values are combined using the Fisher statistic ($J_{NT}=-2\sum_{i=1}^{N}\ln p_{iT}$), with inference obtained using a cross-sectional bootstrap to accommodate dependence across commodities.} This result is particularly relevant in our setting because instrument validity relies on the spanning condition in Assumption~2: if the extracted common factor space failed to absorb the latent macro component, the defactored moment conditions would generally be violated.

PCA identifies a single dominant component in the first-stage residuals, accounting for 91.4\% of total residual variation. BMT selects six observable macro-financial indicators: the broad U.S. dollar index (DTWEXBGS), Treasury-market volatility (MOVE), consumer sentiment (UMCSENT), five-year expected inflation (EXPINF5YR), equity-market volatility (VIX), and oil-related geopolitical risk. The bottom panel of Table~\ref{tab:MG_sensitivities} reports Mean Group estimates of the corresponding sensitivities. All six indicators exhibit statistically significant average sensitivities, with consistent signs across the agriculture, energy, and metals sectors. This degree of cross-sector consistency is particularly noteworthy given the substantially greater heterogeneity observed in some of the market-level sensitivities.

The positive sensitivity to the dollar index warrants a conditional interpretation. The conventional negative relationship between the U.S. dollar and commodity prices is generally associated with pricing and invoicing effects, as well as global demand channels, that operate through broad movements in commodity markets. In our Stage 2 specification, however, these market-level effects have already been removed. Consequently, the estimated dollar coefficient should not be interpreted as capturing the conventional unconditional dollar–commodity relationship. Rather, it measures the association between movements in the dollar and the residual common component of commodity returns, conditional on—and after accounting for—the observed market-level effects. Under this interpretation, the dollar may also reflect its role as a global financial risk factor, operating through financial conditions and the risk-taking capacity of market participants. This interpretation is consistent with the evidence in \citet{AvdjievEtal2019}, who emphasize the broader role of the U.S. dollar in global financial conditions.

This interpretation is reinforced by a robustness exercise that addresses a potential measurement concern. Because all settlement prices are converted to U.S. dollars prior to the computation of returns, one might suspect that the estimated dollar exposure reflects
currency translation rather than an economic channel, since for contracts denominated in foreign currency the dollar return embeds the movement of the local currency against the dollar. We therefore re-estimate Stage 2 separately for the thirty-nine dollar-denominated
contracts and for the thirteen contracts settling in Chinese renminbi, Indian rupees, or Malaysian ringgit. The dollar exposure remains positive and highly significant in both subsamples (see Table \ref{tab:robustness_dollar}, at $0.0026$ ($z=10.50$) and $0.0033$ ($z=11.38$) respectively. The former
estimate is decisive for the concern at hand: for contracts that already settle in dollars no translation term can arise by construction, so the positive loading cannot be a measurement artefact. The modest difference between the two estimates is not specific to the dollar. All six macro coefficients, together with the constant, are larger in absolute value in the foreign-currency subsample by a nearly uniform factor of approximately $1.29$
(standard deviation $0.03$ across the seven parameters), and the dollar coefficient relative to the VIX coefficient is virtually identical in the two groups ($1.07$ and $1.04$). This pattern is consistent with a difference in the scale of the residual component rather than with a currency-specific channel, as these locally traded contracts exhibit weak market-level exposures in Stage 1 and therefore retain more residual variation.

Financial-market uncertainty provides a particularly informative interpretation of the latent macro-financial component. Both the VIX and MOVE enter with negative and statistically significant sensitivities, indicating that heightened uncertainty in equity and Treasury markets is associated with lower commodity returns, even after conditioning on the observed commodity-market component. This pattern is consistent with the latent factor captures a dimension of financial stress that is not fully reflected in broad commodity-market movements. The result is consistent with the broader literature documenting stronger linkages between commodity and financial markets during periods of heightened uncertainty. \citet{SilvennoinenThorp2013}, for example, show that increases in the VIX are associated with higher commodity-return volatility and stronger commodity–equity correlations, particularly for a substantial fraction of commodity–equity pairs. \citet{ChengKirilenkoXiong2015} provide complementary evidence from investor behaviour, documenting reductions in financial traders’ net-long positions in commodity futures in response to heightened market distress. \citet{AdamsGluck2015} likewise document persistent comovement between commodity and equity markets in the post-financialization period. Finally, \citet{ChristoffersenLundeOlesen2019} identify a pronounced common factor in commodity futures volatility that is related to stock-market volatility and the business cycle. Our estimates complement this evidence by showing that the financial-stress channel extends to Treasury-market volatility and remains present after the observed commodity-market component has been removed. The negative sensitivities to both VIX and MOVE therefore suggest that the latent factor captures a residual dimension of financial conditions that operates across commodity sectors beyond the variation explained by the common observed market layer.

Consumer sentiment also exhibits a negative and statistically significant sensitivity across all three commodity sectors. This coefficient warrants a similarly conditional interpretation. Because broad commodity-market movements have already been accounted for in Stage 1, the Stage 2 estimate should not be interpreted as the unconditional effect of changes in consumer sentiment on commodity returns. Rather, it captures the association between consumer sentiment and the residual common component of commodity returns, conditional on the variation already explained by the observed commodity-market component. Economically, this residual relationship may reflect changes in risk appetite, expectations, or broader perceptions of economic conditions that are not fully captured by movements in the commodity market as a whole.

Expected inflation provides a distinct perspective on commodity price dynamics. The literature on the inflation-hedging properties of commodities has traditionally focused on the relationship between commodity returns and realised or unexpected inflation, with evidence suggesting that the strength of this relationship varies substantially across commodity classes and over time. By contrast, EXPINF5YR is a forward-looking measure of medium-term inflation expectations and therefore captures a different dimension of the inflation environment. Its coefficient captures the association between changes in medium-term inflation expectations and the residual common component of commodity returns, conditional on the observed commodity-market layer. One possible interpretation is that higher inflation expectations convey information about anticipated monetary and financial conditions. To the extent that stronger inflation expectations are accompanied by expectations of tighter monetary policy and higher real interest rates, they may exert downward pressure on commodity prices through several channels, including the opportunity cost of holding inventories, expectations of future commodity demand, and portfolio reallocation. This interpretation is consistent with the literature emphasizing the role of real interest rates and monetary policy in commodity-price determination. Frankel (2008), for example, highlights the role of real interest rates in determining real commodity prices, while more recent evidence documents cost-of-carry, expected-demand, and financial-market channels through which monetary policy affects commodity prices.

Finally, oil-related geopolitical risk enters positively in all sectors, with the largest coefficient for energy. This is consistent with supply-disruption and precautionary-demand channels through which geopolitical tensions raise commodity prices, particularly for energy-intensive markets; see, for example, \citet{CaldaraIacoviello2022}. The larger energy sensitivity is economically intuitive given the direct exposure of oil markets to geopolitical disruptions.

Taken together, the Stage~2 estimates point to a common macro-financial component that is broad in scope but remarkably consistent across sectors. The signs should be interpreted jointly and conditionally, because the selected indicators form an observable representation of residual common variation after the micro and market components have been removed.

\begin{table}[!htbp]
  \centering
  \caption{Mean Group Estimates of Micro, Market, and Macro Risk Exposures}
  \small
  \setlength{\tabcolsep}{5pt}
  \renewcommand{\arraystretch}{0.95}
  \begin{tabular}{lcccc}
    \toprule
      & (1) & (2) & (3) & (4) \\
      & Full Sample & Agriculture & Energy & Metals \\
    \midrule

    \multicolumn{5}{l}{\textit{Stage 1: Micro and Market Risk}} \\[0.3em]

VLTY
    & -0.047$^{**}$  & -0.052$^{*}$   & -0.110$^{***}$ & 0.024 \\
    & (0.022)        & (0.031)         & (0.036)         & (0.044) \\

SKEW
    & 0.021$^{***}$  & 0.021$^{***}$  & 0.025$^{***}$  & 0.019$^{***}$ \\
    & (0.001)        & (0.001)         & (0.003)         & (0.001) \\

KTS
    & -0.000          & -0.001$^{**}$  & -0.000          & -0.000 \\
    & (0.000)         & (0.000)         & (0.001)         & (0.000) \\

OI
    & 0.000           & 0.000           & 0.000           & 0.000 \\
    & (0.000)         & (0.000)         & (0.001)         & (0.000) \\

$r_{-i}$
    & 0.509$^{***}$   & 0.213$^{***}$  & 1.038$^{***}$   & 0.489$^{***}$ \\
    & (0.071)         & (0.040)         & (0.163)         & (0.094) \\

CVLTY
    & 0.047           & 0.027           & 0.198$^{***}$   & -0.067$^{**}$ \\
    & (0.030)         & (0.046)         & (0.050)         & (0.034) \\

CSKEW
    & -0.073$^{***}$  & -0.079$^{***}$ & -0.076$^{***}$  & -0.060$^{***}$ \\
    & (0.005)         & (0.008)         & (0.011)         & (0.005) \\

CKTS
    & 0.008$^{***}$   & 0.008$^{***}$  & 0.008$^{***}$   & 0.006$^{***}$ \\
    & (0.001)         & (0.001)         & (0.001)         & (0.001) \\

COI
    & 0.002$^{***}$   & 0.003$^{***}$  & 0.002$^{***}$   & 0.000 \\
    & (0.001)         & (0.001)         & (0.001)         & (0.001) \\
    
    \midrule
    \multicolumn{5}{l}{\textit{Stage 2: Economy-Wide Risk}} \\[0.3em]

DTWEXBGS
    & 0.0028$^{***}$ & 0.0032$^{***}$ & 0.0027$^{***}$ & 0.0023$^{***}$ \\
    & (0.0002)        & (0.0003)        & (0.0004)        & (0.0002) \\

MOVE
    & -0.0006$^{***}$ & -0.0007$^{***}$ & -0.0006$^{***}$ & -0.0005$^{***}$ \\
    & (0.0000)         & (0.0001)         & (0.0001)         & (0.0000) \\

UMCSENT
    & -0.0009$^{***}$ & -0.0010$^{***}$ & -0.0010$^{***}$ & -0.0008$^{***}$ \\
    & (0.0001)         & (0.0001)         & (0.0001)         & (0.0001) \\

EXPINF5YR
    & -0.0372$^{***}$ & -0.0395$^{***}$ & -0.0401$^{***}$ & -0.0303$^{***}$ \\
    & (0.0026)         & (0.0041)         & (0.0059)         & (0.0031) \\

VIX
    & -0.0026$^{***}$ & -0.0028$^{***}$ & -0.0029$^{***}$ & -0.0021$^{***}$ \\
    & (0.0002)         & (0.0003)         & (0.0003)         & (0.0002) \\

GPR Oil
    & 0.00003$^{***}$ & 0.00003$^{***}$ & 0.00004$^{***}$ & 0.00002$^{***}$ \\
    & (0.00000)        & (0.00000)        & (0.00000)        & (0.00000) \\

    \midrule
Number of commodities
    & 52 & 24 & 14 & 14 \\

Observations
    & 31,616 & 14,592 & 8,512 & 8,512 \\

    \bottomrule
   \end{tabular}

  \vspace{0.4em}
  \begin{minipage}{0.98\textwidth}
    \scriptsize
    \textit{\textbf{Notes}:}
    The table reports Mean Group estimates for the full sample and separately
    for agricultural, energy, and metal commodities. Stage~1 reports
    heterogeneous exposures to commodity-specific (micro) and observed
    market-level risk sources. Stage~2 reports heterogeneous exposures to the
    macro-financial variables selected by the PCA--BMT procedure from the
    commodity-specific IV residuals. Standard errors are reported in
    parentheses. Significance levels are
    $^{*}$ 10\%, $^{**}$ 5\%, and $^{***}$ 1\%.
  \end{minipage}

  \label{tab:MG_sensitivities}
\end{table}

\subsection{Risk Intensity Indices \label{sec:risk_indices}}

The coefficient estimates establish economically distinct sensitivities to micro, market, and macro risk, but they do not by themselves reveal the realised importance of each risk layer. A large sensitivity may be associated with a risk condition that is close to its typical level, whereas a more moderate sensitivity may be consequential when the corresponding risk condition is unusually pronounced. We therefore turn to the Risk Intensity Indices (RIIs), which combine the estimated heterogeneous sensitivities with prevailing risk conditions and place the three risk layers on a common scale. The RIIs should be interpreted as measures of realised risk intensity conditional on the estimated sensitivities and observed risk conditions, rather than as structural variance decompositions or causal attributions of commodity returns. Their shares therefore describe the composition of the RII itself.

Figure~\ref{fig:component_rii} plots the aggregate Micro, Market, and Macro RIIs over time. The three indices display markedly different patterns of time variation. Most strikingly, the Market RII exhibits pronounced short-lived spikes around episodes of severe commodity-market disruption. The first major episode occurs in April 2020, during the COVID-19 shock and the extraordinary dislocation in oil markets. In particular, the sharp increase around 22 April closely follows the collapse of the May WTI futures contract into negative territory on 20 April, amid the severe demand and storage pressures prevailing at the time. The Market RII rises again sharply in late 2021, with the spike around 1 December coinciding with the emergence of the Omicron variant and a period of already elevated stress in global energy markets. The largest movements around 2022 are also consistent with the index capturing periods of broad commodity-market stress. The exceptionally large spike in early March occurs immediately following Russia’s invasion of Ukraine, which generated substantial uncertainty and supply concerns across energy, agricultural commodities, and industrial metals. Its timing also coincides closely with the extraordinary disruption in the nickel market, culminating in the suspension of nickel trading by the London Metal Exchange on 8 March. A further increase around November 2022 occurred amid aggressive global monetary tightening, a strong U.S. dollar, weakening global growth expectations, and uncertainty surrounding Chinese commodity demand, providing a different macro-financial context for elevated commodity-market risk. More recently, the pronounced spike in April 2025 coincides with the sharp escalation in U.S. tariff and trade-policy uncertainty and the associated adjustment in oil and industrial-metal markets. The renewed elevation toward the end of the sample appears broader in nature and is therefore more appropriately interpreted as a period of heightened commodity-market turbulence rather than attributed to a single event.

These episodes are informative about the economic content of the Market RII. Its largest movements coincide with shocks that affected several segments of the commodity complex, including energy, metals, and agricultural markets. The resulting pattern is consistent with the Market RII assigning greater intensity to periods in which market-wide commodity risk conditions become unusually pronounced. The Micro RII also increases during some of these episodes, but its movements are generally less extreme, consistent with a more stable contribution from commodity-specific risk alongside occasional episodes of heightened idiosyncratic stress. The differences across the indices should nevertheless be interpreted as differences in the realised intensity of the estimated risk layers rather than as direct evidence that any particular event caused the movements in a given index.

The Macro RII displays a distinctly different temporal profile. Rather than the sharp spikes characterising the Market RII, it evolves more smoothly over the sample. It is relatively subdued around 2019–20, increases noticeably through 2021, and remains at a higher level over much of the subsequent period. This pattern is consistent with the macro-financial component varying more gradually than the market layer over the sample. Because the Macro RII is constructed from the selected macro-financial representation of the residual common component, however, its interpretation is conditional on the information set used to calculate that component. Micro risk lies between the market and macro patterns, displaying a relatively stable underlying intensity punctuated by occasional increases. Taken together, the three indices indicate that commodity risk varies not only in magnitude but also in the relative intensity of its micro, market, and macro-financial components.

\begin{figure}[htbp]
  \centering
  \includegraphics[width=0.90\textwidth]{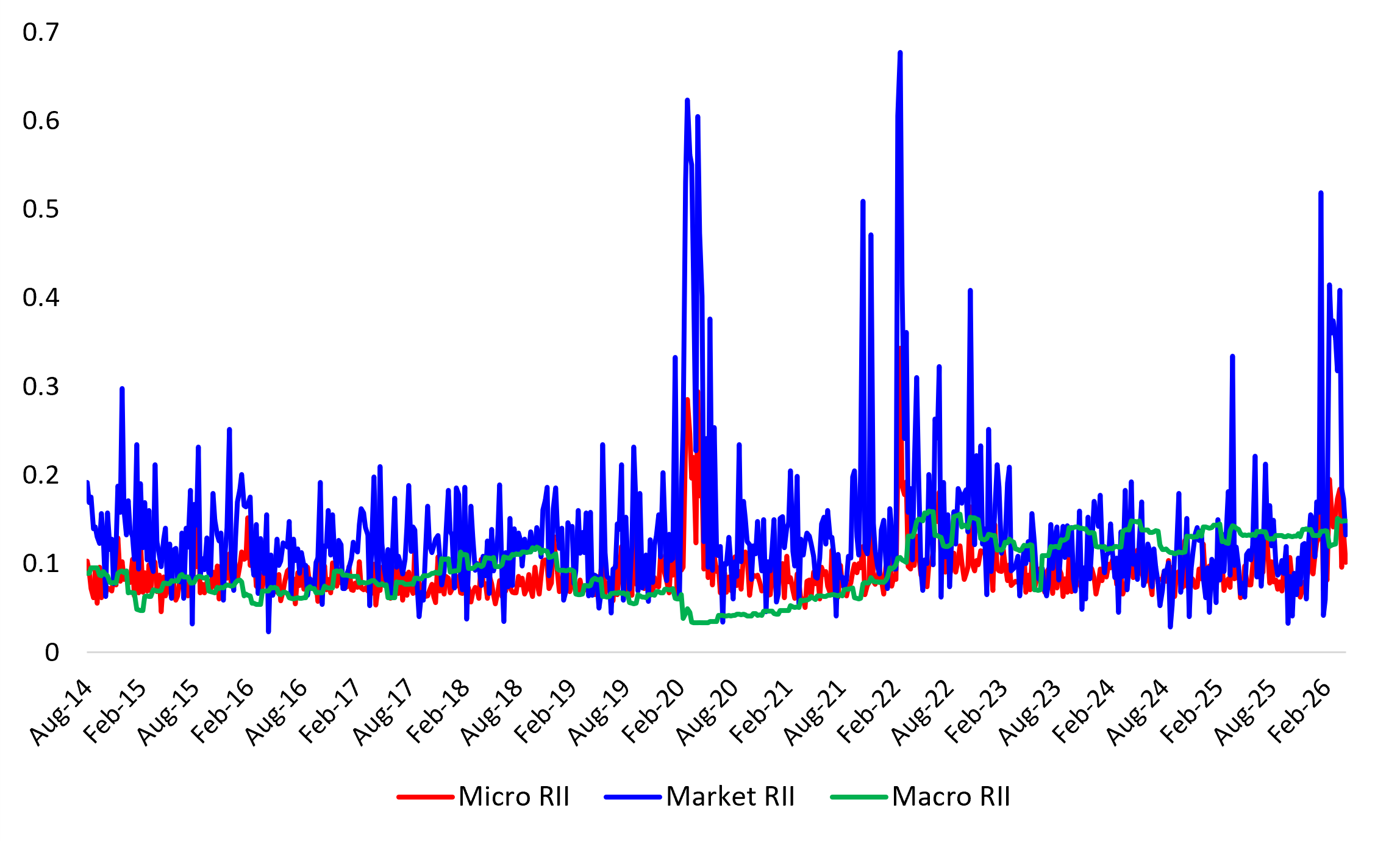}
  \caption{Total Risk Intensity Index (RII)}
  \label{fig:component_rii}
\end{figure}
\noindent

The changing composition of risk becomes clearer when the
three RIIs are expressed as shares of total RII, as in
Figure~2. Around the COVID-19 and oil-market
collapse,\footnote{This is consistent with evidence of
heightened connectedness and risk transmission across
commodity markets during the COVID-19 pandemic
\citep{FaridEtAl2022,QiaoHan2023}.} and again during the commodity-market disruptions surrounding Russia's invasion of Ukraine, the market layer accounts for a very large share of total risk intensity. These episodes therefore involve not only elevated commodity risk but also a pronounced shift in its composition toward market risk. Outside such episodes, the market share declines, while the relative contributions
of the micro and, particularly, macro RIIs become larger. The macro share becomes more prominent after 2021 and remains substantial throughout much of the latter part of the sample, whereas the micro share is comparatively stable. The composition of commodity risk intensity is therefore strongly state dependent. Importantly, these shifts occur although the estimated sensitivities are time-invariant, illustrating why sensitivities alone are insufficient to assess the relative importance of the three risk layers under prevailing market conditions.

\begin{figure}[htbp]
  \centering
  \includegraphics[width=0.90\textwidth]{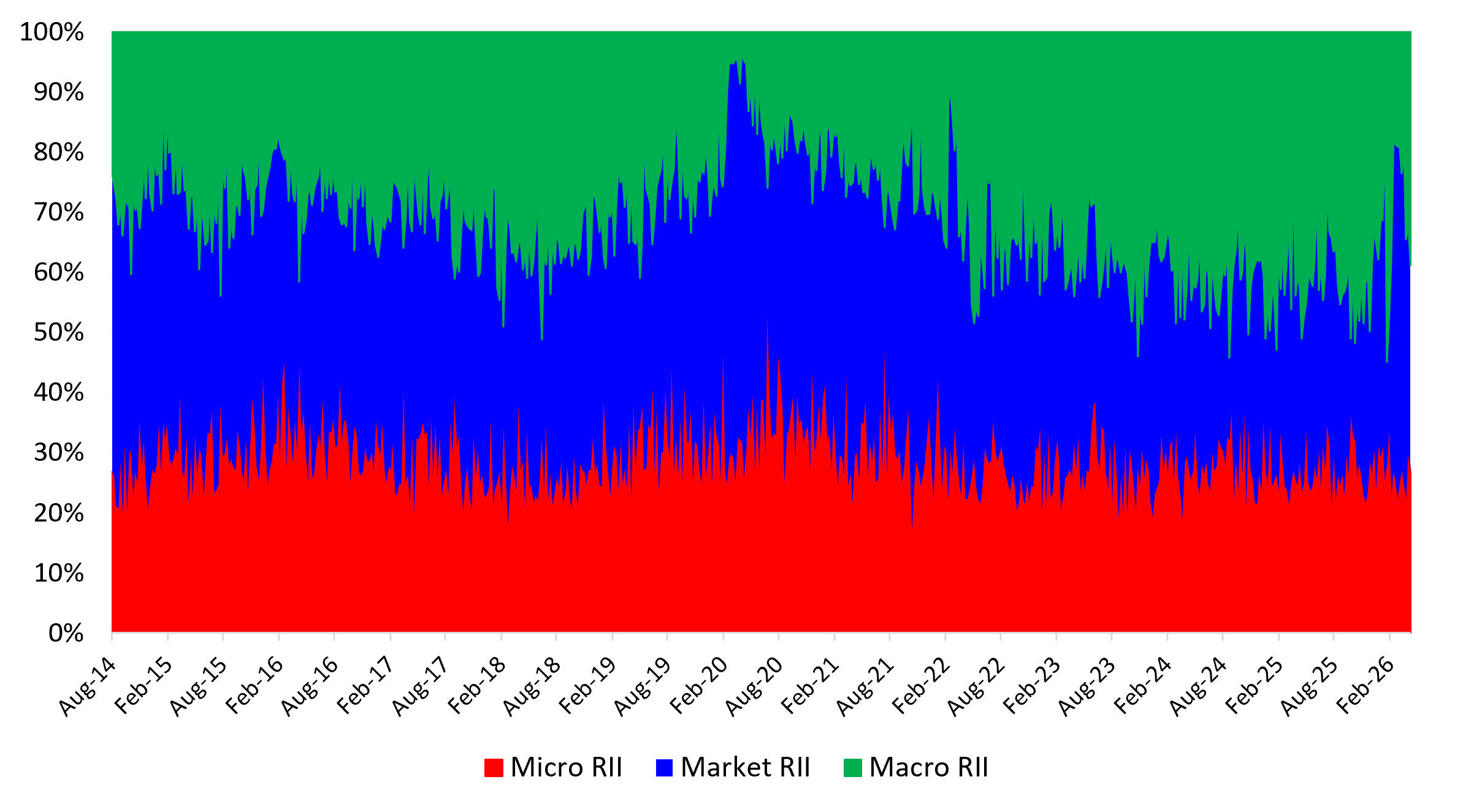}
  \caption{Share of Total RII(\%)}
  \label{fig:RII_shares}
\end{figure}
\noindent

Average risk composition also differs substantially across sectors. For the full sample, market risk accounts for 42.1\% of total RII, while macro and micro risk account for 29.7\% and 28.2\%, respectively. Thus, although the market layer is the largest component on average, the remaining RII is almost evenly divided between the macro and micro layers. These percentages should be interpreted as shares of total RII, rather than as percentages of return variance or as structural causal contributions.

The sectoral estimates reveal important differences behind these aggregate shares.
For energy, market risk accounts for 52.4\% of total risk intensity, compared with 26.2\% for micro risk and 21.4\% for macro risk. Energy is the only sector in which a single source accounts for more than half of total intensity. This strong market orientation is consistent with the pronounced market sensitivities documented in the previous subsection and reinforces the distinctive role of energy within commodity markets. Agriculture has a more balanced risk composition. Macro risk accounts for 36.1\% of total intensity, closely followed by market risk at 35.4\%, with micro risk accounting for 28.5\%. Agriculture is the only sector in which macro risk is the largest source on average, although its share is only marginally higher than that of market risk. For metals, market risk is again the largest source at 38.1\%, followed by micro risk at 33.0\% and macro risk at 28.9\%. The relatively narrow range across the three shares contrasts with the strong concentration of energy risk in the market layer. Sectoral differences therefore extend beyond the magnitude of risk intensity to its composition: energy is strongly market-oriented, agriculture assigns nearly equal importance to market and macro risk, while metals show the most even distribution across the three layers.

The level of total risk intensity also differs across sectors. The aggregate Total RII is 4.990 for energy, compared with 3.617 for agriculture and 2.740 for metals. These differences reflect the interaction between the prevailing risk conditions and the heterogeneous sensitivities estimated for individual commodities. In particular, a sector can exhibit a relatively high total RII because its commodities have stronger estimated sensitivities to one or more risk layers, because the associated risk conditions are more pronounced, or because both occur simultaneously. The RII framework therefore provides information that cannot be obtained from the coefficient estimates alone.

\begin{table}[htbp]
\centering
\caption{Risk Intensity Across Asset Groups: Micro, Market, Macro and Total Decomposition}
\label{tab:rii_groups}
\begin{tabular}{lccccccc}
\toprule
 & Full Sample & Green & Non-Green & Stable & Non-Stable & DeFi & Non-DeFi \\
\midrule
\multicolumn{8}{c}{\textbf{Micro Risk Intensity Index (RII)}} \\
 & 1.229 & 1.107 & 1.302 & 0.074 & 1.433 & 1.097 & 1.317 \\
 & 16.8\% & 13.8\% & 18.8\% & 2.9\% & 17.6\% & 16.0\% & 17.2\% \\
\addlinespace
\multicolumn{8}{c}{\textbf{Market Risk Intensity Index (RII)}} \\
 & 3.754 & 3.888 & 3.674 & 0.669 & 4.298 & 3.998 & 3.591 \\
 & 51.2\% & 48.6\% & 53.1\% & 26.1\% & 52.6\% & 58.5\% & 46.9\% \\
\addlinespace
\multicolumn{8}{c}{\textbf{Macro Risk Intensity Index (RII)}} \\
 & 2.343 & 3.002 & 1.947 & 1.819 & 2.435 & 1.743 & 2.742 \\
 & 32.0\% & 37.5\% & 28.1\% & 71.0\% & 29.8\% & 25.5\% & 35.8\% \\
\addlinespace
\multicolumn{8}{c}{\textbf{Total Risk Intensity Index (RII)}} \\
 & 7.326 & 7.997 & 6.923 & 2.562 & 8.167 & 6.838 & 7.651 \\
 & 100.0\% & 100.0\% & 100.0\% & 100.0\% & 100.0\% & 100.0\% & 100.0\% \\
\bottomrule
\end{tabular}
\end{table}

Cross-sectional concentration provides a further dimension of the RII results. Table~\ref{tab:rii_concentration} shows that risk intensity is concentrated among a relatively small share of commodities, particularly in the micro and market layers. For micro risk, the top 10\% of commodities account for 30.3\% of aggregate intensity and the top 20\% for almost half (48.8\%). At the other end of the distribution, the lowest 20\% account for only 4.8\% and the lowest 10\% for 2.2\%. Micro risk intensity is therefore highly concentrated across commodities.
Market risk exhibits a similar degree of concentration. The top 10\% of commodities account for 33.4\% of aggregate Market RII (the largest top-decile share among the three layers) and the top 20\% account for 45.9\%, compared with only 7.6\% for the lowest 20\%. This finding is particularly informative because the underlying market variables are common to commodities within each sector, yet their contribution to risk intensity differs substantially across commodities. Exposure to a common market environment therefore generates highly heterogeneous risk intensity through differences in commodity-specific sensitivities.

Macro risk is less concentrated, although sizeable differences remain across the distribution. The top 10\% and 20\% of commodities account for 26.6\% and 40.6\% of aggregate Macro RII, respectively, compared with 7.6\% for the lowest 20\%. Total risk intensity is less concentrated still: the top 20\% account for 40.1\% of total intensity, whereas the lowest 20\% account for 8.9\%. The concentration of macro risk is particularly noteworthy given the relatively similar Mean Group macro sensitivities across sectors reported in Table~\ref{tab:MG_sensitivities}. Similar sector-average sensitivities do not imply similar risk intensity across individual commodities, because sector averages can conceal substantial heterogeneity in the underlying commodity-specific sensitivities. When these heterogeneous sensitivities interact with common macro-financial conditions, they generate substantial cross-sectional differences in Macro RII. Common macro-financial risk can therefore be broad in scope while contributing differently to the realised risk intensity of individual commodities. This distinction illustrates why the RII measures add information beyond average sensitivity estimates: sensitivities describe exposure, whereas the RIIs quantify the resulting intensity conditional on the prevailing risk conditions.

Taken together, the RII results provide a decomposition of the composition and magnitude of realised commodity risk intensity across risk layers, sectors, individual commodities, and time. Market risk accounts for the largest share of aggregate RII and becomes particularly prominent during major commodity-market disruptions, while the relative contributions of the micro and macro layers are more stable outside such episodes. At the same time, substantial cross-sectional concentration remains even within a common risk environment, reflecting heterogeneous commodity-specific sensitivities. These findings support the interpretation of the three RIIs as complementary measures of distinct risk layers rather than interchangeable components of a single aggregate risk measure. The predictive implications of these measures are examined separately in Section 5.3.

\begin{table}[htbp]
\centering
\caption{Risk Intensity Concentration Across Upper and Lower Percentiles}
\label{tab:rii_concentration}
\begin{tabular}{lcccc}
\toprule
 & Top 10\% & Top 20\% & Lowest 20\% & Lowest 10\% \\
\midrule
\textbf{Micro RII}  & 29.1\% & 50.1\% & 98.4\% & 99.8\% \\
\textbf{Market RII} & 20.5\% & 37.9\% & 97.9\% & 99.9\% \\
\textbf{Macro RII}  & 30.4\% & 49.1\% & 98.1\% & 99.7\% \\
\textbf{Total RII}  & 20.6\% & 36.6\% & 97.1\% & 99.8\% \\
\bottomrule
\end{tabular}
\end{table}

\subsection{Predictive Content of the Risk Intensity Indices}
\label{sec:RII_prediction}

We next ask a distinct question from the contemporaneous decomposition in Section~5.2: whether the RIIs contain information about subsequent commodity risk. The purpose of this exercise is not to provide an alternative decomposition of the relative importance of the three risk layers, but to examine whether the risk conditions summarised by the RIIs have forward-looking content. We consider two measures of subsequent risk, volatility and absolute returns, at horizons of one week ($h=1$), four weeks ($h=4$), and twelve weeks ($h=12$). For \(h=1\), the dependent variable is the risk measure in the following week. For \(h=4\) and \(h=12\), we use the average risk measure over non-overlapping subsequent four- and twelve-week intervals, respectively, thereby avoiding mechanically induced serial correlation from overlapping observations. Each specification controls for the corresponding contemporaneous risk measure to account for persistence, includes commodity fixed effects, and uses standardized RIIs, so that the coefficients measure the association between a one-standard-deviation increase in current risk intensity and subsequent risk.

Table~\ref{tab:RII_predictive} reports the in-sample predictive regressions. The main result is that both the Micro and Market RIIs contain information about subsequent commodity risk beyond that contained in its current level. Their coefficients are positive and statistically significant for both volatility and absolute returns at all three horizons. By contrast, the coefficient on the Macro RII is small and statistically insignificant throughout. Thus, risk intensity originating at the commodity-specific and market levels has forward-looking content, whereas the economy-wide component, despite accounting for a substantial share of contemporaneous total risk intensity in Section~5.2, does not provide incremental information about subsequent volatility or absolute returns.

The relative magnitudes of the Micro and Market coefficients should not be over-interpreted. Although their point estimates display somewhat different patterns across horizons, equality of the two coefficients cannot be rejected in any of the six specifications, with $p$-values ranging from 0.121 to 0.845. Table~\ref{tab:RII_predictive} therefore does not establish that one of these risk layers is systematically more important than the other for predicting future risk. Its more robust implication is that both contain incremental forward-looking information after conditioning on current risk.

We assess out-of-sample predictive content using an expanding-window forecasting exercise with the Heterogeneous Autoregressive (HAR) model of \citet{Corsi2009} as the benchmark. The HAR model is well suited to this setting because it parsimoniously captures the strong persistence of volatility through components measured over different horizons. It therefore provides a demanding benchmark for assessing whether the RIIs contain predictive information beyond that already embedded in past volatility. Given the strong short-horizon predictive content of the Market RII in the in-sample volatility regressions, we augment the HAR specification with this index.

\begin{table}[!htbp]
\centering
\caption{In-Sample Predictive Content of the Risk Intensity Indices}
\label{tab:RII_predictive}
\begin{tabular}{lcccccc}
\toprule
& \multicolumn{3}{c}{Volatility}
& \multicolumn{3}{c}{Absolute Return} \\
\cmidrule(lr){2-4}\cmidrule(lr){5-7}
& $h=1$ & $h=4$ & $h=12$
& $h=1$ & $h=4$ & $h=12$ \\
\midrule

Micro RII
& 0.083$^{*}$
& 0.083$^{***}$
& 0.102$^{***}$
& 0.178$^{***}$
& 0.125$^{***}$
& 0.107$^{***}$ \\
& (0.047)
& (0.027)
& (0.030)
& (0.050)
& (0.033)
& (0.028) \\[0.3em]

Market RII
& 0.212$^{***}$
& 0.149$^{***}$
& 0.047$^{**}$
& 0.163$^{***}$
& 0.143$^{***}$
& 0.058$^{***}$ \\
& (0.046)
& (0.031)
& (0.021)
& (0.045)
& (0.027)
& (0.021) \\[0.3em]

Macro RII
& -0.016
& -0.014
& -0.020
& -0.031
& -0.026
& -0.022 \\
& (0.043)
& (0.045)
& (0.048)
& (0.049)
& (0.050)
& (0.051) \\

\midrule
Current risk & Yes & Yes & Yes & Yes & Yes & Yes \\
Commodity fixed effects & Yes & Yes & Yes & Yes & Yes & Yes \\
\midrule
$p$-val.: Market = Micro
& 0.121 & 0.127 & 0.183
& 0.845 & 0.632 & 0.187 \\
Commodities
& 52 & 52 & 52 & 52 & 52 & 52 \\
Observations & 31,720 & 7,904 & 2,600 & 31,720 & 7,904 & 2,600 \\
\bottomrule
\end{tabular}

\vspace{0.5em}
\begin{minipage}{0.98\textwidth}
\footnotesize
\textit{\textbf{Notes}:} The table reports fixed-effects regressions of future commodity risk on the standardized Micro, Market, and Macro Risk Intensity Indices (RIIs). Coefficients and standard errors are multiplied by 100 and therefore express changes in future risk in percentage points. For $h=1$, the dependent variable is risk in the following week. For $h=4$ and $h=12$, it is average risk over non-overlapping subsequent four- and twelve-week intervals, respectively. The use of non-overlapping intervals avoids mechanically inducing serial correlation through overlapping multi-period outcomes. All specifications control for the corresponding contemporaneous risk measure and commodity fixed effects. Standard errors, clustered at the commodity level, are reported in parentheses. The reported $p$-values correspond to tests of equality between the coefficients on the Market and Micro RIIs. Significance levels: $^{*}$ 10\%, $^{**}$ 5\%, $^{***}$ 1\%.
\end{minipage}
\end{table}

Table~\ref{tab:oos_forecasts} reports the out-of-sample forecasting results. The Clark--West test indicates that the Market RII contains incremental predictive information at short horizons. Using the conventional one-sided test, the null of no incremental predictive ability is rejected at the 5\% level for $h=1$ ($p=0.029$) and at the 10\% level for $h=4$ ($p=0.075$), but not at $h=12$ ($p=0.539$). Thus, the out-of-sample evidence supports the in-sample finding that the Market RII contains forward-looking information, particularly at short horizons.\footnote{The positive Clark--West statistics at the shorter horizons are compatible with the RMSE ratios being close to, but slightly above, one and the out-of-sample $R^2$ values being slightly negative. Because HAR is nested within HAR+RII, estimation of the additional RII coefficient introduces estimation noise under the null that its population coefficient is zero, which can raise the raw mean squared forecast error of the larger model. The Clark--West statistic adjusts the forecast-error differential for this effect and therefore tests a different null from an unadjusted comparison of forecast losses.}

\begin{table}[htbp]
\centering
\caption{Out-of-Sample Forecast Performance: HAR versus HAR + Market RII}
\label{tab:oos_forecasts}
\small
\begin{tabular}{l@{\hspace{1.2cm}}c@{\hspace{1.4cm}}c@{\hspace{1.4cm}}c}
\hline\hline
                              & $h=1$  & $h=4$  & $h=12$ \\
\hline
HAR RMSE                      & 0.0254 & 0.0157 & 0.0125 \\
HAR + RII RMSE                & 0.0254 & 0.0157 & 0.0126 \\
RMSE ratio                    & 1.0000 & 1.0008 & 1.0017 \\
                              &        &        &        \\
HAR MAE                       & 0.0149 & 0.0099 & 0.0084 \\
HAR + RII MAE                 & 0.0149 & 0.0100 & 0.0084 \\
MAE ratio                     & 1.0006 & 1.0026 & 0.9993 \\
                              &        &        &        \\
OOS $R^2$ (\%)                & -0.007 & -0.157 & -0.332 \\
                              &        &        &        \\
Clark--West $t$-statistic     & 1.911  & 1.446  & -0.098 \\
Clark--West $p$-value         & 0.029  & 0.075  & 0.539 \\
\hline\hline

\multicolumn{4}{p{0.82\textwidth}}{\footnotesize
\textit{Notes:} The benchmark model is the HAR specification. The augmented
model adds the Market Risk Intensity Index (RII). RMSE and MAE ratios
are defined as the loss of the HAR+RII model relative to the HAR model, so
values below one favour HAR+RII. OOS $R^2$ is measured relative to the HAR
benchmark. Clark--West $p$-values are one-sided.} \\

\end{tabular}
\end{table}

The magnitude of the associated gains in point-forecast accuracy is nevertheless small. The RMSE ratios are 1.0000, 1.0008, and 1.0017 at the one-, four-, and twelve-week horizons, respectively, while the MAE ratios are similarly close to one. The out-of-sample $R^2$ values are also close to zero. Hence, although the Clark--West test identifies incremental predictive information in the Market RII at the shorter horizons, incorporating this information produces little change in conventional point-forecast accuracy relative to the HAR benchmark.

\section{Concluding Remarks \label{sec:conclusion}}
This paper studies the sources of commodity risk and their relative importance across commodities and over time. We distinguish three layers: commodity-specific conditions, conditions prevailing elsewhere in the commodity market, and broader macro-financial conditions. The empirical challenge is that sensitivities to these sources are heterogeneous, while the relevant macro-financial component is not directly observed.
We address this challenge using a two-stage ``divide-and-conquer'' framework that accommodates heterogeneous sensitivities to observed micro and market risk and identifies the latent macro-financial component from a high-dimensional set of candidate indicators.

The results show that the three layers play economically different roles. Market sensitivities vary substantially across sectors, with particularly strong market exposure in energy, whereas sensitivities to the selected macro-financial sources are much more similar across agriculture, energy, and metals. These coefficient estimates, however, provide only part of the picture. Once sensitivities are combined with prevailing risk conditions through the Risk Intensity Indices, substantial differences emerge in both the magnitude and composition of commodity risk across sectors, individual commodities, and time.
Market risk is the largest source of commodity risk intensity on average, accounting for roughly 42\% of the total, compared with approximately 30\% for macro risk and 28\% for micro risk. These averages conceal pronounced sectoral differences. Market risk accounts for more than half of total risk intensity in energy, whereas agriculture assigns almost equal importance to market and macro risk, and metals exhibit a more balanced composition across the three layers. Risk composition also changes markedly over time, with the market layer becoming dominant during major commodity-market disruptions.
Risk intensity is also highly concentrated across commodities. The top 20\% of commodities account for almost half of aggregate micro and market risk intensity and about 40\% of macro risk intensity. Importantly, this concentration persists even though average macro sensitivities are relatively similar across sectors. Common macro-financial conditions can therefore affect commodities very differently once heterogeneous commodity-specific sensitivities are taken into account.
The RIIs also contain information about future commodity risk. In-sample regressions show that both the Micro and Market RIIs predict future volatility and absolute returns after controlling for current risk, whereas the Macro RII has no incremental predictive content. Out of sample, the Market RII contains additional forecasting information at short horizons according to the Clark--West test, although any associated gains in point-forecast accuracy appear economically small. The predictive evidence therefore complements the contemporaneous decomposition: the RIIs characterize the composition of current commodity risk and, for the micro and market layers, also contain information about its future evolution.

Several extensions appear particularly useful. The observable representation of macro risk is necessarily conditional on the candidate macro-financial information set, and expanding this information set may uncover additional dimensions of common risk. The market layer could also be generalized beyond sector-based leave-one-out measures to incorporate economic or financial networks linking commodities. Finally, while the RIIs contain short-horizon predictive information, their value for portfolio allocation, hedging, and tail-risk forecasting remains to be explored.

More broadly, the divide-and-conquer approach offers a promising framework for studying layered risk beyond commodity markets. Many economic and financial settings involve risk generated at multiple levels, heterogeneous sensitivities across units, and common components that are latent rather than directly observed. Separating observed local and market sources from latent common risk provides a way to measure their relative importance without imposing homogeneous responses across units.

\begingroup
\singlespacing
\setlength{\bibsep}{1pt}
\setlength{\parskip}{1pt}
\bibliographystyle{plainnat}
\bibliography{commodities}

\begin{thebibliography}{37}
\providecommand{\natexlab}[1]{#1}
\providecommand{\url}[1]{\texttt{#1}}
\expandafter\ifx\csname urlstyle\endcsname\relax
  \providecommand{\doi}[1]{doi: #1}\else
  \providecommand{\doi}{doi: \begingroup \urlstyle{rm}\Url}\fi

\bibitem[Adams and Gl{\"u}ck(2015)]{AdamsGluck2015}
Zeno Adams and Thorsten Gl{\"u}ck.
\newblock Financialization in commodity markets: A passing trend or the new normal?
\newblock \emph{Journal of Banking \& Finance}, 60:\penalty0 93--111, 2015.
\newblock \doi{10.1016/j.jbankfin.2015.07.008}.

\bibitem[Ahir et~al.(2022)Ahir, Bloom, and Furceri]{AhirBloomFurceri2022}
Hites Ahir, Nicholas Bloom, and Davide Furceri.
\newblock {The World Uncertainty Index}.
\newblock NBER Working Papers 29763, National Bureau of Economic Research, Inc, February 2022.
\newblock URL \url{https://ideas.repec.org/p/nbr/nberwo/29763.html}.

\bibitem[Ahn and Horenstein(2013)]{AhnHorenstein2013}
Seung~C. Ahn and Alex~R. Horenstein.
\newblock Eigenvalue ratio test for the number of factors.
\newblock \emph{Econometrica}, 81\penalty0 (3):\penalty0 1203--1227, 2013.
\newblock \doi{10.3982/ECTA9628}.

\bibitem[Avdjiev et~al.(2019)Avdjiev, Bruno, Koch, and Shin]{AvdjievEtal2019}
Stefan Avdjiev, Valentina Bruno, Cath{\'e}rine Koch, and Hyun~Song Shin.
\newblock The dollar exchange rate as a global risk factor: Evidence from investment.
\newblock \emph{IMF Economic Review}, 67\penalty0 (1):\penalty0 151--173, 2019.
\newblock \doi{10.1057/s41308-019-00074-4}.

\bibitem[Bai(2009)]{Bai2009}
Jushan Bai.
\newblock {Panel Data Models with Interactive Fixed Effects}.
\newblock \emph{Econometrica}, 77\penalty0 (4):\penalty0 1229--1279, 2009.

\bibitem[Bai and Ng(2002)]{BaiNg2002}
Jushan Bai and Serena Ng.
\newblock Determining the number of factors in approximate factor models.
\newblock \emph{Econometrica}, 70\penalty0 (1):\penalty0 191--221, 2002.

\bibitem[Baker et~al.(2016)Baker, Bloom, and Davis]{BakerBloomDavis2016}
Scott~R. Baker, Nicholas Bloom, and Steven~J. Davis.
\newblock {Measuring Economic Policy Uncertainty}.
\newblock \emph{The Quarterly Journal of Economics}, 131\penalty0 (4):\penalty0 1593--1636, nov 2016.
\newblock ISSN 0033-5533.
\newblock \doi{10.1093/QJE/QJW024}.
\newblock URL \url{https://dx.doi.org/10.1093/qje/qjw024}.

\bibitem[Baker(2021)]{Baker2021}
Steven~D. Baker.
\newblock The financialization of storable commodities.
\newblock \emph{Management Science}, 67\penalty0 (1):\penalty0 471--499, 2021.
\newblock \doi{10.1287/mnsc.2019.3445}.

\bibitem[Bakshi et~al.(2019)Bakshi, Gao, and Rossi]{BakshiGaoRossi2019}
Gurdip Bakshi, Xiaohui Gao, and Alberto~G. Rossi.
\newblock Understanding the sources of risk underlying the cross section of commodity returns.
\newblock \emph{Management Science}, 65\penalty0 (2):\penalty0 619--641, 2019.
\newblock \doi{10.1287/mnsc.2017.2840}.

\bibitem[Basak and Pavlova(2016)]{BasakPavlova2016}
Suleyman Basak and Anna Pavlova.
\newblock A model of financialization of commodities.
\newblock \emph{Journal of Finance}, 71\penalty0 (4):\penalty0 1511--1556, 2016.
\newblock \doi{10.1111/jofi.12408}.

\bibitem[Belloni et~al.(2012)Belloni, Chen, Chernozhukov, and Hansen]{BelloniEtal2012}
A.~Belloni, D.~Chen, V.~Chernozhukov, and C.~Hansen.
\newblock Sparse models and methods for optimal instruments with an application to eminent domain.
\newblock \emph{Econometrica}, 80:\penalty0 2369--2429, 2012.

\bibitem[Caggiano and Castelnuovo(2023)]{Caggiano2023}
Giovanni Caggiano and Efrem Castelnuovo.
\newblock Global financial uncertainty.
\newblock \emph{Journal of Applied Econometrics}, 38:\penalty0 432--449, 4 2023.
\newblock ISSN 0883-7252.
\newblock \doi{10.1002/jae.2958}.
\newblock URL \url{https://onlinelibrary.wiley.com/doi/10.1002/jae.2958}.

\bibitem[Caldara and Iacoviello(2022)]{CaldaraIacoviello2022}
Dario Caldara and Matteo Iacoviello.
\newblock {Measuring Geopolitical Risk}.
\newblock \emph{American Economic Review}, 112\penalty0 (4):\penalty0 1194--1225, apr 2022.
\newblock ISSN 0002-8282.
\newblock \doi{10.1257/aer.20191823}.
\newblock URL \url{https://pubs.aeaweb.org/doi/10.1257/aer.20191823}.

\bibitem[Chen et~al.(2025)Chen, Cui, Sarafidis, and Yamagata]{ChenEtal2025}
J.~Chen, G.~Cui, V.~Sarafidis, and T.~Yamagata.
\newblock Iv estimation of heterogeneous spatial dynamic panel models with interactive effects.
\newblock mimeo, 2025.

\bibitem[Cheng et~al.(2015)Cheng, Kirilenko, and Xiong]{ChengKirilenkoXiong2015}
Ing-Haw Cheng, Andrei Kirilenko, and Wei Xiong.
\newblock Convective risk flows in commodity futures markets.
\newblock \emph{Review of Finance}, 19\penalty0 (5):\penalty0 1733--1781, 2015.
\newblock \doi{10.1093/rof/rfu043}.

\bibitem[Christoffersen et~al.(2019)Christoffersen, Lunde, and Olesen]{ChristoffersenLundeOlesen2019}
Peter Christoffersen, Asger Lunde, and Kasper~V. Olesen.
\newblock Factor structure in commodity futures return and volatility.
\newblock \emph{Journal of Financial and Quantitative Analysis}, 54\penalty0 (3):\penalty0 1083--1115, 2019.
\newblock \doi{10.1017/S0022109018000765}.

\bibitem[Cochrane(2011)]{Cochrane2011}
J.~H. Cochrane.
\newblock Presidential address: Discount rates.
\newblock \emph{Journal of Finance}, 66\penalty0 (4):\penalty0 1047--1108, 2011.

\bibitem[Corsi(2009)]{Corsi2009}
Fulvio Corsi.
\newblock A simple approximate long-memory model of realized volatility.
\newblock \emph{Journal of Financial Econometrics}, 7\penalty0 (2):\penalty0 174--196, 2009.

\bibitem[Da et~al.(2024)Da, Tang, Tao, and Yang]{DaTangTaoYang2024}
Zhi Da, Ke~Tang, Yubo Tao, and Liyan Yang.
\newblock Financialization and commodity markets serial dependence.
\newblock \emph{Management Science}, 70\penalty0 (4):\penalty0 2122--2143, 2024.
\newblock \doi{10.1287/mnsc.2023.4797}.

\bibitem[Diebold et~al.(2018)Diebold, Liu, and Yilmaz]{DieboldLiuYilmaz2018}
Francis~X. Diebold, Laura Liu, and Kamil Yilmaz.
\newblock Commodity connectedness.
\newblock In Enrique~G. Mendoza, Ernesto Past{\'e}n, and Diego Saravia, editors, \emph{Monetary Policy and Global Spillovers: Mechanisms, Effects and Policy Measures}, volume~25, chapter~4, pages 97--136. Central Bank of Chile, Santiago, Chile, 2018.

\bibitem[Farid et~al.(2022)Farid, Naeem, Paltrinieri, and Nepal]{FaridEtAl2022}
Saqib Farid, Muhammad~Abubakr Naeem, Andrea Paltrinieri, and Rabindra Nepal.
\newblock Impact of {COVID-19} on the quantile connectedness between energy, metals and agriculture commodities.
\newblock \emph{Energy Economics}, 109:\penalty0 105962, 2022.
\newblock \doi{10.1016/j.eneco.2022.105962}.

\bibitem[Fernandez-Perez et~al.(2018)Fernandez-Perez, Frijns, Fuertes, and Miffre]{FernandezPerezFrijnsFuertesMiffre2018}
Adrian Fernandez-Perez, Bart Frijns, Ana-Maria Fuertes, and Jo{\"e}lle Miffre.
\newblock The skewness of commodity futures returns.
\newblock \emph{Journal of Banking \& Finance}, 86:\penalty0 143--158, 2018.
\newblock \doi{10.1016/j.jbankfin.2017.06.015}.

\bibitem[French et~al.(1987)French, Schwert, and Stambaugh]{FrenchSchwertStambaugh1987}
Kenneth~R. French, G.~William Schwert, and Robert~F. Stambaugh.
\newblock Expected stock returns and volatility.
\newblock \emph{Journal of Financial Economics}, 19\penalty0 (1):\penalty0 3--29, 1987.
\newblock \doi{10.1016/0304-405X(87)90026-2}.

\bibitem[Fuertes et~al.(2015)Fuertes, Miffre, and Fernandez-Perez]{FuertesMiffreFernandezPerez2015}
Ana-Maria Fuertes, Jo{\"e}lle Miffre, and Adrian Fernandez-Perez.
\newblock Commodity strategies based on momentum, term structure, and idiosyncratic volatility.
\newblock \emph{Journal of Futures Markets}, 35\penalty0 (3):\penalty0 274--297, 2015.
\newblock \doi{10.1002/fut.21656}.

\bibitem[Gorton and Rouwenhorst(2006)]{GortonRouwenhorst2006}
Gary Gorton and K.~Geert Rouwenhorst.
\newblock Facts and fantasies about commodity futures.
\newblock \emph{Financial Analysts Journal}, 62\penalty0 (2):\penalty0 47--68, 2006.

\bibitem[Gorton et~al.(2013)Gorton, Hayashi, and Rouwenhorst]{GortonHayashiRouwenhorst2013}
Gary~B. Gorton, Fumio Hayashi, and K.~Geert Rouwenhorst.
\newblock The fundamentals of commodity futures returns.
\newblock \emph{Review of Finance}, 17\penalty0 (1):\penalty0 35--105, 2013.

\bibitem[Harvey et~al.(2016)Harvey, Liu, and Zhu]{HarveyLiuZhu2016}
Campbell~R. Harvey, Yan Liu, and Heqing Zhu.
\newblock … and the cross-section of expected returns.
\newblock \emph{The Review of Financial Studies}, 29\penalty0 (1):\penalty0 5--68, 2016.
\newblock \doi{10.1093/rfs/hhv059}.

\bibitem[Juodis and Sarafidis(2018)]{JuodisSarafidis2018}
A.~Juodis and V.~Sarafidis.
\newblock Fixed t dynamic panel data estimators with multi-factor errors.
\newblock \emph{Econometric Reviews}, 38\penalty0 (8):\penalty0 893--929, 2018.

\bibitem[Juodis and Sarafidis(2022)]{JuodisSarafidis2022b}
A.~Juodis and V.~Sarafidis.
\newblock A linear estimator for factor-augmented fixed-t panels with endogenous regressors.
\newblock \emph{Journal of Business and Economic Statistics}, 40\penalty0 (1):\penalty0 1--15, 2022.

\bibitem[Jurado et~al.(2015)Jurado, Ludvigson, and Ng]{JuradoLudvigsonNg2015}
Kyle Jurado, Sydney~C Ludvigson, and Serena Ng.
\newblock Measuring uncertainty.
\newblock \emph{American Economic Review}, 105\penalty0 (3):\penalty0 1177--1216, 2015.
\newblock \doi{10.1257/aer.20131193}.

\bibitem[Kapetanios et~al.(2026)Kapetanios, Sarafidis, and Ventouri]{KapetaniosEtal2026}
George Kapetanios, Vasilis Sarafidis, and Alexia Ventouri.
\newblock Model selection in high-dimensional linear regression using boosting with multiple testing, 2026.
\newblock URL \url{https://arxiv.org/pdf/2602.19705}.
\newblock arXiv preprint.

\bibitem[Miranda-Agrippino and Rey(2020)]{Miranda-Agrippino2020}
Silvia Miranda-Agrippino and Hélène Rey.
\newblock {U.S.} monetary policy and the global financial cycle.
\newblock \emph{The Review of Economic Studies}, 87:\penalty0 2754--2776, 11 2020.
\newblock ISSN 0034-6527.
\newblock \doi{10.1093/restud/rdaa019}.
\newblock URL \url{https://academic.oup.com/restud/article/87/6/2754/5834728}.

\bibitem[Norkute et~al.(2021)Norkute, Sarafidis, Yamagata, and Cui]{NorkuteEtal2021}
Milda Norkute, Vasilis Sarafidis, Takashi Yamagata, and Guowei Cui.
\newblock {Instrumental variable estimation of dynamic linear panel data models with defactored regressors and a multifactor error structure}.
\newblock \emph{Journal of Econometrics}, 220\penalty0 (2):\penalty0 416--446, 2021.

\bibitem[Qiao and Han(2023)]{QiaoHan2023}
Tongshuai Qiao and Liyan Han.
\newblock {COVID-19} and tail risk contagion across commodity futures markets.
\newblock \emph{Journal of Futures Markets}, 43\penalty0 (2):\penalty0 242--272, 2023.
\newblock \doi{10.1002/fut.22388}.

\bibitem[Silvennoinen and Thorp(2013)]{SilvennoinenThorp2013}
Annastiina Silvennoinen and Susan Thorp.
\newblock Financialization, crisis and commodity correlation dynamics.
\newblock \emph{Journal of International Financial Markets, Institutions and Money}, 24:\penalty0 42--65, 2013.
\newblock \doi{10.1016/j.intfin.2012.11.007}.

\bibitem[Szymanowska et~al.(2014)Szymanowska, de~Roon, Nijman, and van~den Goorbergh]{SzymanowskaDeRoonNijmanVanDenGoorbergh2014}
Marta Szymanowska, Frans de~Roon, Theo Nijman, and Rob van~den Goorbergh.
\newblock An anatomy of commodity futures risk premia.
\newblock \emph{Journal of Finance}, 69\penalty0 (1):\penalty0 453--482, 2014.

\bibitem[Tang and Xiong(2012)]{TangXiong2012}
Ke~Tang and Wei Xiong.
\newblock Index investment and the financialization of commodities.
\newblock \emph{Financial Analysts Journal}, 68\penalty0 (6):\penalty0 54--74, 2012.

\end{thebibliography}


\begin{thebibliography}{35}
\providecommand{\natexlab}[1]{#1}
\providecommand{\url}[1]{\texttt{#1}}
\expandafter\ifx\csname urlstyle\endcsname\relax
  \providecommand{\doi}[1]{doi: #1}\else
  \providecommand{\doi}{doi: \begingroup \urlstyle{rm}\Url}\fi

\bibitem[Abiad and Qureshi(2023)]{Abiad_Oil_uncertainty}
Abdul Abiad and Irfan~A. Qureshi.
\newblock The macroeconomic effects of oil price uncertainty.
\newblock \emph{Energy Economics}, 125:\penalty0 106839, 2023.
\newblock ISSN 0140-9883.
\newblock \doi{https://doi.org/10.1016/j.eneco.2023.106839}.
\newblock URL \url{https://www.sciencedirect.com/science/article/pii/S0140988323003377}.

\bibitem[Ahir et~al.(2022)Ahir, Bloom, and Furceri]{AhirBloomFurceri2022}
Hites Ahir, Nicholas Bloom, and Davide Furceri.
\newblock {The World Uncertainty Index}.
\newblock NBER Working Papers 29763, National Bureau of Economic Research, Inc, February 2022.
\newblock URL \url{https://ideas.repec.org/p/nbr/nberwo/29763.html}.

\bibitem[Ahn and Horenstein(2013)]{AhnHorenstein2013}
Seung~C. Ahn and Alex~R. Horenstein.
\newblock Eigenvalue ratio test for the number of factors.
\newblock \emph{Econometrica}, 81\penalty0 (3):\penalty0 1203--1227, 2013.
\newblock \doi{10.3982/ECTA9628}.

\bibitem[Amaya et~al.(2015)Amaya, Christoffersen, Jacobs, and Vasquez]{Amaya2015}
Diego Amaya, Peter Christoffersen, Kris Jacobs, and Aurelio Vasquez.
\newblock Does realized skewness predict the cross-section of equity returns?
\newblock \emph{Journal of Financial Economics}, 118\penalty0 (1):\penalty0 135--167, 2015.
\newblock ISSN 0304-405X.
\newblock \doi{https://doi.org/10.1016/j.jfineco.2015.02.009}.
\newblock URL \url{https://www.sciencedirect.com/science/article/pii/S0304405X15001257}.

\bibitem[Angelini et~al.(2025)Angelini, Bontempi, Angelis, Neri, and Sorge]{AngeliniEtAl2025}
Giovanni Angelini, Maria~Elena Bontempi, Luca~De Angelis, Paolo Neri, and Marco~Maria Sorge.
\newblock Shocking concerns: public perception about climate change and the macroeconomy.
\newblock Papers 2505.04669, arXiv.org, May 2025.
\newblock URL \url{https://ideas.repec.org/p/arx/papers/2505.04669.html}.

\bibitem[Aruoba et~al.(2009)Aruoba, Diebold, and Scotti]{ArubaDieboldScotti2009}
S.~Bora\u{g}an Aruoba, Francis~X. Diebold, and Chiara Scotti.
\newblock Real-time measurement of business conditions.
\newblock \emph{Journal of Business \& Economic Statistics}, 27\penalty0 (4):\penalty0 417--427, 2009.
\newblock \doi{10.1198/jbes.2009.07205}.

\bibitem[Aslanidis et~al.(2022)Aslanidis, Bariviera, and Óscar G.~López]{Aslanidis2022}
Nektarios Aslanidis, Aurelio~F. Bariviera, and Óscar G.~López.
\newblock The link between cryptocurrencies and google trends attention.
\newblock \emph{Finance Research Letters}, 47:\penalty0 102654, 6 2022.
\newblock ISSN 15446123.
\newblock \doi{10.1016/j.frl.2021.102654}.
\newblock URL \url{https://linkinghub.elsevier.com/retrieve/pii/S1544612321005833}.

\bibitem[Bai(2003)]{Bai2003}
Jushan Bai.
\newblock {Inferential Theory for Factor Models of Large Dimensions}.
\newblock \emph{Econometrica}, 71\penalty0 (1):\penalty0 135--171, 2003.

\bibitem[Bai(2009)]{Bai2009}
Jushan Bai.
\newblock {Panel Data Models with Interactive Fixed Effects}.
\newblock \emph{Econometrica}, 77\penalty0 (4):\penalty0 1229--1279, 2009.

\bibitem[Bai and Ng(2002)]{BaiNg2002}
Jushan Bai and Serena Ng.
\newblock Determining the number of factors in approximate factor models.
\newblock \emph{Econometrica}, 70\penalty0 (1):\penalty0 191--221, 2002.

\bibitem[Baker et~al.(2016)Baker, Bloom, and Davis]{BakerBloomDavis2016}
Scott~R. Baker, Nicholas Bloom, and Steven~J. Davis.
\newblock {Measuring Economic Policy Uncertainty}.
\newblock \emph{The Quarterly Journal of Economics}, 131\penalty0 (4):\penalty0 1593--1636, nov 2016.
\newblock ISSN 0033-5533.
\newblock \doi{10.1093/QJE/QJW024}.
\newblock URL \url{https://dx.doi.org/10.1093/qje/qjw024}.

\bibitem[Baker et~al.(2019)Baker, Bloom, Davis, and Kost]{NBERw25720}
Scott~R Baker, Nicholas Bloom, Steven~J Davis, and Kyle~J Kost.
\newblock Policy news and stock market volatility.
\newblock Working Paper 25720, National Bureau of Economic Research, March 2019.
\newblock URL \url{http://www.nber.org/papers/w25720}.

\bibitem[{Board of Governors of the Federal Reserve System}(2025)]{NominalUSDindex}
{Board of Governors of the Federal Reserve System}.
\newblock Nominal broad u.s. dollar index [dtwexbgs], 2 2025.
\newblock URL \url{https://fred.stlouisfed.org/series/DTWEXBGS}.
\newblock retrieved 02/02/2025.

\bibitem[Caldara and Iacoviello(2022)]{CaldaraIacoviello2022}
Dario Caldara and Matteo Iacoviello.
\newblock {Measuring Geopolitical Risk}.
\newblock \emph{American Economic Review}, 112\penalty0 (4):\penalty0 1194--1225, apr 2022.
\newblock ISSN 0002-8282.
\newblock \doi{10.1257/aer.20191823}.
\newblock URL \url{https://pubs.aeaweb.org/doi/10.1257/aer.20191823}.

\bibitem[Caldara et~al.(2020)Caldara, Iacoviello, Molligo, Prestipino, and Raffo]{CaldaraEtAl2020}
Dario Caldara, Matteo Iacoviello, Patrick Molligo, Andrea Prestipino, and Andrea Raffo.
\newblock The economic effects of trade policy uncertainty.
\newblock \emph{Journal of Monetary Economics}, 109:\penalty0 38--59, 2020.
\newblock \doi{10.1016/j.jmoneco.2019.11.002}.

\bibitem[{Cboe Global Markets}()]{CBOEVIX}
{Cboe Global Markets}.
\newblock Cboe volatility index (vix).
\newblock Cboe Global Markets, Chicago.
\newblock Retrieved from LSEG Datastream.

\bibitem[Cui et~al.(2022)Cui, Norkutė, Sarafidis, and Yamagata]{CuiEtal2022}
Guowei Cui, Milda Norkutė, Vasilis Sarafidis, and Takashi Yamagata.
\newblock {Two-stage instrumental variable estimation of linear panel data models with interactive effects}.
\newblock \emph{The Econometrics Journal}, 25\penalty0 (2):\penalty0 340--361, 10 2022.
\newblock \doi{10.1093/ectj/utab029}.
\newblock URL \url{https://doi.org/10.1093/ectj/utab029}.

\bibitem[Davis(2016)]{Davis2016}
Steven~J Davis.
\newblock An index of global economic policy uncertainty.
\newblock Working Paper 22740, National Bureau of Economic Research, October 2016.
\newblock URL \url{http://www.nber.org/papers/w22740}.

\bibitem[{Federal Reserve Bank of Chicago}(2025)]{NFCILeverage}
{Federal Reserve Bank of Chicago}.
\newblock {C}hicago {F}ed {N}ational {F}inancial {C}onditions {L}everage {S}ubindex.
\newblock \url{https://fred.stlouisfed.org/series/NFCILEVERAGE}, 2025.

\bibitem[{Federal Reserve Bank of Cleveland}(2025)]{InflationExpectation}
{Federal Reserve Bank of Cleveland}.
\newblock 5-year expected inflation, 2 2025.
\newblock URL \url{https://fred.stlouisfed.org/series/EXPINF5YR}.
\newblock retrieved 02/02/2025.

\bibitem[{Federal Reserve Bank of St. Louis}(2025)]{StLouisStressIndex}
{Federal Reserve Bank of St. Louis}.
\newblock {S}t. {L}ouis {F}ed {F}inancial {S}tress {I}ndex [{STLFSI4}].
\newblock \url{https://fred.stlouisfed.org/series/STLFSI4}, 2025.
\newblock Accessed: 2025-01-21.

\bibitem[Gavriilidis et~al.(2026)Gavriilidis, Känzig, Raghavan, and Stock]{GavriilidisEtAl2026}
Konstantinos Gavriilidis, Diego~R Känzig, Ramya Raghavan, and James~H Stock.
\newblock The macroeconomic effects of climate policy uncertainty.
\newblock Working Paper 34762, National Bureau of Economic Research, January 2026.
\newblock URL \url{http://www.nber.org/papers/w34762}.

\bibitem[Hakkio and Keeton(2009)]{HakkioKeeton2009}
Craig~S. Hakkio and William~R. Keeton.
\newblock Financial stress: What is it, how can it be measured, and why does it matter?
\newblock \emph{Federal Reserve Bank of Kansas City Economic Review}, 94\penalty0 (2):\penalty0 5--50, 2009.

\bibitem[{ICE Benchmark Administration}()]{ICEMOVE}
{ICE Benchmark Administration}.
\newblock Ice bofa move index.
\newblock Intercontinental Exchange, New York.
\newblock Retrieved from LSEG Datastream.

\bibitem[Juodis and Sarafidis(2022)]{JuodisSarafidis2022}
A.~Juodis and V.~Sarafidis.
\newblock An incidental parameters free inference approach for panels with common shocks.
\newblock \emph{Journal of Econometrics}, 229:\penalty0 19--54, 2022.

\bibitem[Jurado et~al.(2015)Jurado, Ludvigson, and Ng]{JuradoLudvigsonNg2015}
Kyle Jurado, Sydney~C Ludvigson, and Serena Ng.
\newblock Measuring uncertainty.
\newblock \emph{American Economic Review}, 105\penalty0 (3):\penalty0 1177--1216, 2015.
\newblock \doi{10.1257/aer.20131193}.

\bibitem[Kapetanios(2008)]{Kapetanios2008}
George Kapetanios.
\newblock A bootstrap procedure for panel data sets with many cross-sectional units.
\newblock \emph{The Econometrics Journal}, 11\penalty0 (2):\penalty0 377--395, 2008.
\newblock \doi{10.1111/j.1368-423X.2008.00243.x}.

\bibitem[Kapetanios et~al.(2026)Kapetanios, Sarafidis, and Ventouri]{KapetaniosEtal2026}
George Kapetanios, Vasilis Sarafidis, and Alexia Ventouri.
\newblock Model selection in high-dimensional linear regression using boosting with multiple testing, 2026.
\newblock URL \url{https://arxiv.org/pdf/2602.19705}.
\newblock arXiv preprint.

\bibitem[Moon and Weidner(2015)]{MoonWeidner2015}
Hyungsik~Roger Moon and Martin Weidner.
\newblock Linear regression for panel with unknown number of factors as interactive fixed effects.
\newblock \emph{Econometrica}, 83\penalty0 (4):\penalty0 1543--1579, 2015.

\bibitem[Mor\~{a}o(2025)]{Morao2025}
Hugo Mor\~{a}o.
\newblock The economic effects of tensions in energy transportation.
\newblock \emph{Research in Transportation Economics}, 112, 2025.
\newblock \doi{10.1016/j.retrec.2025.101586}.

\bibitem[Norkute et~al.(2021)Norkute, Sarafidis, Yamagata, and Cui]{NorkuteEtal2021}
Milda Norkute, Vasilis Sarafidis, Takashi Yamagata, and Guowei Cui.
\newblock {Instrumental variable estimation of dynamic linear panel data models with defactored regressors and a multifactor error structure}.
\newblock \emph{Journal of Econometrics}, 220\penalty0 (2):\penalty0 416--446, 2021.

\bibitem[Ongan et~al.(2025)Ongan, Gocer, and Isik]{OnganEtAl2025}
Serdar Ongan, Ismet Gocer, and Cem Isik.
\newblock Introducing the new esg-based sustainability uncertainty index (esgui).
\newblock \emph{Sustainable Development}, 2025.
\newblock \doi{10.1002/sd.3351}.

\bibitem[Pesaran(2006)]{Pesaran2006}
M.~Hashem Pesaran.
\newblock Estimation and inference in large heterogeneous panels with a multifactor error structure.
\newblock \emph{Econometrica}, 74\penalty0 (4):\penalty0 967--1012, 2006.

\bibitem[Sarafidis and Wansbeek(2012)]{SarafidisWansbeek2012}
Vasilis Sarafidis and Tom Wansbeek.
\newblock {Cross-Sectional Dependence in Panel Data Analysis}.
\newblock \emph{Econometric Reviews}, 31\penalty0 (5):\penalty0 483--531, 2012.

\bibitem[{University of Michigan}(2025)]{ConsumerMichigan}
{University of Michigan}.
\newblock {S}urveys of {C}onsumers.
\newblock \url{http://www.sca.isr.umich.edu/}, 2025.
\newblock Accessed: 2025-01-21.

\end{thebibliography}
\endgroup

\appendix

\begin{bibunit}[plainnat]

\renewcommand{\thesection}{Appendix}
\renewcommand{\thesubsection}{Appendix.\arabic{subsection}}

\renewcommand{\theequation}{A.\arabic{equation}}
\setcounter{equation}{0}

\renewcommand{\thetable}{A.\arabic{table}}
\setcounter{table}{0}

\renewcommand{\thelemma}{A.\arabic{lemma}}
\setcounter{lemma}{0}

\renewcommand{\theproposition}{A.\arabic{proposition}}
\setcounter{proposition}{0}

\renewcommand{\thecorollary}{A.\arabic{corollary}}
\setcounter{corollary}{0}



\clearpage


\clearpage
\setcounter{page}{1}
\setcounter{footnote}{0}

\begin{titlepage}
\centering

{\LARGE\bfseries The Anatomy of Commodity Risk: Micro, Market, and Economy-Wide Sources\par}
\vspace{0.6em}
{\Large\bfseries Online Appendix\par}

\vspace{1.5em}
{\large
\begin{tabular}{c}
Nektarios Aslanidis$^{1}$\\
Aurelio F. Bariviera$^{1}$\\
George Kapetanios$^{2}$\\
Vasilis Sarafidis$^{3,*}$\\
Alexia Ventouri$^{2}$
\end{tabular}
\par}

\vspace{0.8em}
{\small $^{1}$Department of Economics, Universitat Rovira i Virgili, ECO-SOS, Av. Universitat 1, Reus 43204, Spain\par}

{\small $^{2}$Department of Banking and Finance, King's College London, Bush House, 30 Aldwych, London WC2B 4BG, UK\par}

{\small $^{3}$Brunel Business School, Brunel University London, Eastern Gateway, Kingston Lane, Uxbridge UB8 3PN, UK\par}


\raggedright
\normalsize

\bigskip
\textbf{Abstract:}
This Online Appendix provides supplementary material for ``The Anatomy of Commodity Risk: Micro, Market, and Economy-Wide Sources.'' It contains additional information on the data and construction of the variables used in the empirical analysis, together with further details on the two-stage divide-and-conquer estimation framework. The Appendix states the additional regularity conditions and provides the technical lemmas and proofs supporting the asymptotic results in the paper. It also reports Monte Carlo evidence on the finite-sample performance of the PCA--BMT procedure and compares its variable-selection performance with pooled and individual-level Lasso alternatives.
\end{titlepage}

\renewcommand{\thetable}{OA.\arabic{table}}
\setcounter{table}{0}
\renewcommand{\theequation}{OA.\arabic{equation}}
\setcounter{equation}{0}

\renewcommand{\thetable}{OA.\arabic{table}}
\setcounter{table}{0}

\setcounter{remark}{0}
\renewcommand{\theremark}{A.\arabic{remark}}

\renewcommand{\thefigure}{OA.\arabic{figure}}
\setcounter{figure}{0}
\setcounter{page}{1}

\setcounter{section}{0}
\setcounter{subsection}{0}

\setcounter{section}{0}
\renewcommand{\thesection}{Appendix \Alph{section}}
\renewcommand{\thesubsection}{\thesection.\arabic{subsection}}
\renewcommand{\thesubsubsection}{\thesubsection.\arabic{subsubsection}}

\section{Data Description\label{sec:Appendix_Data}}

Table \ref{tab:commodity_list} provides a detailed description and classification of the commodities in our sample.

{\footnotesize
\begin{longtable}{p{2.2cm}p{7.3cm}p{4.2cm}p{1.6cm}}

\caption{Commodity Futures Series in the Sample} \\
\toprule
Mnemonic & Name & Sector & Currency \\
\midrule
\endfirsthead

\multicolumn{4}{c}%
{{ \tablename \thetable{} -- continued from previous page}} \\
\toprule Mnemonic & Name & Sector & Currency \\
\midrule
\endhead
\hline \multicolumn{4}{r}{{Continued on next page}} \\
\hline
\endfoot

\hline \hline
\endlastfoot

    \texttt{CCFC.01} & CBT-CORN COMP. & Agriculture and livestock & USD \\
    \texttt{COFC.01} & CBT-OATS COMP. & Agriculture and livestock & USD \\
    \texttt{CMSC.01} & CBT-SOYABEAN MEAL COMP & Agriculture and livestock & USD \\
    \texttt{CSYC.01} & CBT-SOYBEANS COMP. & Agriculture and livestock & USD \\
    \texttt{CWFC.01} & CBT-WHEAT COMPOSITE & Agriculture and livestock & USD \\
    \texttt{CCCC.01} & CME-CLASS IV MILK COMP & Agriculture and livestock & USD \\
    \texttt{CFDC.01} & CME-FEEDER CATTLE COMP. & Agriculture and livestock & USD \\
    \texttt{CLGC.01} & CME-LEAN HOGS COMP. & Agriculture and livestock & USD \\
    \texttt{CLDC.01} & CME-LIVE CATTLE COMP. & Agriculture and livestock & USD \\
    \texttt{CMKC.01} & CME-MILK CLASS III COMP. & Agriculture and livestock & USD \\
    \texttt{NCCC.01} & CSCE-COCOA & Agriculture and livestock & USD \\
    \texttt{NKCC.01} & CSCE-COFFEE 'C' & Agriculture and livestock & USD \\
    \texttt{DE.C.01} & DCE-EGG & Agriculture and livestock & CNY \\
    \texttt{KKKC.01} & KCBT-HRW WHEAT COMP & Agriculture and livestock & USD \\
    \texttt{KPOC.01} & KLSE-CRUDE PALM OIL & Agriculture and livestock & MYR \\
    \texttt{LRRC.01} & LIFFE-ROBUSTA COFFEE REVISED & Agriculture and livestock & USD \\
    \texttt{MCDC.01} & MCX-CARDAMOM & Agriculture and livestock & INR \\
    \texttt{MKTC.01} & MCX-KAPAS & Agriculture and livestock & INR \\
    \texttt{MMTC.01} & MCX-MENTHA OIL & Agriculture and livestock & INR \\
    \texttt{NBLC.01} & NCDEX-BARLEY & Agriculture and livestock & INR \\
    \texttt{NMXC.01} & NCDEX-MAIZE FEED IND & Agriculture and livestock & INR \\
    \texttt{NJOC.01} & NYCE-(FCOJ-A) ORANGE JUICE & Agriculture and livestock & USD \\
    \texttt{SRSC.01} & SICOM-RUBBER RSS3 & Agriculture and livestock & USD \\
    \texttt{STFC.01} & SICOM-RUBBER TSR20 & Agriculture and livestock & USD \\
    \midrule
    \texttt{DL.C.01} & DCE-LLDPE & Energy and petrochemicals & CNY \\
    \texttt{DPLC.01} & DCE-POLYPROPYLENE & Energy and petrochemicals & CNY \\
    \texttt{DV.C.01} & DCE-PVC & Energy and petrochemicals & CNY \\
    \texttt{DOQC.01} & GME-OMAN CRUDE & Energy and petrochemicals & USD \\
    \texttt{LLCC.01} & ICE-BRENT CRUDE OIL & Energy and petrochemicals & USD \\
    \texttt{LMDC.01} & ICE-COAL RICHARD BAY & Energy and petrochemicals & USD \\
    \texttt{LMCC.01} & ICE-COAL ROTTERDAM & Energy and petrochemicals & USD \\
    \texttt{LLEC.01} & ICE-GAS OIL & Energy and petrochemicals & USD \\
    \texttt{NQZC.01} & NYM-ARGUS PROPANE FE INDEX & Energy and petrochemicals & USD \\
    \texttt{NRBC.01} & NYMEX-RBOB GASOLINE & Energy and petrochemicals & USD \\
    \texttt{NCLC.01} & NYM-LIGHT CRUDE OIL & Energy and petrochemicals & USD \\
    \texttt{NNGC.01} & NYM-NATURAL GAS & Energy and petrochemicals & USD \\
    \texttt{NHOC.01} & NYM-NY HARBOR ULSD & Energy and petrochemicals & USD \\
    \texttt{ZMEC.01} & ZCE-METHANOL & Energy and petrochemicals & CNY \\
    \midrule
    \texttt{NGCC.01} & CMX-GOLD 100 OZ & Metals & USD \\
    \texttt{NHGC.01} & CMX-HIGH GRADE COPPER & Metals & USD \\
    \texttt{NSLC.01} & CMX-SILVER 5000 OZ & Metals & USD \\
    \texttt{LAHC.01} & LME-ALUMINIUM & Metals & USD \\
    \texttt{LEDC.01} & LME-LEAD & Metals & USD \\
    \texttt{LNIC.01} & LME-NICKEL & Metals & USD \\
    \texttt{LTIC.01} & LME-TIN & Metals & USD \\
    \texttt{LZZC.01} & LME-ZINC & Metals & USD \\
    \texttt{NTJC.01} & NYMEX-IRON ORE 62 & Metals & USD \\
    \texttt{NHCC.01} & NYM-HR COIL STEEL US & Metals & USD \\
    \texttt{NPAC.01} & NYM-PALLADIUM & Metals & USD \\
    \texttt{NPLC.01} & NYM-PLATINUM & Metals & USD \\
    \texttt{ZFSC.01} & ZCE-FERROSILICON & Metals & CNY \\
    \texttt{ZMSC.01} & ZCE-SILICON MANGANES & Metals & CNY \\
  \label{tab:commodity_list}%
\end{longtable}%
}

\begin{minipage}{\textwidth}
{\footnotesize \textit{Notes:} Mnemonics and contract names are as reported by LSEG Datastream.``Currency'' reports the currency in which the contract settles on its home exchange. Thirteen of the
fifty-two contracts settle in a currency other than the US dollar, namely those traded on the Dalian
Commodity Exchange and the Zhengzhou Commodity Exchange (CNY), the Multi Commodity Exchange of India and
the National Commodity and Derivatives Exchange (INR), and Bursa Malaysia Derivatives (MYR). All
settlement prices are converted to US dollars at the contemporaneous spot exchange rate prior to the
computation of returns, realised moments, and market aggregates, as described in
Section~\ref{sec:data}.}
\end{minipage}

Let $P_{i,d}$ denote the settlement price of commodity $i$ on trading day $d$. Daily log-returns are $r_{i,d}=\ln(P_{i,d}/P_{i,d-1})$, and the weekly log-return used as the dependent variable in Equation (\ref{model_with_u}) is their within-week sum,
\begin{equation}
r_{i,t}=\sum_{d\in\mathcal{D}_{t}} r_{i,d},
\end{equation}
where $\mathcal{D}_{t}$ denotes the set of trading days in week $t$ and $n_{i,t}=|\mathcal{D}_{t}|$ its cardinality for commodity $i$.

Following the realised-moment approach of \citet{Amaya2015}, we construct weekly realised volatility, skewness, and kurtosis from the within-week distribution of daily returns:
\begin{align}
\mathrm{VLTY}_{i,t} &= \sqrt{\sum_{d\in\mathcal{D}_{t}} r_{i,d}^{2}}, \label{eq:rv}\\[4pt]
\mathrm{SKEW}_{i,t} &= \frac{\sqrt{n_{i,t}}\sum_{d\in\mathcal{D}_{t}} r_{i,d}^{3}}{\mathrm{VLTY}_{i,t}^{3}}, \label{eq:rskew}\\[4pt]
\mathrm{KTS}_{i,t} &= \frac{n_{i,t}\sum_{d\in\mathcal{D}_{t}} r_{i,d}^{4}}{\mathrm{VLTY}_{i,t}^{4}}. \label{eq:rkurt}
\end{align}
The scaling factors $\sqrt{n_{i,t}}$ and $n_{i,t}$ ensure that the realised third and fourth moments are comparable across weeks with differing numbers of trading days.

Open interest enters the micro layer in logarithms, $\mathrm{OI}_{i,t}=\ln(\mathrm{OI}_{i,t})$, measured at the last trading day of week $t$, and captures the depth of outstanding commitment in contract $i$.

The market layer aggregates the same risk characteristics across the cross-section. To avoid the mechanical endogeneity that would arise if the market aggregate contained the asset whose return it is used to explain, every market-level variable is constructed on a leave-one-out basis. For any asset-level characteristic $z_{i,t}\in\{r_{i,t},\mathrm{VLTY}_{i,t},\mathrm{SKEW}_{i,t},\mathrm{KTS}_{i,t}, \mathrm{OI}_{i,t}\}$, the corresponding market component facing commodity $i$ is
\begin{equation}
z^{\mathrm{mkt}}_{i,t}=\frac{1}{N-1}\sum_{\substack{j\in\mathcal{S}_i\\j\neq i}} z_{j,t},
\label{eq:loo}
\end{equation}
which yields $r_{-i,t}$, $\mathrm{CVLTY\_idx}_{i,t}$, $\mathrm{CSKEW\_idx}_{i,t}$, $\mathrm{CKTS\_idx}_{i,t}$, and $\mathrm{COI}_{i,t}$ respectively. Each commodity therefore faces a slightly different market benchmark, from which its own contribution has been purged.

This construction serves two purposes.
Econometrically, it removes the mechanical own-observation component that would arise if the market aggregate used to explain $r_{i,t}$) contained commodity (i)'s own return or risk characteristic. This is particularly important in our setting because the sensitivities to market-level conditions are themselves objects of interest: retaining commodity (i) in the corresponding aggregate would mechanically induce comovement between the commodity-specific and market-level variables. Economically, the leave-one-out construction provides a clean interpretation of the market layer as conditions prevailing elsewhere in the same commodity sector. The coefficients ($\boldsymbol{\beta}_{i}$) therefore measure commodity (i)'s sensitivity to market-level risk conditions constructed exclusively from other commodities, rather than to an aggregate that partly reflects its own behaviour.

Table \ref{tab:summary_stats} presents summary statistics for the micro and market-level variables. The between standard deviation captures variation across assets, while the within standard deviation reflects time-series variation within each asset.

\begin{table}[H]
  \centering

  \caption{Summary Statistics for Idiosyncratic and Market-Level Factors}
  \renewcommand{\arraystretch}{0.9}
  \scalebox{0.85}{
  \begin{tabular}{lcccccc}
    \toprule
    \multirow{2}{*}{Variable} & \multirow{2}{*}{Mean} & \multicolumn{3}{c}{Standard Deviation} & \multirow{2}{*}{Min} & \multirow{2}{*}{Max} \\
    & & Overall & Within & Between & & \\
    \midrule
    $r$        & 0.004 & 0.179 & 0.179 & 0.009 & -4.444 & 1.509 \\
    ILQ        & 48.458 & 1153.2 & 1112.789 & 306.435 & 0.000 & 57929 \\
    VLT        & 0.147 & 0.134 & 0.114 & 0.072 & 0.000 & 3.390 \\
    SKW        & 0.007 & 1.156 & 1.154 & 0.079 & -2.646 & 2.573 \\
    KTS        & 2.661 & 1.012 & 0.997 & 0.175 & 1.095 & 6.999 \\
    CILQ       & 26.064 & 63.951 & 63.951 & 0.000 & 0.123 & 649.541 \\
    CVLT       & 0.065 & 0.039 & 0.039 & 0.000 & 0.002 & 0.311 \\
    CSKW       & 0.094 & 0.694 & 0.694 & 0.000 & -1.628 & 1.482 \\
    CKTS       & 2.614 & 0.574 & 0.574 & 0.000 & 1.695 & 4.660 \\
    CMKT       & 0.011 & 0.098 & 0.098 & 0.000 & -0.353 & 0.311 \\
    CSIZE      & 0.018 & 0.066 & 0.066 & 0.000 & -0.147 & 0.305 \\
    CMOM       & 0.015 & 0.066 & 0.066 & 0.000 & -0.179 & 0.381 \\
    SMKT       & 0.001 & 0.038 & 0.038 & 0.000 & -0.141 & 0.125 \\
    SVLT       & 24.899 & 8.948 & 8.948 & 0.000 & 12.100 & 82.690 \\
    BMKT       & -0.001 & 0.010 & 0.010 & 0.000 & -0.033 & 0.038 \\
    BVLT       & 78.747 & 32.810 & 32.810 & 0.000 & 36.970 & 156.950 \\
    GMKT       & 0.001 & 0.022 & 0.022 & 0.000 & -0.090 & 0.077 \\
    GVLT       & 18.950 & 4.597 & 4.597 & 0.000 & 10.910 & 41.430 \\
    \bottomrule
  \end{tabular}
  }
  \vspace{0.5em}
  \begin{minipage}{\textwidth}
    \footnotesize \textit{\textbf{Notes}:} Standard deviations are reported separately as overall, within, and between. All variables are expressed in levels. ILQ values are divided by $10^6$ for ease of interpretation.
  \end{minipage}
  \label{tab:summary_stats}
\end{table}

The full set of indicators used in Stage 2, together with definitions, sources, and references, is reported in Table~\ref{tab:meta-variables}.

Table \ref{tab:summary_stats_macro} presents summary statistics for the macro-wide variables selected in the second stage. Since these variables are common across commodities at each point in time, there is no between-commodity variation. Consequently, the within-commodity standard deviation coincides with the overall standard deviation.

\begin{table}[H]
  \centering

  \caption{Summary Statistics for Selected Macro-Financial Risk Variables}
  \label{tab:summary_stats_macro}

  \renewcommand{\arraystretch}{0.95}
  \scalebox{0.90}{
  \begin{tabular}{lcccc}
    \toprule
    Variable & Mean & Standard Deviation & Min & Max \\
    \midrule

    DTWEXBGS
      & 115.7161 & 6.3321 & 94.4754 & 129.2775 \\

    MOVE
      & 78.4907 & 27.3547 & 0.0000 & 169.6500 \\

    UMCSENT
      & 80.2678 & 15.8370 & 49.8000 & 101.4000 \\

    EXPINF5YR
      & 1.8721 & 0.3926 & 0.8649 & 2.5514 \\

    VIX
      & 18.1777 & 6.7972 & 9.1500 & 76.4500 \\

    GPR Oil
      & 149.0457 & 198.6463 & 15.3200 & 1844.4300 \\

    \bottomrule
  \end{tabular}
  }

  \vspace{0.5em}
  \begin{minipage}{0.95\textwidth}
    \footnotesize
    \textit{\textbf{Notes}:}
    The table reports summary statistics for the macro-financial variables selected in the second stage. Since these variables are common across commodities at each point in time, there is no between-commodity variation.
    Consequently, the within-commodity standard deviation coincides with the overall standard deviation, and we report only the latter. DTWEXBGS denotes the broad U.S.\ dollar index, MOVE the bond market volatility index, UMCSENT the University of Michigan consumer sentiment index, EXPINF5YR five-year expected inflation, VIX the equity market volatility index, and GPR Oil the oil-related geopolitical risk index. The sample contains 52 commodities observed over 611 weeks, for a total of 31,772 observations.
  \end{minipage}
\end{table}

\begin{landscape}
{\footnotesize	
\centering
\begin{longtable}{p{5cm}p{10cm}p{4cm}}%

\caption{Full Set of Candidate Macro-Financial Risk Components} \\
\toprule
Name  & Short description & Source \\
\midrule
\endfirsthead

\multicolumn{3}{c}%
{{ \tablename \thetable{} -- continued from previous page}} \\
\toprule  Name  & Short description & Source \\
\midrule
\endhead
\hline \multicolumn{3}{r}{{Continued on next page}} \\ 
\hline
\endfoot

\hline \hline
\endlastfoot
    
    University of Michigan consumer sentiment & Monthly survey that gauges how confident U.S. consumers feel about their personal finances and the overall economy, indexed to 1966:Q1 $=100$ & \cite{ConsumerMichigan} \\
    Expected inflation (1-year, 3-year, 5-year) & Estimates calculated with a model that uses Treasury yields, inflation data, inflation swaps, and survey-based measures of inflation expectations & \cite{InflationExpectation} \\
    Equity market volatility tracker (EMV) & Newspaper-based tracker of equity market volatility that moves with the VIX and with the realized volatility of returns on the S\&P 500 & \cite{NBERw25720} \\
    Global economic policy uncertainty index (GEPU, current-price and PPP-adjusted weights) & GDP-weighted average of national EPU indices, based on the share of own-country newspaper articles that discuss economic policy uncertainty each month & \cite{Davis2016} \\
    Trade policy uncertainty index (TPU) & Index of uncertainty about trade policy, constructed from the frequency of newspaper articles discussing trade policy in uncertain terms & \cite{CaldaraEtAl2020} \\
    Monetary policy uncertainty index (MPU) & Index constructed as scaled frequency counts of newspaper articles that discuss monetary policy uncertainty & \cite{BakerBloomDavis2016} \\
    Climate policy uncertainty index (ClimPolicyU) & Index based on searches for articles in leading U.S. newspapers containing terms related to uncertainty, climate, environment, CO\textsubscript{2}, and analogous terms & \cite{GavriilidisEtAl2026} \\
    World uncertainty index family (WUI) & Computed by counting the frequency of the word ``uncertain'' (or its variants) in the Economist Intelligence Unit country reports, rescaled by multiplying by 1,000,000; global, trade, and pandemic variants & \cite{AhirBloomFurceri2022} \\
    ESG-based sustainability uncertainty index (ESGUI) & Index capturing uncertainty surrounding environmental, social, and governance conditions relevant to sustainable investment & \cite{OnganEtAl2025} \\
    Geopolitical risk index family (GPR, threats, acts, oil-related) & Measure of adverse geopolitical events based on a tally of newspaper articles covering geopolitical tensions, decomposed into realised acts and prospective threats, and into oil-related tensions & \cite{CaldaraIacoviello2022} \\
    Kansas City financial stress index (KCFSI) & Monthly index of financial stress constructed from eleven financial market variables capturing yield spreads, asset price volatility, and flight to quality & \cite{HakkioKeeton2009} \\
    Nominal broad U.S. dollar index & Measures the value of the U.S. dollar against a broad basket of currencies of major U.S. trading partners, weighted by their bilateral trade with the U.S. & \cite{NominalUSDindex} \\
    Aruoba-Diebold-Scotti business conditions index (ADS) & Real-time index of U.S. business conditions extracted from a dynamic factor model combining macroeconomic indicators observed at mixed frequencies & \cite{ArubaDieboldScotti2009} \\
    Oil price uncertainty index & Index capturing uncertainty specific to crude oil price formation, distinct from general economic and energy policy uncertainty & \cite{Abiad_Oil_uncertainty} \\
    Energy transportation uncertainty index & Newspaper-based index, developed from over fifty global newspapers, tracking uncertainty associated with disruptions to energy transportation routes and infrastructure & \cite{Morao2025} \\
    Climate concern index (CCI) & Index of collective climate perceptions based on disaggregated web-search volumes related to climate change topics & \cite{AngeliniEtAl2025} \\
    Macroeconomic uncertainty indices (MacroU, 1-month, 3-month, and 12-month) & Factor analysis based on 132 mostly macroeconomic series & \cite{JuradoLudvigsonNg2015} \\
    Financial uncertainty indices (FinancialU, 1-month, 3-month, and 12-month) & Factor analysis based on 147 financial time series & \cite{JuradoLudvigsonNg2015} \\
    Real uncertainty indices (RealU, 1-month, 3-month, and 12-month) & Factor analysis based on the real-activity subset of the underlying series & \cite{JuradoLudvigsonNg2015} \\
        CBOE volatility index (VIX) & Option-implied volatility of the S\&P 500 over the next 30 days, computed from a model-free portfolio of index option prices, and widely used as a barometer of equity market uncertainty & \cite{CBOEVIX} \\
    ICE BofA MOVE index & Option-implied volatility of U.S. Treasury yields over the next 30 days, computed as a yield-curve-weighted average of at-the-money swaption implied volatilities, capturing interest rate and funding uncertainty & \cite{ICEMOVE} \\
  \label{tab:meta-variables}%
\end{longtable}%
}
\end{landscape}

\section{Robustness of Estimated Sensitivities\label{sec:Appendix_Robustness}}

This section assesses the robustness of the estimated sensitivities in both stages of the empirical framework. For the micro and market sensitivities in Stage 1, we first examine their stability across the 2014--2019 and 2020--2026 subsamples and then replace Mean Group aggregation with the Median Group estimator, which is less sensitive to extreme commodity-specific coefficient estimates. We subsequently examine the robustness of the economy-wide sensitivities in Stage 2 by comparing Mean and Median Group estimates and by distinguishing between US dollar-denominated and non-US dollar-denominated commodities. Tables~\ref{tab:robustness} and~\ref{tab:robustness_dollar} report the results.

Columns (1)--(3) examine stability across sample periods. Column (1) reproduces the full-sample Mean Group IV estimates, while Columns (2) and
(3) report estimates for 2014--2019 and 2020--2026, respectively. The main relationships are similar across the two periods. Own skewness remains
positive and highly significant, market skewness remains negative and highly significant, and the leave-one-out sector return, $r_{-i,t}$, is positive and significant at the 1\% level in both subsamples.

Some differences in magnitude are nevertheless apparent. The coefficient on $r_{-i,t}$ is 0.264 in 2014--2019 and 0.510 in 2020--2026, indicating stronger commodity-market comovement in the latter period. Market kurtosis is also larger after 2020, while market open interest is insignificant in the earlier period and positive and significant in the later period. Own volatility and own open interest are statistically insignificant in both subsamples. Thus, although the magnitude of some sensitivities varies across periods, the main sign and significance patterns underlying the baseline results are preserved.

Columns (4)--(7) replace the cross-sectional mean of the commodity-specific coefficient estimates with their median. For each coefficient, we compute
\[
\widehat{\boldsymbol{\theta}}_{\mathrm{Median}}
=
\operatorname{median}_{i=1,\ldots,N}
\left(\widehat{\boldsymbol{\theta}}_i\right).
\]
Median Group aggregation is less sensitive to unusually large or small commodity-specific estimates. Standard errors are obtained using the
cross-sectional bootstrap procedure of \citet{Kapetanios2008}.

The full-sample Median Group estimates in Column (4) are close to their Mean Group counterparts for the main relationships. Own skewness is positive, $r_{-i,t}$ is positive, market skewness is negative, and market kurtosis is positive, with all four coefficients highly significant. Their Median Group estimates are 0.020, 0.290, $-0.063$, and 0.007, respectively. Market volatility becomes positive and statistically significant under Median Group aggregation, but the estimate remains relatively small.

The sector-specific Median Group estimates in Columns (5)--(7) provide a similar picture. The coefficient on $r_{-i,t}$ is positive and highly
significant in all three sectors and is largest for energy, at 1.253, compared with 0.206 for agriculture and 0.387 for metals. Own skewness remains positive, market skewness negative, and market kurtosis positive across the three sectors. The negative coefficient on own volatility is again concentrated in energy, where the Median Group estimate is $-0.128$ and statistically significant.

The robustness exercises therefore leave the main empirical findings largely unchanged. The key micro and market relationships are preserved across the two sample periods and under Median Group aggregation. In particular, the strong market sensitivity of energy commodities remains when the cross-sectional median replaces the mean, indicating that this result is not driven by a small number of extreme commodity-specific estimates.

\begin{table}[!htbp]
  \centering
  \caption{Robustness of 1st Stage Group-Level Estimates}
  \label{tab:robustness}
  \footnotesize
  \setlength{\tabcolsep}{3.2pt}
  \renewcommand{\arraystretch}{0.95}
  \resizebox{\textwidth}{!}{
    \begin{tabular}{lccccccc}
    \toprule
    & \multicolumn{3}{c}{Mean Group}
    & \multicolumn{4}{c}{Median Group} \\
    \cmidrule(lr){2-4} \cmidrule(lr){5-8}
    
    & (1) & (2) & (3) & (4) & (5) & (6) & (7) \\
    & Full Sample & 2014--2019 & 2020--2026
    & Full Sample & Agriculture & Energy & Metals \\
    \midrule

VLTY
    & -0.047$^{**}$
    & -0.023
    & -0.039
    & -0.033
    & -0.035
    & -0.128$^{**}$
    & 0.026 \\
    & (0.022)
    & (0.025)
    & (0.030)
    & (0.032)
    & (0.053)
    & (0.060)
    & (0.053) \\

SKEW
    & 0.021$^{***}$
    & 0.020$^{***}$
    & 0.023$^{***}$
    & 0.020$^{***}$
    & 0.021$^{***}$
    & 0.024$^{***}$
    & 0.017$^{***}$ \\
    & (0.001)
    & (0.001)
    & (0.001)
    & (0.001)
    & (0.002)
    & (0.003)
    & (0.001) \\

KTS
    & -0.000
    & -0.000
    & -0.001$^{*}$
    & -0.000$^{*}$
    & -0.001$^{**}$
    & -0.000
    & -0.000 \\
    & (0.000)
    & (0.000)
    & (0.000)
    & (0.000)
    & (0.000)
    & (0.000)
    & (0.000) \\

OI
    & 0.000
    & -0.000
    & -0.000
    & 0.000
    & 0.000
    & 0.000
    & 0.000 \\
    & (0.000)
    & (0.000)
    & (0.000)
    & (0.000)
    & (0.000)
    & (0.000)
    & (0.000) \\[0.2em]

$r_{-i}$
    & 0.509$^{***}$
    & 0.264$^{***}$
    & 0.510$^{***}$
    & 0.290$^{***}$
    & 0.206$^{***}$
    & 1.253$^{***}$
    & 0.387$^{***}$ \\
    & (0.071)
    & (0.079)
    & (0.076)
    & (0.048)
    & (0.042)
    & (0.309)
    & (0.108) \\

CVLTY
    & 0.047
    & -0.021
    & 0.055
    & 0.046$^{**}$
    & 0.041
    & 0.119$^{*}$
    & -0.037 \\
    & (0.030)
    & (0.025)
    & (0.045)
    & (0.019)
    & (0.037)
    & (0.070)
    & (0.056) \\

CSKEW
    & -0.073$^{***}$
    & -0.060$^{***}$
    & -0.074$^{***}$
    & -0.063$^{***}$
    & -0.065$^{***}$
    & -0.083$^{***}$
    & -0.056$^{***}$ \\
    & (0.005)
    & (0.004)
    & (0.007)
    & (0.008)
    & (0.011)
    & (0.011)
    & (0.005) \\

CKTS
    & 0.008$^{***}$
    & 0.001$^{*}$
    & 0.013$^{***}$
    & 0.007$^{***}$
    & 0.007$^{***}$
    & 0.008$^{***}$
    & 0.006$^{***}$ \\
    & (0.001)
    & (0.000)
    & (0.001)
    & (0.001)
    & (0.001)
    & (0.001)
    & (0.001) \\

COI
    & 0.002$^{***}$
    & 0.000
    & 0.003$^{***}$
    & 0.001$^{**}$
    & 0.003$^{**}$
    & 0.001
    & 0.000 \\
    & (0.001)
    & (0.001)
    & (0.001)
    & (0.001)
    & (0.001)
    & (0.001)
    & (0.001) \\

    \midrule
Commodities
    & 52 & 52 & 52 & 52 & 24 & 14 & 14 \\

Observations
    & 31,616 & 14,404 & 17,212 &  31,616 & 14,592 & 8,512 & 8,512 \\

    \bottomrule
  \end{tabular}
  }

  \vspace{0.5em}
  \begin{minipage}{0.99\textwidth}
    \footnotesize
    \textit{\textbf{Notes}:}
    Standard errors are reported in parentheses.
    Significance levels are $^{*}$ 10\%, $^{**}$ 5\%, and $^{***}$ 1\%.
    Column (1) reports the baseline Mean Group IV estimates for the full sample.
    Columns (2) and (3) re-estimate the same specification over 2014--2019 and
    2020--2026, respectively. Columns (4)--(7) replace the cross-sectional mean
    with the Median Group estimator, reported for the full sample and separately
    for agricultural, energy, and metal commodities. Standard errors for the
    Median Group estimates are obtained using the cross-sectional bootstrap
    procedure of \citet{Kapetanios2008}.
    OI and COI denote logged open interest and its leave-one-out market
    counterpart, respectively.
  \end{minipage}
\end{table}

Table~\ref{tab:robustness_dollar} examines the robustness of the Stage 2 economy-wide sensitivities. Columns (1)--(3) report Mean Group estimates for the full sample and separately for US dollar-denominated and non-US dollar-denominated commodities, while Columns (4)--(6) report the corresponding Median Group estimates. This distinction is particularly relevant for the sensitivity to the broad US dollar index, since its estimated effect could potentially depend on the currency in which the underlying futures contract is denominated.

The results are remarkably stable across both dimensions. All six macro-financial variables retain the same signs as in the baseline specification and remain statistically significant at the 1\% level under both Mean and Median Group aggregation and for both currency-denomination groups. In particular, the sensitivity to the broad US dollar index remains positive for non-US dollar-denominated commodities as well as for US dollar-denominated commodities. Under Mean Group aggregation, the estimates are 0.0026 and 0.0033, respectively, compared with 0.0028 for the full sample; the corresponding Median Group estimates are 0.0022 and 0.0035. The positive dollar sensitivity identified in the baseline results is therefore not specific to commodities whose futures contracts are denominated in US dollars.

The remaining economy-wide sensitivities display similarly stable patterns. MOVE, consumer sentiment, five-year expected inflation, and VIX retain negative sensitivities across all specifications, whereas oil-related geopolitical risk retains a positive sensitivity. The magnitudes are also generally similar under Mean and Median Group aggregation, although the absolute sensitivities to expected inflation and VIX are somewhat larger among non-US dollar-denominated commodities. Therefore, the Stage 2 results appear robust both to the method used to aggregate heterogeneous commodity-specific sensitivities and to the currency denomination of the underlying futures contracts.

\begin{table}[!htbp]
  \centering
  \caption{Robustness of 2nd Stage Group-Level Estimates}
  \label{tab:robustness_dollar}
  \footnotesize
  \setlength{\tabcolsep}{3.2pt}
  \renewcommand{\arraystretch}{0.95}
  \resizebox{\textwidth}{!}{
    \begin{tabular}{lcccccc}
    \toprule
    & \multicolumn{3}{c}{Mean Group}
    & \multicolumn{3}{c}{Median Group} \\
    \cmidrule(lr){2-4} \cmidrule(lr){5-7}
    
    & (1) & (2) & (3) & (4) & (5) & (6) \\
    & Full Sample
    & \shortstack{US Dollar-\\Denominated}
    & \shortstack{Non-US Dollar-\\Denominated}
    & Full Sample
    & \shortstack{US Dollar-\\Denominated}
    & \shortstack{Non-US Dollar-\\Denominated} \\
    \midrule

DTWEXBGS
    & 0.0028$^{***}$
    & 0.0026$^{***}$
    & 0.0033$^{***}$
    & 0.0025$^{***}$
    & 0.0022$^{***}$
    & 0.0035$^{***}$ \\
    & (0.0002)
    & (0.0002)
    & (0.0003)
    & (0.0002)
    & (0.0002)
    & (0.0003) \\

MOVE
    & -0.0006$^{***}$
    & -0.0006$^{***}$
    & -0.0007$^{***}$
    & -0.0006$^{***}$
    & -0.0005$^{***}$
    & -0.0007$^{***}$ \\
    & (0.0000)
    & (0.0000)
    & (0.0001)
    & (0.0001)
    & (0.0000)
    & (0.0001) \\

UMCSENT
    & -0.0009$^{***}$
    & -0.0009$^{***}$
    & -0.0011$^{***}$
    & -0.0008$^{***}$
    & -0.0008$^{***}$
    & -0.0012$^{***}$ \\
    & (0.0001)
    & (0.0001)
    & (0.0001)
    & (0.0001)
    & (0.0001)
    & (0.0001) \\

EXPINF5YR
    & -0.0372$^{***}$
    & -0.0344$^{***}$
    & -0.0455$^{***}$
    & -0.0324$^{***}$
    & -0.0286$^{***}$
    & -0.0467$^{***}$ \\
    & (0.0026)
    & (0.0032)
    & (0.0034)
    & (0.0031)
    & (0.0022)
    & (0.0047) \\

VIX
    & -0.0026$^{***}$
    & -0.0024$^{***}$
    & -0.0032$^{***}$
    & -0.0022$^{***}$
    & -0.0021$^{***}$
    & -0.0035$^{***}$ \\
    & (0.0002)
    & (0.0002)
    & (0.0003)
    & (0.0003)
    & (0.0001)
    & (0.0003) \\

GPR Oil
    & 0.0000$^{***}$
    & 0.0000$^{***}$
    & 0.0000$^{***}$
    & 0.0000$^{***}$
    & 0.0000$^{***}$
    & 0.0000$^{***}$ \\
    & (0.0000)
    & (0.0000)
    & (0.0000)
    & (0.0000)
    & (0.0000)
    & (0.0000) \\

    \midrule
Commodities
    & 52 & 39 & 13 & 52 & 39 & 13 \\

    \bottomrule
  \end{tabular}
  }

  \vspace{0.5em}
  \begin{minipage}{0.99\textwidth}
    \footnotesize
    \textit{\textbf{Notes}:}
    Standard errors are reported in parentheses.
    Significance levels are $^{*}$ 10\%, $^{**}$ 5\%, and $^{***}$ 1\%.     Columns (1)--(3) report Mean Group estimates for the full sample, US dollar-denominated commodities, and non-US dollar-denominated commodities, respectively. Columns (4)--(6) report the corresponding
    Median Group estimates. Standard errors for the Median Group estimates are obtained using the cross-sectional bootstrap procedure of \citet{Kapetanios2008}.
  \end{minipage}
\end{table}

\section{Key Steps of the Divide-and-Conquer Framework
\label{sec:Appendix_DivideConquer}}
\noindent
\textbf{Stage 1: Observed Micro and Market Risk}

\begin{itemize}
  \renewcommand\labelitemi{---}

  \item Let $\mathcal S_1,\ldots,\mathcal S_J$ denote the commodity sectors   and define the sector-average returns
  \[
  \bar{\mathbf r}_s
  =
  \frac{1}{N_s}
  \sum_{j\in\mathcal S_s}\mathbf r_j,
  \qquad
  s=1,\ldots,J.
  \]
  Collect these returns in $\mathbf R   =   \left(
  \bar{\mathbf r}_1,\ldots,\bar{\mathbf r}_J
  \right)$, and define $\mathbf M_{\mathbf R}   =   \mathbf I_T   -   \mathbf R   (\mathbf R'\mathbf R)^{-1}   \mathbf R'$.  Rather than partialling out a different leave-one-out sector return for
  each commodity, project all commodity-specific risk characteristics onto
  the orthogonal complement of the same observed sector-return space. For
  $i\in\mathcal S_s$,
  \[
  \mathbf M_{\mathbf R}\mathbf r_{-i}
  =
  -\frac{1}{N_s-1}
  \mathbf M_{\mathbf R}\mathbf r_i,
  \]
  so that the residual contribution of the leave-one-out sector return is asymptotically negligible as $N_s\to\infty$.

  \item For lagged characteristics, use the correspondingly lagged sector-return space. For $\tau=0,\ldots,\zeta_x$, define $\mathbf R_{-\tau}
  =   \left(   \bar{\mathbf r}_{1,-\tau},\ldots,
  \bar{\mathbf r}_{J,-\tau} \right)$ and $\mathbf M_{\mathbf R_{-\tau}}   =   \mathbf I_T   -
  \mathbf R_{-\tau}   \left(   \mathbf R_{-\tau}'\mathbf R_{-\tau}   \right)^{-1}   \mathbf R_{-\tau}'$. The latent factor component $\mathbf F^0_{-\tau}$ is defined as the
  pervasive common variation in the commodity-specific risk characteristics
  remaining after accounting for the observed sector-return space. Thus, for $i\in\mathcal S_s$,
  \[
  \mathbf M_{\mathbf R_{-\tau}}\mathbf X_{i,-\tau}
  =
  \mathbf F^0_{-\tau}\boldsymbol{\Lambda}_i'
  +
  \mathbf M_{\mathbf R_{-\tau}}\mathbf V_{i,-\tau}
  +
  \mathbf D_{i,-\tau},
  \]
  where $\mathbf D_{i,-\tau}   =   \mathbf M_{\mathbf R_{-\tau}}   \mathbf r_{-i,-\tau}\boldsymbol{\phi}_i'$ is asymptotically negligible as the sector size increases.

  \item Estimate the latent factor space separately for each   $\tau=0,\ldots,\zeta_x$ by applying PCA to the projected   commodity-specific risk characteristics. The estimated factor matrices   $\widehat{\mathbf F}_{-\tau}$ are obtained, up to rotation, from the
  leading eigenvectors of
  \[
  \frac{1}{NT}
  \sum_{i=1}^{N}
  \mathbf M_{\mathbf R_{-\tau}}
  \mathbf X_{i,-\tau}
  \mathbf X_{i,-\tau}'
  \mathbf M_{\mathbf R_{-\tau}}.
  \]
  Because the same sector-return space is removed from every commodity at a given lag, PCA can exploit the full cross-section while targeting the latent common variation that is incremental to observed sector-return movements.

  \item The factor space extracted from the commodity-specific risk characteristics need not coincide with the latent macro-risk space. Stage~1 requires only the spanning condition
  \[
  \mathrm{col}(\mathbf G)
  \subseteq
  \mathrm{col}(\mathbf F^0).
  \]
  Thus, $\mathbf F^0$ may contain common commodity-market variation beyond
  the latent macro-financial component $\mathbf G$, while being sufficiently
  rich to span the latter. Under this condition, defactoring the structural
  equation removes the latent macro component responsible for the
  endogeneity of the observed risk characteristics.

  \item Eliminate the influence of the estimated latent factor space from the entire model using
  \[
  \mathbf M_{\widehat{\mathbf F}}\mathbf r_i
  =
  \mathbf M_{\widehat{\mathbf F}}\mathbf r_{-i}\rho_i
  +
  \mathbf M_{\widehat{\mathbf F}}\mathbf X_i\boldsymbol{\alpha}_i
  +
  \mathbf M_{\widehat{\mathbf F}}\mathbf Y_i\boldsymbol{\beta}_i
  +
  \mathbf M_{\widehat{\mathbf F}}\mathbf u_i
  =
  \mathbf M_{\widehat{\mathbf F}}\mathbf C_i
  \boldsymbol{\theta}_i
  +
  \mathbf M_{\widehat{\mathbf F}}\mathbf u_i,
  \]
  where
  \[
  \boldsymbol{\theta}_i
  =
  (\rho_i,\boldsymbol{\alpha}_i',\boldsymbol{\beta}_i')'.
  \]

  \item Construct the instrument matrix
  \begin{align*}
  \widehat{\mathbf Z}_i
  =
  \Big(
  &
  \mathbf r_{-i},\;
  \mathbf r_{-i,-1},\ldots,
  \mathbf r_{-i,-\zeta_r},
  \nonumber\\
  &
  \mathbf M_{\widehat{\mathbf F}}\mathbf X_i,\;
  \mathbf M_{\widehat{\mathbf F}_{-1}}\mathbf X_{i,-1},
  \ldots,
  \mathbf M_{\widehat{\mathbf F}_{-\zeta_x}}
  \mathbf X_{i,-\zeta_x},
  \nonumber\\
  &
  \mathbf M_{\widehat{\mathbf F}}\mathbf Y_i,\;
  \mathbf M_{\widehat{\mathbf F}_{-1}}\mathbf Y_{i,-1},
  \ldots,
  \mathbf M_{\widehat{\mathbf F}_{-\zeta_y}}
  \mathbf Y_{i,-\zeta_y}
  \Big).
  \end{align*}
  The sector-return projection is used only for factor extraction and does not enter the final instrument transformation. The micro- and market-risk characteristics are used as instruments after removing their estimated latent common factor component, while the leave-one-out sector return and
  its selected lags enter the instrument set directly under the maintained orthogonality conditions.

  \item Estimate the commodity-specific coefficient vector $\boldsymbol{\theta}_i$ by IV using the defactored estimating equation and the instrument set $\widehat{\mathbf Z}_i$. This yields heterogeneous
  sensitivities of commodity returns to the observed sector-return, micro-risk, and market-risk sources. Commodity-specific over-identification tests based on the defactored moment conditions provide a diagnostic for instrument validity and, indirectly, for the
  maintained spanning condition.

\end{itemize}

\medskip
\noindent
\textbf{Stage 2: Latent Economy-Wide Common Risk Sources}

\begin{itemize}
  \renewcommand\labelitemi{---}

  \item Form the Stage~1 IV residuals $\widehat{\mathbf u}_i   =   \mathbf r_i   -   \mathbf C_i\widehat{\boldsymbol{\theta}}_i$. Under model~\eqref{model_with_u}, these residuals consistently estimate $\mathbf u_i   =   \mathbf G\boldsymbol{\gamma}_i   +  \boldsymbol{\varepsilon}_i$. Hence, after controlling for the observed sector-return, micro-risk, and
  market-risk sources, their common component is associated with the latent
  economy-wide macro-risk space $\mathbf G$.

  \item Estimate the dimension $r$ of this common residual variation using eigenvalue-based methods or information criteria, as in \citet{BaiNg2002} and \citet{AhnHorenstein2013}, and extract the   corresponding principal components from $
  \frac{1}{NT}   \sum_{i=1}^{N}   \widehat{\mathbf u}_i
  \widehat{\mathbf u}_i'$. Under the maintained factor structure, these principal components recover, up to rotation, the common residual space generated by $\mathbf G$. This PCA step is distinct from that in Stage~1: Stage~1 estimates a factor space from the commodity-specific risk characteristics to address endogeneity, whereas Stage~2 extracts the common component of the IV residuals in order to identify observable proxies for latent macro risk.

  \item For each principal component $j=1,\ldots,r$, apply the Boosting with Multiple Testing (BMT) procedure of \citet{KapetaniosEtal2026} to the high-dimensional set of candidate macro-financial variables
  $\boldsymbol{\mathcal Z}$. Let $\mathcal S_j$ denote the variables selected for component $j$ and define
  \[
  \mathcal S
  =
  \bigcup_{j=1}^{r}\mathcal S_j.
  \]
  The corresponding design matrix   $\boldsymbol{\mathcal Z}_{(\mathcal S)}$ provides a sparse set of observable macro-financial proxies for the latent macro-risk space.

  \item Estimate commodity-specific sensitivities to the selected macro-financial proxies by regressing
  $\widehat{\mathbf u}_i$ on $\boldsymbol{\mathcal Z}_{(\mathcal S)}$. Denote the resulting
  coefficient vector by $\widehat{\boldsymbol{\gamma}}_i$. Performing
  selection on the common residual components rather than separately for each commodity ensures that all commodities are subsequently evaluated against the same selected set of macro-financial proxies while allowing their sensitivities to remain heterogeneous.

\end{itemize}

Finally, the commodity-specific estimates
$\widehat{\boldsymbol{\theta}}_i$ and
$\widehat{\boldsymbol{\gamma}}_i$ are aggregated using Mean Group
estimators, both for the full cross-section and separately for agriculture,
energy, and metals, to characterise heterogeneity in return sensitivities
across the three risk layers.

Stage~1 extends the defactored IV approach of
\citet{NorkuteEtal2021} and \citet{CuiEtal2022} by accommodating observed sector-return, micro-risk, and market-risk channels alongside latent common variation. Projecting the commodity-specific risk characteristics on the orthogonal complement of the joint observed sector-return space before factor extraction separates observed sector-return-related common variation from the latent factor space used in the defactoring step. Importantly, this projection is common across commodities and is used for factor extraction only; the leave-one-out sector return remains in the structural equation and enters the final instrument set directly. Stage~2 then recovers the common variation remaining in the Stage~1 IV residuals and links it to observable
macro-financial proxies by combining PCA with the high-dimensional BMT procedure of \citet{KapetaniosEtal2026}.

\section{Proofs of Theoretical Results}
\label{reg_conditions_lemmas}
We start by collecting auxiliary moment, dependence, and boundedness conditions that support the asymptotic arguments in Section~3. These conditions are standard in the interactive fixed effects and factor-augmented IV literature and are imposed to ensure uniform laws of large numbers and central limit results.

Let $\|\mathbf{A}\|=\sqrt{\mathrm{tr}(\mathbf{A}'\mathbf{A})}$ denote the Frobenius norm, $\|\mathbf A\|_2$ the spectral (operator) norm, and let $\Delta<\infty$ be a generic constant whose value may change across expressions. Throughout,
\[
\delta_{NT}=\min\{\sqrt N,\sqrt T\}.
\]

\paragraph{Assumption A.0 (Sector-return projection and factor identification).}

Let $\mathcal S_1,\ldots,\mathcal S_J$ denote the commodity sectors,
where $J$ is fixed, $N_s=|\mathcal S_s|$, and
\[
\frac{N_s}{N}\longrightarrow\pi_s,
\qquad
0<\pi_s<1,
\qquad
s=1,\ldots,J.
\]
For each sector define
\[
\bar{\mathbf r}_s
=
\frac{1}{N_s}
\sum_{j\in\mathcal S_s}\mathbf r_j,
\]
and, for $i\in\mathcal S_s$,
\[
\mathbf r_{-i}
=
\frac{1}{N_s-1}
\sum_{\substack{j\in\mathcal S_s\\j\neq i}}
\mathbf r_j.
\]

For $\tau=0,\ldots,\max\{\zeta_x,\zeta_y\}$ let
\[
\mathbf R_{-\tau}
=
\left(
\bar{\mathbf r}_{1,-\tau},\ldots,
\bar{\mathbf r}_{J,-\tau}
\right)
\]
and define
\[
\mathbf M_{\mathbf R_{-\tau}}
=
\mathbf I_T
-
\mathbf R_{-\tau}
\left(
\mathbf R_{-\tau}'\mathbf R_{-\tau}
\right)^{-1}
\mathbf R_{-\tau}'.
\]
For $\tau=0$, write
$\mathbf R_{-0}=\mathbf R$ and
$\mathbf M_{\mathbf R_{-0}}=\mathbf M_{\mathbf R}$.

Assume
\[
\max_{1\leq i\leq N}
\frac{\|\mathbf r_{i,-\tau}\|}{\sqrt T}
=
O_p(1)
\]
uniformly over the finite set $\tau=0,\ldots,\max\{\zeta_x,\zeta_y\}$, and
\[
\lambda_{\min}
\left(
\frac{\mathbf R_{-\tau}'\mathbf R_{-\tau}}{T}
\right)
\geq c_R>0
\]
with probability approaching one, uniformly over
$\tau=0,\ldots,\max\{\zeta_x,\zeta_y\}$.

The latent factor component $\mathbf F^0_{-\tau}$ is identified as the
pervasive common component remaining after accounting for the observed
sector-return space. Accordingly, we impose the normalization
\[
\mathbf R_{-\tau}'\mathbf F^0_{-\tau}
=
\mathbf 0,
\qquad
\tau=0,\ldots,\max\{\zeta_x,\zeta_y\},
\]
so that
\[
\mathbf M_{\mathbf R_{-\tau}}\mathbf F^0_{-\tau}
=
\mathbf F^0_{-\tau}.
\]
This normalization identifies the latent factor component relative to the
observed sector-return space and does not require the economic forces
underlying the two components to be mutually independent.

\begin{lemma}[Residual leave-one-out sector-return component]
\label{LEM_MiMr}
Under Assumption~A.0, for every $i\in\mathcal S_s$ and
$\tau=0,\ldots,\max\{\zeta_x,\zeta_y\}$,
\begin{equation}
\mathbf M_{\mathbf R_{-\tau}}
\mathbf r_{-i,-\tau}
=
-\frac{1}{N_s-1}
\mathbf M_{\mathbf R_{-\tau}}
\mathbf r_{i,-\tau}.
\label{eq:sector_projection_identity}
\end{equation}
Consequently,
\begin{equation}
\max_{1\leq i\leq N}
\frac{
\left\|
\mathbf M_{\mathbf R_{-\tau}}
\mathbf r_{-i,-\tau}
\right\|
}{\sqrt T}
=
O_p(N^{-1})
=
O_p(\delta_{NT}^{-2}),
\label{eq:sector_projection_rate}
\end{equation}
uniformly over $\tau=0,\ldots,\max\{\zeta_x,\zeta_y\}$, when
$N/T\to c\in(0,\infty)$.
\end{lemma}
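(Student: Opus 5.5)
The plan is to prove the identity by pure algebra and then obtain the rate from the boundedness conditions in Assumption~A.0.

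\emph{The identity.} Fix $s$, $i\in\mathcal S_s$ and $\tau$. From the definitions of the sector mean and the leave-one-out mean,
\[
\sum_{j\in\mathcal S_s,\,j\neq i}\mathbf r_{j,-\tau}=N_s\bar{\mathbf r}_{s,-\tau}-\mathbf r_{i,-\tau}.
\]
Hence
\[
\mathbf r_{-i,-\tau}=\frac{N_s}{N_s-1}\bar{\mathbf r}_{s,-\tau}-\frac{1}{N_s-1}\mathbf r_{i,-\tau}.
\]
Since $\bar{\mathbf r}_{s,-\tau}$ is a column of $\mathbf R_{-\tau}$, we have $\mathbf M_{\mathbf R_{-\tau}}\bar{\mathbf r}_{s,-\tau}=\mathbf 0$. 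Applying $\mathbf M_{\mathbf R_{-\tau}}$ to both sides kills the first term and yields \eqref{eq:sector_projection_identity}. The only thing to check is that $\mathbf M_{\mathbf R_{-\tau}}$ is well defined. This holds with probability approaching one because $\lambda_{\min}(\mathbf R_{-\tau}'\mathbf R_{-\tau}/T)\geq c_R>0$.

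\emph{The rate.} $\mathbf M_{\mathbf R_{-\tau}}$ is an orthogonal projection, so $\|\mathbf M_{\mathbf R_{-\tau}}\|_2\le 1$. Therefore
\[
\frac{\|\mathbf M_{\mathbf R_{-\tau}}\mathbf r_{-i,-\tau}\|}{\sqrt T}\le \frac{1}{N_s-1}\,\frac{\|\mathbf r_{i,-\tau}\|}{\sqrt T}.
\]
Take the maximum over $i$. The right-hand side is bounded by $(\min_s N_s-1)^{-1}\max_i\|\mathbf r_{i,-\tau}\|/\sqrt T$. The second factor is $O_p(1)$ uniformly in $\tau$ by Assumption~A.0.

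For the first factor, $J$ is fixed and $N_s/N\to\pi_s>0$. So $\min_s N_s\ge (\min_s\pi_s/2)N$ for $N$ large, which gives $(\min_s N_s-1)^{-1}=O(N^{-1})$. Uniformity over $\tau$ is immediate because the set of lags is finite.

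Finally, $N/T\to c\in(0,\infty)$ implies $\delta_{NT}^2=\min\{N,T\}\asymp N$. Hence $O_p(N^{-1})=O_p(\delta_{NT}^{-2})$.

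\emph{Main obstacle.} No step is genuinely hard. The one point needing care is that the $O_p(1)$ bound must hold for the maximum over $i$, not commodity by commodity. That is exactly the form in which Assumption~A.0 is stated, so I would invoke it directly rather than trying to derive it from the moment conditions in Assumption~1.
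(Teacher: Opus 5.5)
Your proof is correct and follows the paper's argument essentially step for step: you write $\mathbf r_{-i,-\tau}=\frac{N_s}{N_s-1}\bar{\mathbf r}_{s,-\tau}-\frac{1}{N_s-1}\mathbf r_{i,-\tau}$, note that $\mathbf M_{\mathbf R_{-\tau}}$ annihilates the sector mean, bound the rest using operator norm at most one, and then use $N_s/N\to\pi_s>0$ and $N/T\to c$ to get $O_p(\delta_{NT}^{-2})$. Handling the maximum over $i$ explicitly through $(\min_s N_s-1)^{-1}$, and noting that $\mathbf M_{\mathbf R_{-\tau}}$ is well defined with probability approaching one, are small refinements of that same route rather than a different approach.
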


\begin{proof}
For $i\in\mathcal S_s$,
\[
\mathbf r_{-i,-\tau}
=
\frac{N_s}{N_s-1}\bar{\mathbf r}_{s,-\tau}
-
\frac{1}{N_s-1}\mathbf r_{i,-\tau}.
\]
Since $\bar{\mathbf r}_{s,-\tau}$ is a column of
$\mathbf R_{-\tau}$,
\[
\mathbf M_{\mathbf R_{-\tau}}
\bar{\mathbf r}_{s,-\tau}
=
\mathbf 0.
\]
Hence
\[
\mathbf M_{\mathbf R_{-\tau}}
\mathbf r_{-i,-\tau}
=
-\frac{1}{N_s-1}
\mathbf M_{\mathbf R_{-\tau}}
\mathbf r_{i,-\tau},
\]
which establishes Eq.~\eqref{eq:sector_projection_identity}.

Because an orthogonal projection has operator norm at most one,
\[
\frac{
\left\|
\mathbf M_{\mathbf R_{-\tau}}
\mathbf r_{-i,-\tau}
\right\|
}{\sqrt T}
\leq
\frac{1}{N_s-1}
\frac{\|\mathbf r_{i,-\tau}\|}{\sqrt T}
=
O_p(N_s^{-1}).
\]
Since $N_s/N\to\pi_s>0$ and $J$ is fixed,
$N_s^{-1}=O(N^{-1})$ uniformly across sectors. Therefore,
\[
\max_i
\frac{
\left\|
\mathbf M_{\mathbf R_{-\tau}}
\mathbf r_{-i,-\tau}
\right\|
}{\sqrt T}
=
O_p(N^{-1}).
\]
Finally, when $N/T\to c\in(0,\infty)$,
$N^{-1}=O(\delta_{NT}^{-2})$, which proves the result.
\end{proof}

\paragraph{Assumption A.1 (micro return errors)}
The disturbances $\varepsilon_{i,t}$ satisfy $E(\varepsilon_{i,t})=0$,
$E(\varepsilon_{i,t}^2)=\sigma^2_{\varepsilon,i,t}$, and
$E|\varepsilon_{i,t}|^{8+\delta}\le\Delta$ for some $\delta>0$.
Moreover, $\varepsilon_{i,t}$ are independent across $i$ and weakly dependent over $t$.

\paragraph{Assumption A.2 (Micro covariate errors)}
For each $k=1,\dots,K_x$, the processes $v_{k,i,t}$ satisfy:
\begin{enumerate}[label=(\roman*)]

\item $E(v_{k,i,t})=0$ and
$E|v_{k,i,t}|^{8+\delta}\leq\Delta$ for some $\delta>0$;

\item cross-sectional independence across $i$ and group-wise independence
from $\varepsilon_{i,t}$;

\item uniformly bounded temporal dependence, in the sense that, letting
\[
\Gamma_{k,i,T}(s,t)
=
\operatorname{Cov}(v_{k,i,s},v_{k,i,t}),
\]
we have
\[
\max_{1\leq s\leq T}
\sum_{t=1}^T
\left|
\Gamma_{k,i,T}(s,t)
\right|
\leq \Delta
\]
uniformly over $k$, $i$, and $T$;

\item bounded fourth moments of centered quadratic forms,
\[
E\Big\|
N^{-1/2}\sum_{i=1}^N
\big(
v_{k,i,s}v_{k,i,t}
-
E(v_{k,i,s}v_{k,i,t})
\big)
\Big\|^4
\leq \Delta;
\]

\item uniformly bounded aggregate covariances of fourth-order products.
\end{enumerate}

In addition, the corresponding moment, quadratic-form, and dependence bounds are assumed to hold for the projected processes
$ \widetilde{\mathbf v}_{k,i,-\tau}
= \mathbf M_{\mathbf R_{-\tau}}
\mathbf v_{k,i,-\tau}$, $\tau=0,\ldots,\max\{\zeta_x,\zeta_y\}$. In particular, letting
$ \widetilde{\Gamma}_{k,i,T}^{(\tau)}(s,t)
= \operatorname{Cov}
\left( \widetilde v_{k,i,-\tau,s},
\widetilde v_{k,i,-\tau,t} \right)$, we assume
\[
\max_{1\leq s\leq T}
\sum_{t=1}^T
\left|
\widetilde{\Gamma}_{k,i,T}^{(\tau)}(s,t)
\right|
\leq\Delta
\]
uniformly over $k$, $i$, $T$, and the finite set
$\tau=0,\ldots,\max\{\zeta_x,\zeta_y\}$.

\paragraph{Assumption A.3 (Random coefficient moments).}
The heterogeneous deviations $\boldsymbol{\eta}_i$ satisfy
$E\|\boldsymbol{\eta}_i\|^4\le\Delta$ and possess sub-exponential tails, i.e.,
\[
P(|\eta_{i,r}|>z)\le 2\exp\!\left(-\tfrac12\frac{z^2}{a+bz}\right),
\quad \forall z>0,
\]
for fixed constants $a,b>0$ and all relevant components $r$.

\paragraph{Assumption A.4 (Factor--covariate interactions).}

For $\tau=0,\ldots,\max\{\zeta_x,\zeta_y\}$, define $\widetilde{\mathbf V}_{i,-\tau} = \mathbf M_{\mathbf R_{-\tau}}\mathbf V_{i,-\tau}$. The following moment and interaction bounds hold uniformly over $i$ and the finite set of relevant lag orders:

\begin{enumerate}[label=(\roman*)]

\item
\[
E\left\|
T^{-1/2}
\widetilde{\mathbf V}_{i,-\tau}'
\mathbf F^0_{-\tau}
\right\|^4
\leq\Delta.
\]

\item
\[
E\left[
T^{-1/2}
\sum_{k=1}^{K_x}\sum_{t=1}^T
\left(
\widetilde v_{k,i,-\tau,t}^{\,2}
-
E(\widetilde v_{k,i,-\tau,t}^{\,2})
\right)
\right]^2
\leq\Delta.
\]

\item For $\tau=0,\ldots,\zeta_r$,
\[
\frac{
\left(
\widehat{\mathbf F}
-
\mathbf F^0\mathbf H_x
\right)'
\mathbf r_{-i,-\tau}
}{T}
=
O_p\left(\delta_{NT}^{-2}\right)
\]
uniformly over $i$ and the selected lag orders.

\end{enumerate}

Analogous moment and interaction bounds required below are assumed for
$\mathbf Y_i$ and its selected lags.

\paragraph{Assumption A.5 (Consistency of the long-run covariance estimator).}
Under the conditional framework of Assumption~5, the HAC estimator
$\widehat{\boldsymbol\Omega}_{iT}$ defined in Eq.~\eqref{hac_omega},
and computed using the baseline residuals
\[
\widehat{\mathbf u}_i
=
\mathbf r_i-\mathbf C_i\widehat{\boldsymbol\theta}_i,
\]
satisfies, for each fixed $i$,
\[
\widehat{\boldsymbol\Omega}_{iT}
\overset{p}{\longrightarrow}
\boldsymbol\Omega_i
\]
conditionally on $\mathcal H$, for almost every realization, as
$N,T\to\infty$ with $N/T\to c\in(0,\infty)$.

Assumption~A.5 is imposed separately from the covariance-summability condition in Assumption~5, which by itself does not ensure consistency of the feasible HAC estimator.

\begin{remark}
Assumptions A.1--A.4 provide high-level regularity conditions for the convergence of the relevant sample moments and the validity of the PCA-based factor approximations used in the commodity-specific IV estimation. Variants of these conditions are standard in panels with interactive effects and factor-augmented IV models; see, among others,
\citet{Bai2009,Pesaran2006,SarafidisWansbeek2012,MoonWeidner2015,JuodisSarafidis2022}. Assumption A.5 separately imposes consistency of the HAC estimator of the long-run covariance matrix required for inference.
\end{remark}

\begin{lemma}
\label{bailemma}
Under Assumptions 1--5 and A.0--A.4, as $N,T\to\infty$ such that
$N/T\to c\in(0,\infty)$, the PCA estimator based on
Eq.~\eqref{factor_covariance} satisfies, uniformly over the finite set $\tau=0,\ldots,\max\{\zeta_x,\zeta_y\}$,
\begin{equation}
T^{-1/2}
\left\|
\widehat{\mathbf F}_{-\tau}
-
\mathbf F^0_{-\tau}\mathbf H_{x,-\tau}
\right\|
=
O_p(\delta_{NT}^{-1}),
\label{propa1ii}
\end{equation}
for an invertible $K_f\times K_f$ rotation matrix
$\mathbf H_{x,-\tau}$.

For $\tau=0$, writing
$\mathbf H_x=\mathbf H_{x,0}$, we further have
\begin{equation}
\frac{
\left(
\widehat{\mathbf F}
-
\mathbf F^0\mathbf H_x
\right)'
\widehat{\mathbf F}
}{T}
=
O_p(\delta_{NT}^{-2}),
\label{lem_fh}
\end{equation}
\begin{equation}
\frac{
\left(
\widehat{\mathbf F}
-
\mathbf F^0\mathbf H_x
\right)'
\mathbf F^0
}{T}
=
O_p(\delta_{NT}^{-2}),
\label{lem_f0}
\end{equation}
\begin{equation}
\frac{
\left(
\widehat{\mathbf F}
-
\mathbf F^0\mathbf H_x
\right)'
\mathbf G
}{T}
=
O_p(\delta_{NT}^{-2}),
\label{lem_fy}
\end{equation}
\begin{equation}
\frac{
\left(
\widehat{\mathbf F}
-
\mathbf F^0\mathbf H_x
\right)'
\boldsymbol\varepsilon_i
}{T}
=
O_p(\delta_{NT}^{-2}),
\label{lem_eps}
\end{equation}
and
\begin{equation}
\frac{
\left(
\widehat{\mathbf F}
-
\mathbf F^0\mathbf H_x
\right)'
\widetilde{\mathbf v}_{k,i}
}{T}
=
O_p(\delta_{NT}^{-2}).
\label{lem_v}
\end{equation}
Moreover,
\begin{equation}
\frac{
\left(
\widehat{\mathbf F}
-
\mathbf F^0\mathbf H_x
\right)'
\mathbf C_i
}{T}
=
O_p(\delta_{NT}^{-2}),
\label{lem_W}
\end{equation}
\begin{equation}
\mathbf H_x\mathbf H_x'
-
\left(
\frac{\mathbf F^{0\prime}\mathbf F^0}{T}
\right)^{-1}
=
O_p(\delta_{NT}^{-2}),
\label{la7i}
\end{equation}
and
\begin{equation}
\frac{\mathbf F^{0\prime}\widehat{\mathbf F}}{T}
\overset{p}{\longrightarrow}
\boldsymbol\Lambda_F,
\label{lam}
\end{equation}
where $\boldsymbol\Lambda_F$ is nonsingular.
\end{lemma}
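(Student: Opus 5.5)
The plan is to reduce the projected PCA problem in Eq.~\eqref{factor_covariance} to the standard approximate-factor setting of \citet{Bai2003} and \citet{BaiNg2002}, treating the sector-return residual as an additional, asymptotically negligible perturbation. Fix $\tau$; the set of lags is finite, so uniformity over $\tau$ follows by taking maxima. By Assumption~A.0 and the decomposition following Eq.~\eqref{pca_vectori}, the projected characteristics satisfy
\[
\mathbf M_{\mathbf R_{-\tau}}\mathbf X_{i,-\tau}=\mathbf F^0_{-\tau}\boldsymbol\Lambda_i'+\widetilde{\mathbf V}_{i,-\tau}+\mathbf D_{i,-\tau}.
\]
Here $T^{-1/2}\|\mathbf D_{i,-\tau}\|=O_p(N^{-1})$ uniformly in $i$ by Lemma~\ref{LEM_MiMr}, together with the fourth-moment bound on $\boldsymbol\phi_i$ in Assumption~3. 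Let $\mathbf V_{NT}$ be the diagonal matrix of the $K_f$ leading eigenvalues, and define $\mathbf H_{x,-\tau}=(\boldsymbol\Lambda'\boldsymbol\Lambda/N)(\mathbf F^{0\prime}_{-\tau}\widehat{\mathbf F}_{-\tau}/T)\mathbf V_{NT}^{-1}$, as in Bai. Substituting the decomposition into the eigen-equation gives the usual expansion of $\widehat{\mathbf F}_{-\tau}-\mathbf F^0_{-\tau}\mathbf H_{x,-\tau}$ as a sum of cross-product terms:
\begin{itemize}
\item $\widetilde{\mathbf V}\widetilde{\mathbf V}'$,
\item $\mathbf F^0\boldsymbol\Lambda'\widetilde{\mathbf V}'$ and its transpose,
\item new terms involving $\mathbf D_{i,-\tau}$, each scaled by $(NT)^{-1}\widehat{\mathbf F}_{-\tau}\mathbf V_{NT}^{-1}$.
\end{itemize}

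The first step is \eqref{propa1ii}. The Bai terms are bounded exactly as in \citet{Bai2003}, Theorem~1. This uses Assumption~A.2 (summable temporal covariances of the projected $\widetilde v$, and the fourth moments of the centred quadratic forms), Assumption~A.4(i)--(ii), and the pervasiveness conditions $\mathbf Q_{\Lambda,s}>0$ with $N_s/N\to\pi_s$. Together these show that $\mathbf V_{NT}$ converges to a positive definite limit, so $\mathbf V_{NT}^{-1}=O_p(1)$ and $\mathbf H_{x,-\tau}$ is asymptotically invertible. Each $\mathbf D$ term is bounded by Cauchy--Schwarz, using $\max_i T^{-1/2}\|\mathbf D_{i,-\tau}\|=O_p(\delta_{NT}^{-2})$. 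These terms are therefore of smaller order than the $O_p(\delta_{NT}^{-1})$ Bai rate.

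The second step obtains \eqref{lem_fh}--\eqref{lem_v} and \eqref{lem_W} by premultiplying the same expansion by $\widehat{\mathbf F}'$, $\mathbf F^{0\prime}$, $\mathbf G'$, $\boldsymbol\varepsilon_i'$ or $\widetilde{\mathbf v}_{k,i}'$, following Lemmas B.1--B.3 of \citet{Bai2003} and Lemma A.7 of \citet{Bai2009}. The inputs differ by case.
\begin{itemize}
\item For $\mathbf G$, the spanning condition $\mathbf G=\mathbf F^0\mathbf A'$ in Assumption~2 reduces \eqref{lem_fy} to \eqref{lem_f0}.
\item For $\boldsymbol\varepsilon_i$, independence from $\mathbf v$ (Assumption~A.2(ii)) makes cross terms such as $(NT)^{-1}\boldsymbol\varepsilon_i'\widetilde{\mathbf V}\boldsymbol\Lambda$ of order $O_p((NT)^{-1/2})$.
\item For $\widetilde{\mathbf v}_{k,i}$, one must handle the own-unit term $T^{-1}N^{-1}\widetilde{\mathbf v}_{k,i}'\widetilde{\mathbf v}_{k,i}\boldsymbol\lambda_i$. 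It is $O_p(N^{-1})$, which is $O_p(\delta_{NT}^{-2})$ because $N/T\to c$.
\item For $\mathbf C_i=(\mathbf r_{-i},\mathbf X_i,\mathbf Y_i)$, write each block via \eqref{x_factor}. The $\mathbf r_{-i}$ block is Assumption~A.4(iii). The $\mathbf X_i$ block combines the $\mathbf F^0$, $\widetilde{\mathbf v}$ and $\mathbf r_{-i}\boldsymbol\phi_i'$ pieces. The $\mathbf Y_i$ block is a leave-one-out average of $\mathbf X_j$, so averaging the $\widetilde{\mathbf v}_{k,j}$ bounds gives the same order.
\end{itemize}
In every case the $\mathbf D$ contributions are $O_p(\delta_{NT}^{-2})$ by Lemma~\ref{LEM_MiMr}.

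The third step gives \eqref{la7i} and \eqref{lam}. Normalise $\widehat{\mathbf F}'\widehat{\mathbf F}/T=\mathbf I$. Then \eqref{lem_fh} yields $\mathbf I=\mathbf H_x'(\mathbf F^{0\prime}\mathbf F^0/T)\mathbf H_x+O_p(\delta_{NT}^{-2})$, and inverting gives \eqref{la7i}. The limit in \eqref{lam} follows from \citet{Bai2003}, Proposition~1, with $\boldsymbol\Lambda_F$ nonsingular because both $\mathbf Q_{\Lambda}$ and $\mathrm{plim}\,\mathbf F^{0\prime}\mathbf F^0/T$ are positive definite.

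I expect the main obstacle to be \eqref{lem_W}. The regressors $\mathbf C_i$ are correlated with $\mathbf F^0$, and $\mathbf Y_i$ depends on other units' idiosyncratic errors $\widetilde{\mathbf v}_{j}$, $j\ne i$, which also enter the PCA. The own-unit-type terms $N^{-2}T^{-1}\sum_j\widetilde{\mathbf v}_j'\widetilde{\mathbf v}_j$ therefore need careful counting to stay at order $O_p(\delta_{NT}^{-2})$ rather than $O_p(\delta_{NT}^{-1})$. My first attempt would bound them through the quadratic-form condition in Assumption~A.2(iv) and the $1/(N_s-1)$ weights, which together give $O_p(N^{-1})$.
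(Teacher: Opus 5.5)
Your proposal is correct and follows essentially the same route as the paper's proof. Both project out the sector-return space, treat $\mathbf D_{i,-\tau}$ as an $O_p(N^{-1})$ perturbation via Lemma~\ref{LEM_MiMr}, import Bai's first- and second-order PCA expansions for \eqref{propa1ii}--\eqref{lem_v} and \eqref{la7i}, obtain \eqref{lem_fy} from \eqref{lem_f0} through the spanning condition, and bound the blocks of $\mathbf C_i$ separately for \eqref{lem_W}; your version just does more of the term-by-term bookkeeping that the paper leaves implicit.

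The $\mathbf Y_i$ block you single out as the main obstacle is actually easy. For $i\in\mathcal S_s$ its idiosyncratic part is $\bar{\mathbf V}_{-i}=(N_s-1)^{-1}\sum_{j\in\mathcal S_s,\,j\neq i}\mathbf V_j$, which satisfies $T^{-1/2}\|\bar{\mathbf V}_{-i}\|=O_p(N^{-1/2})$ by cross-sectional independence. Cauchy--Schwarz with \eqref{propa1ii} then gives $O_p(\delta_{NT}^{-1}N^{-1/2})=O_p(\delta_{NT}^{-2})$, with no counting of own-unit terms needed.
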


\begin{proof}
[\textbf{Proof of Lemma \ref{bailemma}}]

For $i\in\mathcal S_s$ and $\tau=0,\ldots,\max\{\zeta_x,\zeta_y\}$, the reduced-form
representation gives
\[
\mathbf X_{i,-\tau}
=
\mathbf r_{-i,-\tau}\boldsymbol\phi_i'
+
\mathbf F^0_{-\tau}\boldsymbol\Lambda_i'
+
\mathbf V_{i,-\tau}.
\]
Premultiplying by $\mathbf M_{\mathbf R_{-\tau}}$ yields
\begin{equation}
\mathbf M_{\mathbf R_{-\tau}}\mathbf X_{i,-\tau}
=
\mathbf F^0_{-\tau}\boldsymbol\Lambda_i'
+
\widetilde{\mathbf V}_{i,-\tau}
+
\mathbf D_{i,-\tau},
\label{eq:projected_factor_representation}
\end{equation}
where
\[
\widetilde{\mathbf V}_{i,-\tau}
=
\mathbf M_{\mathbf R_{-\tau}}\mathbf V_{i,-\tau},
\qquad
\mathbf D_{i,-\tau}
=
\mathbf M_{\mathbf R_{-\tau}}
\mathbf r_{-i,-\tau}\boldsymbol\phi_i'.
\]
By Lemma~\ref{LEM_MiMr}, the moment bounds on
$\boldsymbol\phi_i$, and $N_s/N\to\pi_s>0$,
\[
\frac{\|\mathbf D_{i,-\tau}\|}{\sqrt T}
=
O_p(N_s^{-1})
=
O_p(N^{-1})
=
O_p(\delta_{NT}^{-2})
\]
uniformly over $i$ and $\tau$.

Consequently, the covariance matrix in
Eq.~\eqref{factor_covariance} differs from that of the approximate factor
model
\[
\mathbf F^0_{-\tau}\boldsymbol\Lambda_i'
+
\widetilde{\mathbf V}_{i,-\tau}
\]
only by asymptotically negligible terms of order no larger than the
second-order PCA remainder.

By Assumption~3,
\[
\frac{1}{N_s}
\sum_{i\in\mathcal S_s}
\boldsymbol\Lambda_i\boldsymbol\Lambda_i'
\overset{p}{\longrightarrow}
\mathbf Q_{\Lambda,s}>0
\]
for every sector. Since $N_s/N\to\pi_s>0$,
\[
\frac1N
\sum_{i=1}^N
\boldsymbol\Lambda_i\boldsymbol\Lambda_i'
=
\sum_{s=1}^J
\frac{N_s}{N}
\left(
\frac1{N_s}
\sum_{i\in\mathcal S_s}
\boldsymbol\Lambda_i\boldsymbol\Lambda_i'
\right)
\overset{p}{\longrightarrow}
\sum_{s=1}^J
\pi_s\mathbf Q_{\Lambda,s}
\equiv
\mathbf Q_\Lambda>0.
\]
Hence the factor structure remains pervasive after projection on the
finite-dimensional observed sector-return space.

Under Assumptions A.1--A.4, the projected idiosyncratic components
$\widetilde{\mathbf V}_{i,-\tau}$ satisfy the moment and weak-dependence
conditions required for the standard PCA results. In particular,
Assumption~A.2 gives
\[
\lambda_{\max}
\left\{
\operatorname{Var}
(\widetilde{\mathbf v}_{k,i,-\tau})
\right\}
\leq
\max_{1\leq s\leq T}
\sum_{t=1}^T
\left|
\widetilde{\Gamma}_{k,i,T}^{(\tau)}(s,t)
\right|
\leq\Delta,
\]
where the first inequality is the standard row-sum bound for the spectral norm of a symmetric matrix.

The PCA results of \citet{Bai2003,Bai2009}, together with the negligible remainder $\mathbf D_{i,-\tau}$, therefore yield Eq.~\eqref{propa1ii} uniformly over the finite set of lag orders. The corresponding second-order PCA expansions, together with the normalization of the estimated factors and the moment and interaction
bounds in Assumptions A.1--A.4, yield Eqs.~\eqref{lem_fh}, \eqref{lem_f0}, \eqref{lem_eps},
\eqref{lem_v}, and \eqref{la7i}. The spanning condition
\[
\operatorname{col}(\mathbf G)
\subseteq
\operatorname{col}(\mathbf F^0)
\]
together with Eq.~\eqref{lem_f0} implies Eq.~\eqref{lem_fy}, while Eq.~\eqref{lem_W} follows from the same cross-product bounds applied to the blocks of $\mathbf C_i$. Finally, Eq.~\eqref{lam} follows from the usual rotation normalization
and factor pervasiveness.
\end{proof}

\bigskip

\begin{lemma}
\label{LEM_ZMU0}
Under Assumptions 1--5 and A.0--A.4, as $N,T\to\infty$ such that
$N/T\to c\in(0,\infty)$,
\begin{align}
T^{-1/2}\mathbf r_{-i,-\tau}'
\left(
\mathbf M_{\widehat{\mathbf F}}
-
\mathbf M_{\mathbf F^0}
\right)\mathbf u_i
&=
\sqrt T\,O_p(\delta_{NT}^{-2}),
\qquad
\tau=0,\ldots,\zeta_r,
\label{LEM_ZMU0_r}
\\
T^{-1/2}\mathbf X_i'
\left(
\mathbf M_{\widehat{\mathbf F}}
-
\mathbf M_{\mathbf F^0}
\right)\mathbf u_i
&=
\sqrt T\,O_p(\delta_{NT}^{-2}),
\label{LEM_ZMU0_e1}
\\
T^{-1/2}\mathbf X_{i,-\tau}'
\mathbf M_{\widehat{\mathbf F}_{-\tau}}
\left(
\mathbf M_{\widehat{\mathbf F}}
-
\mathbf M_{\mathbf F^0}
\right)\mathbf u_i
&=
\sqrt T\,O_p(\delta_{NT}^{-2}),
\label{LEM_ZMU0_e2}
\\
T^{-1/2}\mathbf X_{i,-\tau}'
\left(
\mathbf M_{\widehat{\mathbf F}_{-\tau}}
-
\mathbf M_{\mathbf F^0_{-\tau}}
\right)
\mathbf M_{\mathbf F^0}\mathbf u_i
&=
\sqrt T\,O_p(\delta_{NT}^{-2}),
\label{LEM_ZMU0_e3}
\end{align}
uniformly over the finite sets of relevant lag orders. The same results
hold with $\mathbf X_i$ replaced by $\mathbf Y_i$ under the analogous
moment conditions in Assumption~A.4.
\end{lemma}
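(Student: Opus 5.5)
The plan is to follow the standard defactored-IV argument of \citet{NorkuteEtal2021} and \citet{CuiEtal2022}: decompose the difference of projection matrices and bound each piece with Lemma~\ref{bailemma}. Write $\mathbf P_{\mathbf A}=\mathbf A(\mathbf A'\mathbf A)^{-1}\mathbf A'$. Since $\mathbf P$ is invariant to invertible rotations, $\mathbf P_{\mathbf F^0}=\mathbf P_{\mathbf F^0\mathbf H_x}$. With $\mathbf E=\widehat{\mathbf F}-\mathbf F^0\mathbf H_x$, I will use the expansion
\[
\mathbf M_{\widehat{\mathbf F}}-\mathbf M_{\mathbf F^0}
=-\left(\mathbf P_{\widehat{\mathbf F}}-\mathbf P_{\mathbf F^0\mathbf H_x}\right)
=-\Big[\mathbf E\,\mathbf Q\,\mathbf E'+\mathbf E\,\mathbf Q\,\mathbf H_x'\mathbf F^{0\prime}+\mathbf F^0\mathbf H_x\mathbf Q\,\mathbf E'
+\mathbf F^0\mathbf H_x\big(\mathbf Q-(\mathbf H_x'\mathbf F^{0\prime}\mathbf F^0\mathbf H_x)^{-1}\big)\mathbf H_x'\mathbf F^{0\prime}\Big],
\]
where $\mathbf Q=(\widehat{\mathbf F}'\widehat{\mathbf F})^{-1}$. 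The last bracket is controlled by \eqref{lem_fh}, \eqref{lem_f0} and \eqref{la7i}, and is of order $O_p(\delta_{NT}^{-2})$ after normalising by $T$.

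Next I split $\mathbf u_i=\mathbf G\boldsymbol\gamma_i+\boldsymbol\varepsilon_i$ and write $\mathbf G=\mathbf F^0\mathbf A'$ by Assumption~2. For each quadratic form $T^{-1/2}\mathbf w'(\mathbf M_{\widehat{\mathbf F}}-\mathbf M_{\mathbf F^0})\mathbf u_i$, with $\mathbf w$ a column of $\mathbf r_{-i,-\tau}$, $\mathbf X_i$ or $\mathbf Y_i$, I substitute the expansion. Each term is then a product of the following factors:
\begin{itemize}
\item factors of the form $T^{-1}\mathbf w'\mathbf E$, which are $O_p(\delta_{NT}^{-2})$ by Assumption~A.4(iii) and \eqref{lem_W}, noting that the columns of $\mathbf C_i$ contain $\mathbf r_{-i}$, $\mathbf X_i$ and $\mathbf Y_i$;
\item factors of the form $T^{-1}\mathbf E'\mathbf u_i$, which are $O_p(\delta_{NT}^{-2})$ by \eqref{lem_fy} and \eqref{lem_eps};
\item bounded factors $T\mathbf Q$, $T^{-1}\mathbf w'\mathbf F^0$ and $T^{-1}\mathbf F^{0\prime}\mathbf u_i$, which are $O_p(1)$ by \eqref{lam} and the moment conditions.
\end{itemize}
Every term therefore contains at least one $O_p(\delta_{NT}^{-2})$ factor multiplied by $O_p(1)$ factors. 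Pulling out $T^{1/2}=T^{-1/2}\cdot T$ gives the rate $\sqrt T\,O_p(\delta_{NT}^{-2})$. This proves \eqref{LEM_ZMU0_r} and \eqref{LEM_ZMU0_e1}, and the same argument applies verbatim to $\mathbf Y_i$.

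For \eqref{LEM_ZMU0_e2}, I write $\mathbf M_{\widehat{\mathbf F}_{-\tau}}=\mathbf I-\mathbf P_{\widehat{\mathbf F}_{-\tau}}$. The identity part reduces to \eqref{LEM_ZMU0_e1} applied to $\mathbf X_{i,-\tau}$. For the projection part, note that $T^{-1}\widehat{\mathbf F}_{-\tau}'\mathbf X_{i,-\tau}=O_p(1)$ and $(T^{-1}\widehat{\mathbf F}_{-\tau}'\widehat{\mathbf F}_{-\tau})^{-1}=O_p(1)$. The remaining factor $T^{-1/2}\widehat{\mathbf F}_{-\tau}'(\mathbf M_{\widehat{\mathbf F}}-\mathbf M_{\mathbf F^0})\mathbf u_i$ is handled by the same expansion as above, using the lag-$\tau$ version of \eqref{propa1ii} and cross-products of $\mathbf E$ with $\widehat{\mathbf F}_{-\tau}$.

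For \eqref{LEM_ZMU0_e3}, I apply the expansion to $\mathbf M_{\widehat{\mathbf F}_{-\tau}}-\mathbf M_{\mathbf F^0_{-\tau}}$ with $\mathbf E_{-\tau}=\widehat{\mathbf F}_{-\tau}-\mathbf F^0_{-\tau}\mathbf H_{x,-\tau}$. The vector it acts on is $\mathbf M_{\mathbf F^0}\mathbf u_i=\mathbf M_{\mathbf F^0}\boldsymbol\varepsilon_i$, because the spanning condition kills $\mathbf G\boldsymbol\gamma_i$ exactly. The required bounds are $T^{-1}\mathbf E_{-\tau}'\mathbf M_{\mathbf F^0}\boldsymbol\varepsilon_i=O_p(\delta_{NT}^{-2})$ and $T^{-1}\mathbf X_{i,-\tau}'\mathbf E_{-\tau}=O_p(\delta_{NT}^{-2})$, which are the lag-$\tau$ analogues of \eqref{lem_eps}, \eqref{lem_v} and \eqref{lem_W}. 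Uniformity over $\tau$ follows because the set of lag orders is finite.

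The main obstacle is establishing these second-order cross-product rates for lagged factors against the current-period error $\boldsymbol\varepsilon_i$ and against $\mathbf M_{\mathbf F^0}$. Lemma~\ref{bailemma} states them explicitly only for $\tau=0$. I would obtain them first from the Bai-type representation $\mathbf E_{-\tau}=T^{-1}N^{-1}\sum_j(\widetilde{\mathbf V}_{j,-\tau}\boldsymbol\Lambda_j'\mathbf F^{0\prime}_{-\tau}+\cdots)\widehat{\mathbf F}_{-\tau}\mathbf V_{NT}^{-1}$. In that representation the own-$i$ terms are $O(N^{-1})$, and independence across $j$ together with Assumption~A.2 gives the cross terms. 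Where that argument is delicate, I would invoke the ``analogous bounds'' clause of Assumption~A.4.
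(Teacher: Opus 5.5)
Your proposal is correct and follows the paper's proof. It uses the same expansion of $\mathbf M_{\widehat{\mathbf F}}-\mathbf M_{\mathbf F^0}$ in $\mathbf E=\widehat{\mathbf F}-\mathbf F^0\mathbf H_x$, bounded term by term with the second-order rates of Lemma~\ref{bailemma} for \eqref{LEM_ZMU0_r}--\eqref{LEM_ZMU0_e1}. It also uses the same spanning-condition reduction of $\mathbf M_{\mathbf F^0}\mathbf u_i$ to $\mathbf M_{\mathbf F^0}\boldsymbol\varepsilon_i$ for \eqref{LEM_ZMU0_e3}.

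The one real variation is in \eqref{LEM_ZMU0_e2}. The paper splits $\mathbf M_{\widehat{\mathbf F}_{-\tau}}=\mathbf M_{\mathbf F^0_{-\tau}}+(\mathbf M_{\widehat{\mathbf F}_{-\tau}}-\mathbf M_{\mathbf F^0_{-\tau}})$. This lets $\mathbf M_{\mathbf F^0_{-\tau}}$ absorb the factor part of $\mathbf X_{i,-\tau}$, and the double-difference term is then bounded crudely by two $O_p(\delta_{NT}^{-1})$ operator norms. Your $\mathbf I-\mathbf P_{\widehat{\mathbf F}_{-\tau}}$ split instead needs the lagged analogue $T^{-1}\mathbf F^{0\prime}_{-\tau}\mathbf E=O_p(\delta_{NT}^{-2})$ of \eqref{lem_f0} in both pieces, which nearly cancel. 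That is one more lag-$\tau$ bound of the kind you already flag as not stated in Lemma~\ref{bailemma}.
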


\begin{proof}
[\textbf{Proof of Lemma \ref{LEM_ZMU0}}]

Let
\[
\mathbf E
=
\widehat{\mathbf F}
-
\mathbf F^0\mathbf H_x,
\qquad
\mathbf S_F
=
\frac{\mathbf F^{0\prime}\mathbf F^0}{T}.
\]
Since
$\widehat{\mathbf F}'\widehat{\mathbf F}/T=\mathbf I_{K_f}$,
\[
\mathbf M_{\widehat{\mathbf F}}
-
\mathbf M_{\mathbf F^0}
=
-
\left(
\frac{\widehat{\mathbf F}\widehat{\mathbf F}'}{T}
-
\mathbf P_{\mathbf F^0}
\right),
\]
and
\begin{align}
\frac{\widehat{\mathbf F}\widehat{\mathbf F}'}{T}
-
\mathbf P_{\mathbf F^0}
={}&
\frac{\mathbf E\mathbf H_x'\mathbf F^{0\prime}}{T}
+
\frac{\mathbf E\mathbf E'}{T}
+
\frac{\mathbf F^0\mathbf H_x\mathbf E'}{T}
\nonumber\\
&+
\frac{\mathbf F^0}{T}
\left[
\mathbf H_x\mathbf H_x'
-
\mathbf S_F^{-1}
\right]
\mathbf F^{0\prime}.
\label{eq:projection_expansion}
\end{align}

Lemma~\ref{bailemma} implies
\[
\frac{\|\mathbf E\|}{\sqrt T}
=
O_p(\delta_{NT}^{-1}),
\qquad
\left\|
\frac{\mathbf F^{0\prime}\mathbf E}{T}
\right\|
=
O_p(\delta_{NT}^{-2}),
\]
\[
\left\|
\frac{\mathbf G'\mathbf E}{T}
\right\|
=
O_p(\delta_{NT}^{-2}),
\qquad
\left\|
\frac{\boldsymbol\varepsilon_i'\mathbf E}{T}
\right\|
=
O_p(\delta_{NT}^{-2}),
\]
and
\[
\left\|
\mathbf H_x\mathbf H_x'
-
\mathbf S_F^{-1}
\right\|
=
O_p(\delta_{NT}^{-2}).
\]
Since
\[
\mathbf u_i
=
\mathbf G\boldsymbol\gamma_i
+
\boldsymbol\varepsilon_i,
\]
it follows that
\begin{equation}
\left\|
\frac{\mathbf E'\mathbf u_i}{T}
\right\|
=
O_p(\delta_{NT}^{-2}),
\label{eq:Eui_rate}
\end{equation}
while
\begin{equation}
\frac{\|\mathbf u_i\|}{\sqrt T}=O_p(1),
\qquad
\left\|
\frac{\mathbf F^{0\prime}\mathbf u_i}{T}
\right\|
=
O_p(1).
\label{eq:u_rates}
\end{equation}

For the leave-one-out return block, Eq.~\eqref{LEM_ZMU0_r} follows by substituting
Eq.~\eqref{eq:projection_expansion} and using the factor-estimation and cross-product bounds in Lemma~\ref{bailemma}.

For the contemporaneous micro block, Lemma~\ref{bailemma} and the
maintained moment conditions give
\[
\left\|
\frac{\mathbf X_i'\mathbf E}{T}
\right\|
=
O_p(\delta_{NT}^{-2}),
\qquad
\left\|
\frac{\mathbf X_i'\mathbf F^0}{T}
\right\|
=
O_p(1).
\]
Substituting these bounds together with
Eqs.~\eqref{eq:Eui_rate}--\eqref{eq:u_rates} into
Eq.~\eqref{eq:projection_expansion} yields
\[
T^{-1/2}\mathbf X_i'
\left(
\mathbf M_{\widehat{\mathbf F}}
-
\mathbf M_{\mathbf F^0}
\right)\mathbf u_i
=
\sqrt T\,O_p(\delta_{NT}^{-2}),
\]
which establishes Eq.~\eqref{LEM_ZMU0_e1}.

For $\tau=0,\ldots,\max\{\zeta_x,\zeta_y\}$, define
\[
\mathbf E_{-\tau}
=
\widehat{\mathbf F}_{-\tau}
-
\mathbf F^0_{-\tau}\mathbf H_{x,-\tau}.
\]
The PCA results of Lemma~\ref{bailemma} hold uniformly over this finite
set, so that
\[
\frac{\|\mathbf E_{-\tau}\|}{\sqrt T}
=
O_p(\delta_{NT}^{-1})
\]
together with the corresponding second-order cross-product bounds.

For Eq.~\eqref{LEM_ZMU0_e2}, write
\begin{align*}
&
\frac{1}{\sqrt T}
\mathbf X_{i,-\tau}'
\mathbf M_{\widehat{\mathbf F}_{-\tau}}
\left(
\mathbf M_{\widehat{\mathbf F}}
-
\mathbf M_{\mathbf F^0}
\right)\mathbf u_i
\\
={}&
\frac{1}{\sqrt T}
\mathbf X_{i,-\tau}'
\mathbf M_{\mathbf F^0_{-\tau}}
\left(
\mathbf M_{\widehat{\mathbf F}}
-
\mathbf M_{\mathbf F^0}
\right)\mathbf u_i
\\
&+
\frac{1}{\sqrt T}
\mathbf X_{i,-\tau}'
\left(
\mathbf M_{\widehat{\mathbf F}_{-\tau}}
-
\mathbf M_{\mathbf F^0_{-\tau}}
\right)
\left(
\mathbf M_{\widehat{\mathbf F}}
-
\mathbf M_{\mathbf F^0}
\right)\mathbf u_i.
\end{align*}
The first term is
$\sqrt T\,O_p(\delta_{NT}^{-2})$ by the same projection expansion as
above. For the second term,
\[
\left\|
\mathbf M_{\widehat{\mathbf F}_{-\tau}}
-
\mathbf M_{\mathbf F^0_{-\tau}}
\right\|_2
=
O_p(\delta_{NT}^{-1}),
\]
\[
\left\|
\mathbf M_{\widehat{\mathbf F}}
-
\mathbf M_{\mathbf F^0}
\right\|_2
=
O_p(\delta_{NT}^{-1}),
\]
and
\[
T^{-1/2}\|\mathbf X_{i,-\tau}\|=O_p(1),
\qquad
T^{-1/2}\|\mathbf u_i\|=O_p(1),
\]
so the second term is also
$\sqrt T\,O_p(\delta_{NT}^{-2})$. This proves
Eq.~\eqref{LEM_ZMU0_e2}.

Finally, apply the same projector expansion to
\[
\mathbf M_{\widehat{\mathbf F}_{-\tau}}
-
\mathbf M_{\mathbf F^0_{-\tau}}.
\]
The spanning condition
\[
\operatorname{col}(\mathbf G)
\subseteq
\operatorname{col}(\mathbf F^0)
\]
implies
\[
\mathbf M_{\mathbf F^0}\mathbf u_i
=
\mathbf M_{\mathbf F^0}\boldsymbol\varepsilon_i.
\]
The corresponding factor-error cross-products therefore satisfy the second-order bounds in  Lemma~\ref{bailemma}, yielding Eq.~\eqref{LEM_ZMU0_e3}. The same arguments apply to the market-risk block and its selected lags.
\end{proof}

\begin{proof}[\textbf{Proof of Theorem \ref{theorem:non_reducibility}}]
We prove each statement in turn.

\noindent
(i) Since
$y_{i,t}=(N_i-1)^{-1}
\sum_{j\in\mathcal S_i,j\neq i}x_{j,t}$,
the difference between the aggregated micro contribution of the remaining
commodities in sector $\mathcal S_i$ and the market contribution is
\[
\sum_{\substack{j\in\mathcal S_i\\j\neq i}}
\alpha_jx_{j,t}
-\beta_i y_{i,t}
=
\sum_{\substack{j\in\mathcal S_i\\j\neq i}}
\left(\alpha_j-\frac{\beta_i}{N_i-1}\right)x_{j,t}.
\]
Hence, the two contributions coincide at every sample date if and only if
\[
\sum_{\substack{j\in\mathcal S_i\\j\neq i}}
\left(\alpha_j-\frac{\beta_i}{N_i-1}\right)x_{j,t}=0
\qquad
\text{for every }t.
\]
It follows immediately that
$\alpha_j=\beta_i/(N_i-1)$ for all
$j\in\mathcal S_i$, $j\neq i$, is sufficient for equality.

To establish the additional necessity result, let
$\mathbf{x}_{-i,t}^{(s)}$ collect the characteristics $x_{j,t}$ for
$j\in\mathcal S_i$, $j\neq i$, and let $\mathbf d_i^{(s)}$ collect the
corresponding coefficient differences
$\alpha_j-\beta_i/(N_i-1)$. Define
\[
\mathbf Q_{i,T}^{(s)}
=
\frac{1}{T}\sum_{t=1}^T
\mathbf{x}_{-i,t}^{(s)}
\mathbf{x}_{-i,t}^{(s)\prime}.
\]
If the two contributions coincide at every sample date, then
\[
\mathbf d_i^{(s)\prime}\mathbf Q_{i,T}^{(s)}\mathbf d_i^{(s)}
=
\frac{1}{T}\sum_{t=1}^T
\left(
\mathbf d_i^{(s)\prime}\mathbf{x}_{-i,t}^{(s)}
\right)^2
=0.
\]
If $\mathbf Q_{i,T}^{(s)}$ is nonsingular, this implies
$\mathbf d_i^{(s)}=\mathbf 0$, and therefore
$\alpha_j=\beta_i/(N_i-1)$ for every
$j\in\mathcal S_i$, $j\neq i$.
Likewise, if equality is required as an identity for every possible
realization of the sector characteristics, setting each characteristic equal
to one in turn while setting the others equal to zero implies
$\mathbf d_i^{(s)}=\mathbf 0$.

\medskip
\noindent
(ii) Since $y_{i,t}=\lambda_i g_t+\eta_{i,t}$, the difference between the
macro and market contributions is
\[
\gamma_i g_t-\beta_i y_{i,t}
=
(\gamma_i-\beta_i\lambda_i)g_t-\beta_i\eta_{i,t}.
\]
Hence, the two contributions coincide at every sample date if and only if
\[
\beta_i\eta_{i,t}
=
(\gamma_i-\beta_i\lambda_i)g_t
\qquad
\text{for every }t.
\]
When $\beta_i\neq0$, division by $\beta_i$ gives
\[
\eta_{i,t}
=
\left(\frac{\gamma_i}{\beta_i}-\lambda_i\right)g_t
\qquad
\text{for every }t.
\]
Thus, equality requires $\eta_{i,t}$ to be exactly proportional to $g_t$.
If $g_t$ and $\eta_{i,t}$ are linearly independent over the sample, this is
impossible when $\beta_i\neq0$. When $\beta_i=0$, equality reduces to
$\gamma_i g_t=0$ for every $t$. Since $g_t$ is not identically zero, this
requires $\gamma_i=0$. Hence, under linear independence, equality is possible
only in the trivial case $\beta_i=\gamma_i=0$.
\end{proof}

The following proposition establishes the limiting property of $T^{-1/2}\widehat{\mathbf Z}_{i}'
\mathbf M_{\widehat{\mathbf F}}\mathbf u_i$
under the maintained assumptions and is used in the proof of Theorem~\ref{TH2}.

\begin{proposition}
\label{PROP_ZMFi}
Consider the model in Eq.~\eqref{compact} and the instrument matrix
in Eq.~\eqref{Instruments_matrix}. Define the infeasible population
counterpart of $\widehat{\mathbf Z}_i$ as
\begin{align}
\mathbf Z_i
=
\Big(
&
\mathbf r_{-i},\;
\mathbf r_{-i,-1},\ldots,
\mathbf r_{-i,-\zeta_r},
\nonumber\\
&
\mathbf M_{\mathbf F^0}\mathbf X_i,\;
\mathbf M_{\mathbf F^0_{-1}}\mathbf X_{i,-1},
\ldots,
\mathbf M_{\mathbf F^0_{-\zeta_x}}
\mathbf X_{i,-\zeta_x},
\nonumber\\
&
\mathbf M_{\mathbf F^0}\mathbf Y_i,\;
\mathbf M_{\mathbf F^0_{-1}}\mathbf Y_{i,-1},
\ldots,
\mathbf M_{\mathbf F^0_{-\zeta_y}}
\mathbf Y_{i,-\zeta_y}
\Big).
\label{population_instruments}
\end{align}
Under Assumptions 1--5 and the regularity conditions stated in this
Appendix, as $N,T\to\infty$ such that $N/T\to c\in(0,\infty)$,
\begin{equation}
T^{-1/2}
\widehat{\mathbf Z}_i'
\mathbf M_{\widehat{\mathbf F}}
\mathbf u_i
=
T^{-1/2}
\mathbf Z_i'
\mathbf M_{\mathbf F^0}
\mathbf u_i
+
\sqrt T\,
O_p(\delta_{NT}^{-2}).
\label{prop_zmfi_result}
\end{equation}
Since $N/T\to c\in(0,\infty)$,
\[
\sqrt T\,\delta_{NT}^{-2}=O(T^{-1/2})=o(1).
\]
\end{proposition}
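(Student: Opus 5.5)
The argument is block by block. The instrument matrix $\widehat{\mathbf Z}_i$ in Eq.~\eqref{Instruments_matrix} has a finite, fixed number of columns, organised in three blocks: the undefactored leave-one-out returns $\mathbf r_{-i,-\tau}$, the defactored micro characteristics $\mathbf M_{\widehat{\mathbf F}_{-\tau}}\mathbf X_{i,-\tau}$, and the defactored market characteristics $\mathbf M_{\widehat{\mathbf F}_{-\tau}}\mathbf Y_{i,-\tau}$. It therefore suffices to show that, for each column, the feasible moment $T^{-1/2}\widehat{\mathbf z}'\mathbf M_{\widehat{\mathbf F}}\mathbf u_i$ differs from its infeasible counterpart $T^{-1/2}\mathbf z'\mathbf M_{\mathbf F^0}\mathbf u_i$ by $\sqrt T\,O_p(\delta_{NT}^{-2})$. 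Stacking these finitely many bounds then gives Eq.~\eqref{prop_zmfi_result}.

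\textbf{Leave-one-out return block.} Here $\widehat{\mathbf z}=\mathbf z=\mathbf r_{-i,-\tau}$, so the only source of discrepancy is the outer projector. The difference is $T^{-1/2}\mathbf r_{-i,-\tau}'(\mathbf M_{\widehat{\mathbf F}}-\mathbf M_{\mathbf F^0})\mathbf u_i$, which is exactly Eq.~\eqref{LEM_ZMU0_r} of Lemma~\ref{LEM_ZMU0}.

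\textbf{Defactored micro block.} For a lag $\tau$, add and subtract to obtain the decomposition
\begin{align*}
\mathbf X_{i,-\tau}'\mathbf M_{\widehat{\mathbf F}_{-\tau}}\mathbf M_{\widehat{\mathbf F}}\mathbf u_i
-\mathbf X_{i,-\tau}'\mathbf M_{\mathbf F^0_{-\tau}}\mathbf M_{\mathbf F^0}\mathbf u_i
={}&\mathbf X_{i,-\tau}'\mathbf M_{\widehat{\mathbf F}_{-\tau}}\left(\mathbf M_{\widehat{\mathbf F}}-\mathbf M_{\mathbf F^0}\right)\mathbf u_i\\
&+\mathbf X_{i,-\tau}'\left(\mathbf M_{\widehat{\mathbf F}_{-\tau}}-\mathbf M_{\mathbf F^0_{-\tau}}\right)\mathbf M_{\mathbf F^0}\mathbf u_i .
\end{align*}
After scaling by $T^{-1/2}$, the first term is controlled by Eq.~\eqref{LEM_ZMU0_e2} and the second by Eq.~\eqref{LEM_ZMU0_e3}. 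For $\tau=0$, the first term is the same as in Eq.~\eqref{LEM_ZMU0_e2} with $\mathbf M_{\widehat{\mathbf F}}$ on both sides. One can also rewrite it as $\mathbf X_i'(\mathbf M_{\widehat{\mathbf F}}-\mathbf M_{\mathbf F^0})\mathbf u_i$ minus a product of two projector differences, and bound the latter in spectral norm by $O_p(\delta_{NT}^{-2})\cdot O_p(T)$, using Eq.~\eqref{LEM_ZMU0_e1}.

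\textbf{Defactored market block.} The same decomposition applies with $\mathbf Y_{i,-\tau}$ in place of $\mathbf X_{i,-\tau}$, using the final sentence of Lemma~\ref{LEM_ZMU0}. Collecting the three blocks yields the result. The closing remark is then immediate: $N/T\to c$ gives $\delta_{NT}^{2}\asymp T$, so $\sqrt T\,\delta_{NT}^{-2}=O(T^{-1/2})$.

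\textbf{Main obstacle.} The delicate step is the second-order character of the cross terms, particularly Eq.~\eqref{LEM_ZMU0_e3}. A crude spectral-norm bound $\|\mathbf M_{\widehat{\mathbf F}_{-\tau}}-\mathbf M_{\mathbf F^0_{-\tau}}\|_2=O_p(\delta_{NT}^{-1})$ only delivers $\sqrt T\,O_p(\delta_{NT}^{-1})$, which does not vanish. To get the $\delta_{NT}^{-2}$ rate, I would expand the projector difference as in Eq.~\eqref{eq:projection_expansion} and use the spanning condition to replace $\mathbf M_{\mathbf F^0}\mathbf u_i$ by $\mathbf M_{\mathbf F^0}\boldsymbol\varepsilon_i$. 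Each term then involves a product $\mathbf E_{-\tau}'\boldsymbol\varepsilon_i/T$ or $\mathbf X_{i,-\tau}'\mathbf E_{-\tau}/T$, which are $O_p(\delta_{NT}^{-2})$ by Lemma~\ref{bailemma}, Eqs.~\eqref{lem_eps}–\eqref{lem_W}.

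This step also needs a lagged analogue of those cross-product bounds, linking $\mathbf E_{-\tau}$ with the contemporaneous $\mathbf F^0$ and $\boldsymbol\varepsilon_i$. I would take it from the uniformity over $\tau$ asserted in Lemma~\ref{bailemma} together with Assumption~A.4.
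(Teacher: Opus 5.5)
Your proposal is correct and follows the paper's proof essentially step for step. It uses the same block-by-block reduction over the finitely many instrument columns, Eq.~\eqref{LEM_ZMU0_r} for the undefactored return block, and the same two-stage add-and-subtract for the defactored micro and market blocks, with the pieces controlled by Eqs.~\eqref{LEM_ZMU0_e2} and~\eqref{LEM_ZMU0_e3} of Lemma~\ref{LEM_ZMU0}. The one place to tidy is $\tau=0$: the paper uses idempotency, $(\mathbf M_{\widehat{\mathbf F}}\mathbf X_i)'\mathbf M_{\widehat{\mathbf F}}\mathbf u_i=\mathbf X_i'\mathbf M_{\widehat{\mathbf F}}\mathbf u_i$, so the discrepancy is exactly the term in Eq.~\eqref{LEM_ZMU0_e1}; your optional alternative there is not right as stated, since the leftover $\mathbf M_{\widehat{\mathbf F}}(\mathbf M_{\widehat{\mathbf F}}-\mathbf M_{\mathbf F^0})-(\mathbf M_{\widehat{\mathbf F}}-\mathbf M_{\mathbf F^0})=(\mathbf I_T-\mathbf M_{\widehat{\mathbf F}})\mathbf M_{\mathbf F^0}$ is not a product of two projector differences and is only $O_p(\delta_{NT}^{-1})$ in spectral norm (it is cancelled exactly by your second term), but your main argument does not rely on it.
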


\begin{proof}
[\textbf{Proof of Proposition \ref{PROP_ZMFi}}]

Recall that
\[
\mathbf r_i
=
\mathbf C_i\boldsymbol\theta_i+\mathbf u_i,
\qquad
\mathbf C_i
=
(\mathbf r_{-i},\mathbf X_i,\mathbf Y_i).
\]
We consider each block of
\[
T^{-1/2}
\widehat{\mathbf Z}_i'
\mathbf M_{\widehat{\mathbf F}}
\mathbf u_i
\]
in turn.

\medskip
\noindent
\textit{Leave-one-out return block.}

For $\tau=0,\ldots,\zeta_r$,
\begin{align}
\frac{1}{\sqrt T}
\mathbf r_{-i,-\tau}'
\mathbf M_{\widehat{\mathbf F}}
\mathbf u_i
&=
\frac{1}{\sqrt T}
\mathbf r_{-i,-\tau}'
\mathbf M_{\mathbf F^0}
\mathbf u_i
\nonumber\\
&\quad+
\frac{1}{\sqrt T}
\mathbf r_{-i,-\tau}'
\left(
\mathbf M_{\widehat{\mathbf F}}
-
\mathbf M_{\mathbf F^0}
\right)
\mathbf u_i
\nonumber\\
&=
\frac{1}{\sqrt T}
\mathbf r_{-i,-\tau}'
\mathbf M_{\mathbf F^0}
\mathbf u_i
+
\sqrt T\,O_p(\delta_{NT}^{-2}),
\label{prop_r_block}
\end{align}
by Lemma~\ref{LEM_ZMU0}.

\medskip
\noindent
\textit{Contemporaneous micro block.}

The contemporaneous micro instrument is
$\mathbf M_{\widehat{\mathbf F}}\mathbf X_i$. By symmetry and
idempotency of $\mathbf M_{\widehat{\mathbf F}}$, its contribution to
the sample moment is
\[
\frac{1}{\sqrt T}
\mathbf X_i'
\mathbf M_{\widehat{\mathbf F}}
\mathbf u_i.
\]
Hence
\begin{align}
\frac{1}{\sqrt T}
\mathbf X_i'
\mathbf M_{\widehat{\mathbf F}}
\mathbf u_i
&=
\frac{1}{\sqrt T}
\mathbf X_i'
\mathbf M_{\mathbf F^0}
\mathbf u_i
\nonumber\\
&\quad+
\frac{1}{\sqrt T}
\mathbf X_i'
\left(
\mathbf M_{\widehat{\mathbf F}}
-
\mathbf M_{\mathbf F^0}
\right)
\mathbf u_i
\nonumber\\
&=
\frac{1}{\sqrt T}
\mathbf X_i'
\mathbf M_{\mathbf F^0}
\mathbf u_i
+
\sqrt T\,O_p(\delta_{NT}^{-2}),
\label{prop_x0}
\end{align}
by Lemma~\ref{LEM_ZMU0}.

\medskip
\noindent
\textit{Lagged micro block.}

For $\tau=1,\ldots,\zeta_x$, the corresponding sample moment is
\[
\frac{1}{\sqrt T}
\mathbf X_{i,-\tau}'
\mathbf M_{\widehat{\mathbf F}_{-\tau}}
\mathbf M_{\widehat{\mathbf F}}
\mathbf u_i.
\]
Adding and subtracting $\mathbf M_{\mathbf F^0}$ gives
\begin{align*}
&
\frac{1}{\sqrt T}
\mathbf X_{i,-\tau}'
\mathbf M_{\widehat{\mathbf F}_{-\tau}}
\mathbf M_{\widehat{\mathbf F}}
\mathbf u_i
\\
={}&
\frac{1}{\sqrt T}
\mathbf X_{i,-\tau}'
\mathbf M_{\widehat{\mathbf F}_{-\tau}}
\mathbf M_{\mathbf F^0}
\mathbf u_i
\\
&+
\frac{1}{\sqrt T}
\mathbf X_{i,-\tau}'
\mathbf M_{\widehat{\mathbf F}_{-\tau}}
\left(
\mathbf M_{\widehat{\mathbf F}}
-
\mathbf M_{\mathbf F^0}
\right)
\mathbf u_i.
\end{align*}
By Eq.~\eqref{LEM_ZMU0_e2}, the second term is
$\sqrt T\,O_p(\delta_{NT}^{-2})$.

Next,
\begin{align*}
&
\frac{1}{\sqrt T}
\mathbf X_{i,-\tau}'
\mathbf M_{\widehat{\mathbf F}_{-\tau}}
\mathbf M_{\mathbf F^0}
\mathbf u_i
\\
={}&
\frac{1}{\sqrt T}
\mathbf X_{i,-\tau}'
\mathbf M_{\mathbf F^0_{-\tau}}
\mathbf M_{\mathbf F^0}
\mathbf u_i
\\
&+
\frac{1}{\sqrt T}
\mathbf X_{i,-\tau}'
\left(
\mathbf M_{\widehat{\mathbf F}_{-\tau}}
-
\mathbf M_{\mathbf F^0_{-\tau}}
\right)
\mathbf M_{\mathbf F^0}
\mathbf u_i.
\end{align*}
By Eq.~\eqref{LEM_ZMU0_e3}, the second term is again
$\sqrt T\,O_p(\delta_{NT}^{-2})$. Therefore,
\begin{equation}
\frac{1}{\sqrt T}
\mathbf X_{i,-\tau}'
\mathbf M_{\widehat{\mathbf F}_{-\tau}}
\mathbf M_{\widehat{\mathbf F}}
\mathbf u_i
=
\frac{1}{\sqrt T}
\mathbf X_{i,-\tau}'
\mathbf M_{\mathbf F^0_{-\tau}}
\mathbf M_{\mathbf F^0}
\mathbf u_i
+
\sqrt T\,O_p(\delta_{NT}^{-2}).
\label{prop_xtau}
\end{equation}

\medskip
\noindent
\textit{Market-risk block.}

The same arguments apply to the contemporaneous and lagged market-risk
characteristics. In particular,
\begin{equation}
\frac{1}{\sqrt T}
\mathbf Y_i'
\mathbf M_{\widehat{\mathbf F}}
\mathbf u_i
=
\frac{1}{\sqrt T}
\mathbf Y_i'
\mathbf M_{\mathbf F^0}
\mathbf u_i
+
\sqrt T\,O_p(\delta_{NT}^{-2}),
\label{prop_y0}
\end{equation}
and, for $\tau=1,\ldots,\zeta_y$,
\begin{align}
&
\frac{1}{\sqrt T}
\mathbf Y_{i,-\tau}'
\mathbf M_{\widehat{\mathbf F}_{-\tau}}
\mathbf M_{\widehat{\mathbf F}}
\mathbf u_i
\nonumber\\
&=
\frac{1}{\sqrt T}
\mathbf Y_{i,-\tau}'
\mathbf M_{\mathbf F^0_{-\tau}}
\mathbf M_{\mathbf F^0}
\mathbf u_i
+
\sqrt T\,O_p(\delta_{NT}^{-2}).
\label{prop_ytau}
\end{align}

Collecting Eqs.~\eqref{prop_r_block}, \eqref{prop_x0},
\eqref{prop_xtau}, \eqref{prop_y0}, and \eqref{prop_ytau} across the
fixed number of instrument blocks yields
\[
T^{-1/2}
\widehat{\mathbf Z}_i'
\mathbf M_{\widehat{\mathbf F}}
\mathbf u_i
=
T^{-1/2}
\mathbf Z_i'
\mathbf M_{\mathbf F^0}
\mathbf u_i
+
\sqrt T\,O_p(\delta_{NT}^{-2}),
\]
which proves the result.
\end{proof}

\begin{proof}[\textbf{Proof of Theorem \ref{TH2}}]
We work conditionally on $\mathcal H$ for a fixed $i$. Substituting $\mathbf r_i=\mathbf C_i\boldsymbol\theta_i+\mathbf u_i$ into the estimator gives
\[
\sqrt T
\left(
\widehat{\boldsymbol\theta}_i-\boldsymbol\theta_i
\right)
=
\widehat{\mathbf K}_{iT}
\frac{1}{\sqrt T}
\widehat{\mathbf Z}_i'
\mathbf M_{\widehat{\mathbf F}}
\mathbf u_i,
\]
where
\[
\widehat{\mathbf K}_{iT}
=
\left(
\widehat{\mathbf A}_{i,T}'
\widehat{\mathbf B}_{i,T}^{-1}
\widehat{\mathbf A}_{i,T}
\right)^{-1}
\widehat{\mathbf A}_{i,T}'
\widehat{\mathbf B}_{i,T}^{-1}.
\]
By Assumption~5,
\[
\widehat{\mathbf K}_{iT}
\overset{p}{\longrightarrow}
\mathbf K_i
=
\left(
\mathbf A_i'\mathbf B_i^{-1}\mathbf A_i
\right)^{-1}
\mathbf A_i'\mathbf B_i^{-1}.
\]
Moreover, Proposition~\ref{PROP_ZMFi} gives
\[
\frac{1}{\sqrt T}
\widehat{\mathbf Z}_i'
\mathbf M_{\widehat{\mathbf F}}
\mathbf u_i
=
\frac{1}{\sqrt T}
\mathbf Z_i'
\mathbf M_{\mathbf F^0}
\mathbf u_i
+o_p(1).
\]
By the population-score representation and the conditional central limit
theorem in Assumption~5,
\[
\frac{1}{\sqrt T}
\mathbf Z_i'
\mathbf M_{\mathbf F^0}
\mathbf u_i
=
\mathbf a_{iT}+o_p(1)
\overset{d}{\longrightarrow}
N(\mathbf 0,\boldsymbol\Omega_i).
\]
Hence, by Slutsky's theorem,
\[
\sqrt T
\left(
\widehat{\boldsymbol\theta}_i-\boldsymbol\theta_i
\right)
\overset{d}{\longrightarrow}
N\left(
\mathbf 0,\,
\mathbf K_i\boldsymbol\Omega_i\mathbf K_i'
\right)
\]
conditionally on $\mathcal H$. Since
\[
\mathbf K_i\boldsymbol\Omega_i\mathbf K_i'
=
(\mathbf A_i'\mathbf B_i^{-1}\mathbf A_i)^{-1}
\mathbf A_i'\mathbf B_i^{-1}
\boldsymbol\Omega_i
\mathbf B_i^{-1}\mathbf A_i
(\mathbf A_i'\mathbf B_i^{-1}\mathbf A_i)^{-1}
=
\mathbf V_i,
\]
the stated result follows.
\end{proof}

\begin{proof}[\textbf{Proof of Theorem \ref{TH3}}]
We work conditionally on $\mathcal H$ for a fixed $i$. Define
\[
\widehat{\mathbf s}_{iT}
=
\frac{1}{\sqrt T}
\widehat{\mathbf Z}_i'
\mathbf M_{\widehat{\mathbf F}}
\mathbf u_i.
\]
By Proposition~\ref{PROP_ZMFi} and Assumption~5,
\[
\widehat{\mathbf s}_{iT}
\overset{d}{\longrightarrow}
N(\mathbf 0,\boldsymbol\Omega_i).
\]
Moreover, by Assumption~A.5,
\[
\widehat{\boldsymbol\Omega}_{iT}
\overset{p}{\longrightarrow}
\boldsymbol\Omega_i.
\]

Using
$\mathbf r_i=\mathbf C_i\boldsymbol\theta_i+\mathbf u_i$ and the definition
of $\widetilde{\boldsymbol\theta}_i$,
\[
\sqrt T\,
\overline{\mathbf g}_{iT}
(\widetilde{\boldsymbol\theta}_i)
=
\widehat{\mathbf s}_{iT}
-
\widehat{\mathbf A}_{i,T}
\sqrt T
(\widetilde{\boldsymbol\theta}_i-\boldsymbol\theta_i).
\]
The first-order conditions for
$\widetilde{\boldsymbol\theta}_i$ imply
\[
\sqrt T
(\widetilde{\boldsymbol\theta}_i-\boldsymbol\theta_i)
=
\left(
\widehat{\mathbf A}_{i,T}'
\widehat{\boldsymbol\Omega}_{iT}^{-1}
\widehat{\mathbf A}_{i,T}
\right)^{-1}
\widehat{\mathbf A}_{i,T}'
\widehat{\boldsymbol\Omega}_{iT}^{-1}
\widehat{\mathbf s}_{iT}.
\]
Hence,
\[
\widehat{\boldsymbol\Omega}_{iT}^{-1/2}
\sqrt T\,
\overline{\mathbf g}_{iT}
(\widetilde{\boldsymbol\theta}_i)
=
\left(
\mathbf I_{q_i}-\widehat{\mathbf P}_{iT}
\right)
\widehat{\boldsymbol\Omega}_{iT}^{-1/2}
\widehat{\mathbf s}_{iT},
\]
where
\[
\widehat{\mathbf P}_{iT}
=
\widehat{\boldsymbol\Omega}_{iT}^{-1/2}
\widehat{\mathbf A}_{i,T}
\left(
\widehat{\mathbf A}_{i,T}'
\widehat{\boldsymbol\Omega}_{iT}^{-1}
\widehat{\mathbf A}_{i,T}
\right)^{-1}
\widehat{\mathbf A}_{i,T}'
\widehat{\boldsymbol\Omega}_{iT}^{-1/2}.
\]
Since $\mathbf A_i$ has full column rank $p$,
$\widehat{\mathbf P}_{iT}$ converges in probability to a rank-$p$
orthogonal projection matrix. Therefore,
$\mathbf I_{q_i}-\widehat{\mathbf P}_{iT}$ converges to an idempotent
projection matrix of rank $q_i-p$. Since
\[
\boldsymbol\Omega_i^{-1/2}\widehat{\mathbf s}_{iT}
\overset{d}{\longrightarrow}
N(\mathbf 0,\mathbf I_{q_i}),
\]
the quadratic form
\[
S_{iT}
=
T\,
\overline{\mathbf g}_{iT}
(\widetilde{\boldsymbol\theta}_i)'
\widehat{\boldsymbol\Omega}_{iT}^{-1}
\overline{\mathbf g}_{iT}
(\widetilde{\boldsymbol\theta}_i)
\]
converges in distribution to
$\chi^2_{q_i-p}$ conditionally on $\mathcal H$.
\end{proof}

\section{Monte Carlo Study\label{sec:Appendix_MC}}

This section evaluates the finite-sample performance of the PCA-BMT method using diagnostic metrics commonly used in high-dimensional variable selection. These metrics are formally defined below.

\subsection{Design}

The data-generating process (DGP) is given by
\begin{equation}
u_{i,t}=\boldsymbol{\gamma}_{i}^{\prime} \mathbf{g}_{t} + \varepsilon_{i,t}, \label{DGP_MC}
\end{equation}
where $\mathbf{g}_{t}=\left(g_{1,t},\dots,g_{r,t}\right)^{\prime}$ denotes an $r \times 1$ vector of true common factors (or ``signals'') affecting all individuals albeit with heterogeneous loadings  $\boldsymbol{\gamma}_{i}$. The outcome $u_{i,t}$ represents the population counterpart of the first-stage residual $\widehat{u}_{i,t}$ used in Stage~2 of the empirical
procedure. The micro errors $\varepsilon_{i,t}$ are drawn as $ i.i.d.N(0,1)$ for all $i$ and $t$.

The common component is drawn as
\begin{equation}
g_{j,t}=\rho g_{j,t-1} +\sqrt{\left(1-\rho^2\right)}\epsilon_{j,t}; \quad j=1,\dots, r,
\end{equation}
where $\epsilon_{t} \sim i.i.d.N(0,1)$. The loadings are drawn from
\begin{equation}
\boldsymbol{\gamma}_{i}  \sim \mathcal{N}_r(\boldsymbol{\mu}_{r}, \boldsymbol{\Sigma}_{r}),
\end{equation}
with $\boldsymbol{\mu}_{r}=\phi\mathbf{1}_{r},$, where $\mathbf{1}_{r}=\left(1,1,\dots,1\right)^{\prime}$. The covariance matrix $\boldsymbol{\Sigma}_{r} \in \mathbb{R}^{r \times r}$ is given by
\begin{equation}
\boldsymbol{\Sigma}_{r} = 0.5  \mathbf{1}_{r} \mathbf{1}_{r}^{\prime} + 0.5 \cdot \mathbf{I}_{r},
\end{equation}
so that all diagonal entries equal 1 and all off-diagonal entries equal 0.5, implying substantial mutual correlation among the loadings.

In addition to the common components entering in Eq. \eqref{DGP_MC}, we generate $n-r$ additional common components that do not enter the DGP in Eq. \eqref{DGP_MC}:
\begin{equation}
\psi_{j,t} =
\begin{cases}
\rho \psi_{j,t-1} + \sqrt{1 - \rho^2} \left( \pi \upsilon_{j,t} + \sqrt{\left(1 - \pi\right)} \epsilon_{j,t} \right), & \text{for } j = 1,\dots, r, \\
\rho \psi_{j,t-1} + \sqrt{1 - \rho^2} \upsilon_{j,t}, & \text{for } j = r+1,\dots,n-r,
\end{cases} \label{Two_cases}
\end{equation}
where $\upsilon_{j,t} \sim i.i.d.N(0,1)$ and is independent of $\varepsilon_{i,t}$. The first $r$ variables in Eq.~\eqref{Two_cases} are referred to as ``pseudo-signals'' because, although irrelevant for the DGP, they are correlated with the true factors and may therefore be mistakenly selected
as informative. The remaining $n-2r$ variables are pure noise variables. This structure allows us to evaluate the performance of variable selection methods in the presence of confounding and irrelevant variables.

We consider two levels of signal dimensionality, $r\in\{2,5\}$, with $\rho=0.5$ and $\pi=0.5$, allowing for persistence in the common components and correlation between the true signals and pseudo-signals. The location parameter $\phi\in\{0,1\}$ determines whether the factor loadings are centred at zero, thereby affecting the performance of selection methods that rely
on pooling across $i$. We consider
\[
T\in\{25,50,100\},
\qquad
n\in\{25,50,100\},
\]
and set $N=T$. For each value of $r$, combining the three values of $T$, the three values of $n$, and the two values of $\phi$ yields a total of 18 simulation designs. For each choice of $r$, the analysis covers a total of 18 configurations.

\subsection{The Boosting with Multiple Testing Method}\label{subsec:BMT_method}

The Boosting with Multiple Testing (BMT) approach of \citet{KapetaniosEtal2026} is a forward selection procedure designed to recover a sparse set of relevant predictors from a high-dimensional set. It constructs the model iteratively by testing one covariate at a time, updating the model step-by-step, and dynamically adjusting the selection threshold to control the family-wise error rate. In what follows, we provide a high-level description of the method.

Let $\mathbf{e}\in \mathbb{R}^{T \times 1}$ denote one of the $r$ principal components extracted from the sample variance-covariance matrix of \( u_{i,t} \). The pool of \( n \) candidate variables, stored in matrix \( \boldsymbol{\mathcal{Z}}=\left(\boldsymbol{\mathcal{Z}}_{1}, \dots, \boldsymbol{\mathcal{Z}}_{n} \right)=\left(\mathbf{G}, \boldsymbol{\Psi} \right) \in \mathbb{R}^{T \times n} \), where $\mathbf{G}=\left(\mathbf{g}_{1}, \dots, \mathbf{g}_{T} \right)^{\prime}$ and $\boldsymbol{\Psi}=\left(\boldsymbol{\psi}_{1}, \dots, \boldsymbol{\psi}_{T} \right)^{\prime}$ with $\boldsymbol{\psi}_{t}=\left(\psi_{1,t}, \dots, \psi_{n-r,t} \right)^{\prime}$.

In the first stage, we run $n$ univariate regressions of $\mathbf{e}_{j}$ on $\boldsymbol{\mathcal{Z}}_{i}$ for $i=1,\dots,n$. Let $t_{i}$ denote the $t$-ratio from the regression involving $\boldsymbol{\mathcal{Z}}_{i}$, and write this as $t_{i,(1)}$, where the subscript $(1)$ indexes the stage. We select the variable with the largest absolute $t$-ratio, that is, the $\boldsymbol{\mathcal{Z}}_{i}$ for which $|t_{i,(1)}|$ is maximised over $i$. Denote the index of this selected variable by $\mathcal{S}_{(1)}^{c}$, and let $\boldsymbol{\mathcal{Z}}_{(1)}^{o}$ be the selected $T \times 1$ observation vector in the first stage. Define $\boldsymbol{\mathcal{Z}}_{(1)}=\boldsymbol{\mathcal{Z}}_{(1)}^{o}$, and set $\mathcal{S}_{(1)} = \mathcal{S}_{(1)}^{c}$ and $\mathfrak{A}_{(2)} = \{1,\dots,n\} \setminus \mathcal{S}_{(1)}$. In the second stage, we run regressions of $\mathbf{e}_{j}$ on $\boldsymbol{X}_{(1)}$ and $\boldsymbol{\mathcal{Z}}_{i}$, for $i \in \mathfrak{A}_{(2)}$, and obtain the corresponding $t$-ratios for $\boldsymbol{\mathcal{Z}}_{i}$, denoted by $t_{i,(2)}$.
We construct selection indicators, given by
\begin{equation}
\widehat{\mathcal{J}}_{i,(2)}=I[|t_{i,(2)}|>c_{p}\left(  n,\varphi\right)
]\text{, for }i\in\mathfrak{A}_{\left(  2\right)  }\text{, }%
\end{equation}
where $c_{p}(n,\varphi)$ is the critical value function, defined by%
\begin{equation}
c_{p}\left(  n,\varphi\right)  =\Phi^{-1}\left(  1-\frac{p}{2f\left(
n,\varphi\right)  }\right)  \text{,} \label{cvf}%
\end{equation}
$\Phi^{-1}\left(  .\right)  $ is the inverse of standard normal distribution
function, $f\left(  n,\delta\right)  =cn^{\varphi}$ for some positive constants
$\varphi$ and $c$, and $p$ ($0<p<1$) is the nominal size of the individual
tests to be set by the investigator. We refer to $\varphi$ as the critical
value exponent. Among regressors with $\widehat{\mathcal{J}}_{i,(2)}=1$, we select
the one with the maximum absolute value of $t_{i,(2)}$. We denote the index of the
selected unit by $\mathcal{S}_{\left(  2\right)  }^{c}$ and the $T\times1$
observation vector of the selected variable by $\boldsymbol{\mathcal{Z}}_{(2)}^{o}$. We let $\boldsymbol{\mathcal{Z}}_{(2)}=\left(\boldsymbol{\mathcal{Z}}_{(1)},\boldsymbol{\mathcal{Z}}_{(2)}^{o}  \right)$, a $T \times 2$ matrix, and $\mathcal{S}_{\left(  2\right)
}=\mathcal{S}_{\left(  1\right)  }\cup\mathcal{S}_{\left(  2\right)  }^{c}$,
$\mathfrak{A}_{\left(  3\right)  }=\left\{  1,...,n\right\}
\setminus\mathcal{S}_{\left(  2\right)  }$.

We proceed to subsequent stages $\ell = 3,4,\ldots,\widehat{\mathcal{L}}$ in the same manner, defining
$\boldsymbol{\mathcal{Z}}_{(\ell)} = \left(\boldsymbol{\mathcal{Z}}_{(\ell-1)}, \boldsymbol{\mathcal{Z}}_{(\ell)}^{o}\right)$, $\mathcal{S}_{(\ell)} = \mathcal{S}_{(\ell-1)} \cup \mathcal{S}_{(\ell)}^{c}$ and $\mathfrak{A}_{(\ell+1)} = \{1,\dots,n\} \setminus \mathcal{S}_{(\ell)}$; here, $\widehat{\mathcal{L}}$ denotes the (random) last stage at which a regressor is admitted to the model. 
At each stage $\ell$, the value of $n$ in $c_p(n,\delta)$ is set equal to the cardinality of $\mathfrak{A}_{(\ell)}$. In other words, at stage $\ell$ we condition on all variables selected up to stage $\ell-1$, and regress $\mathbf{e}_{1}$ on each remaining variable in the active set in turn. Finally, the algorithm returns a selected index set $\mathcal{S} \subset \{1, \dots, n\}$, and the corresponding design matrix $\boldsymbol{\mathcal{Z}}_{(\mathcal{S})}\in \mathbb{R}^{T \times |S|}$, formed by the columns of $\boldsymbol{\mathcal{Z}}$ indexed by $S$, where $|S|$ denotes the cardinality of the selected set.

\medskip
\noindent
The above procedure is applied separately to each extracted principal component $\mathbf e_{j}$, for $j=1,\dots,r$, yielding index sets $\mathcal S_{j}$. The overall macro-financial predictor set is then constructed as
\[
\mathcal S = \bigcup_{j=1}^{r} \mathcal S_{j},
\]
and the corresponding design matrix $\boldsymbol{\mathcal Z}_{(\mathcal S)}$ is used in the final estimation of commodity-specific macro risk sensitivities.
This iterative framework is designed to recover the true signals while filtering out pseudo-signals and pure noise variables.

\subsection{Lasso-type Methods}\label{subsec:Lasso_method}

In addition to the combined PCA-BMT approach discussed above, we consider two alternative approaches: a pooled Lasso estimator that imposes homogeneous coefficients across units, and an individual-level Lasso estimator that allows for heterogeneous coefficients.

The pooled Lasso (``p-Lasso'', hereafter) estimator solves the following optimization problem:
\begin{equation}
\widehat{\boldsymbol{\gamma}}^{\text{p-Lasso}} = \arg\min_{\boldsymbol{\gamma} \in \mathbb{R}^{n}} \left\{\frac{1}{NT} \left\| \mathbf{u}_i - \boldsymbol{\mathcal{Z}} \boldsymbol{\gamma} \right\|_2^2
+ \xi \sum_{j=1}^{n} |\gamma_{j}|
\right\}, \label{eq:plasso}
\end{equation}
where $\xi \geq 0$ is a tuning parameter selected via 10-fold cross-validation. Although this model is mis-specified under the underlying DGP (since this assumes heterogeneous coefficients), the pooled estimator $\widehat{\boldsymbol{\gamma}}$ may still be effective for variable selection under sparsity, particularly when the individual-specific loadings are not centered around zero, i.e., $\phi \neq 0$. It can be interpreted as a regularized projection of the cross-sectional signal onto the predictor space spanned by $\boldsymbol{\mathcal{Z}}$.

The individual-level Lasso (``i-Lasso'', hereafter) estimator instead estimates a separate coefficient vector for each unit $i = 1, \dots, N$:
\begin{equation}
\widehat{\boldsymbol{\gamma}}_i^{\mathrm{i\text{-}Lasso}}
=
\arg\min_{\boldsymbol{\gamma}_i\in\mathbb R^n}
\left\{
\frac{1}{2T}
\left\|
\mathbf u_i-\boldsymbol{\mathcal Z}\boldsymbol{\gamma}_i
\right\|_2^2
+
\xi_i\sum_{j=1}^{n}|\gamma_{i,j}|
\right\}. \label{eq:ilasso}
\end{equation}
where $\boldsymbol{\gamma}_i$ is penalized using an individual-specific regularization parameter $\xi_{i}$,  selected via 10-fold cross-validation. To identify relevant common components, we apply a stability selection rule: for each factor $j$, we count how often it is selected across the $N$ individual Lasso estimates (i.e., when $\widehat{\gamma}_{i,j} \neq 0$). Predictors that appear in at least 25\% of units are retained. The final model size therefore reflects the number of variables that meet this frequency threshold, providing a measure of their cross-sectional relevance.\footnote{Naturally, the performance of i-Lasso can be sensitive to the choice of the stability selection threshold. While exploring its optimal calibration (e.g., via cross-validation) might improve results in small samples, such an investigation lies beyond the scope of this paper.}

\subsection{Results}

To assess the variable selection performance of PCA-BMT and alternative methods, we employ a comprehensive set of diagnostic metrics tailored to high-dimensional selection tasks, as detailed in \citet{KapetaniosEtal2026}. These include the True Positive Rate (TPR), False Positive Rate (FPR), True Discovery Rate (TDR), False Discovery Rate (FDR), the F1 score, and the Matthews Correlation Coefficient (MCC). These metrics are computed based on the confusion matrix that categorizes selected variables as true or false discoveries.

The TPR (or recall) measures the proportion of true signals correctly identified, while the TDR (or precision) reflects the proportion of selected variables that are indeed true signals. FPR and FDR capture the share of irrelevant variables mistakenly selected.

The F1 score, defined as the harmonic mean of TPR and TDR, summarizes the trade-off between precision and recall. The MCC offers a more comprehensive measure, incorporating all four components of the confusion matrix and accounting for both Type I and Type II errors. It is particularly valuable in imbalanced settings like ours, where the number of true signals is small relative to the total number of candidates.

We also report the model size, defined as the number of selected variables, as a measure of sparsity. Further details on the computation and interpretation of these metrics are provided in \citet{KapetaniosEtal2026}.

We focus primarily on the MCC criterion. Tables \ref{tab:MCC_r2}-\ref{tab:MCC_r5} present summary statistics across 18 designs for each sparsity level ($r = 2$ and $r = 5$).

PCA-BMT performs strongly, achieving the highest median MCC in both settings (0.958 for $r = 2$, 0.922 for $r = 5$), with relatively narrow interquartile ranges and strong ranking outcomes. Its MCC exceeds the 0.8 threshold in all cases for $r = 2$, and in approximately 78\% of cases for $r = 5$. As a result, it ranks first in the vast majority of configurations. Moreover, in the few instances where it ranks second, the difference in MCC compared to the top-ranked method is negligible, as indicated in Table \ref{tab:MCC}.

p-Lasso performs reasonably well in some instances (maximum MCC up to 0.854), but its results are more variable. For $r = 2$, it has a median MCC of 0.713 and a wider IQR (0.228). Further, it does not rank first in any configuration. As discussed further below, its performance appears sensitive to specific design features, particularly whether $\phi$, the common mean parameter of the loading distribution, equals zero or not.

i-Lasso outperforms p-Lasso, with median MCC values of 0.871 ($r = 2$) and 0.779 ($r = 5$). However, it generally falls short of PCA-BMT in both accuracy and stability across designs.

\begin{table}[th]
\caption{Summary Statistics for Matthews Correlation Coefficient (MCC) with $r=2$}%
\label{tab:MCC_r2}
\centering
\begin{tabular}{lccccccc}
\hline
& \textbf{median} & \textbf{IQR} & \textbf{min} & \textbf{max} & \textbf{prop. $>$ 0.8} & \textbf{rank} & \textbf{prop. 1st} \\
\hline
PCA-BMT   & 0.958 & 0.045 & 0.878 & 0.979 & 100.0\% & 1.056 & 94.44\% \\
p-Lasso  & 0.713 & 0.228 & 0.403 & 0.851 & 27.8\%  & 2.833 & 0.00\% \\
i-Lasso & 0.871 & 0.254 & 0.529 & 0.977 & 66.7\%  & 2.111 & 5.56\% \\
\hline
\end{tabular}
\vspace{0.75em}

\noindent\parbox{\linewidth}{%
\small\textit{\textbf{Notes}:} MCC ranges from $-1$ (perfect misclassification) to $+1$ (perfect selection), with $0$ indicating random guessing. ``rank'' refers to the average position of each method when ranked by MCC across all configurations. ``prop. $> 0.8$'' reports the proportion of cases across all 18 designs in which MCC exceeds 0.8. ``prop. 1st'' reports the proportion of cases in which a method ranks first based on MCC performance.
}
\end{table}

\begin{table}[th]
\caption{Summary Statistics for Matthews Correlation Coefficient (MCC) with $r=5$}%
\label{tab:MCC_r5}
\centering
\begin{tabular}{lccccccc}
\hline
& \textbf{median} & \textbf{IQR} & \textbf{min} & \textbf{max} & \textbf{prop. $>$ 0.8} & \textbf{rank} & \textbf{prop. 1st} \\
\hline
PCA-BMT   & 0.922 & 0.109 & 0.683 & 0.970 & 77.8\%  & 1.111 & 88.89\% \\
p-Lasso  & 0.705 & 0.201 & 0.360 & 0.854 & 16.7\%  & 2.667 & 0.00\% \\
i-Lasso & 0.779 & 0.235 & 0.587 & 0.943 & 50.0\%  & 2.222 & 11.11\% \\
\hline
\end{tabular}
\vspace{0.75em}

\noindent\parbox{\linewidth}{%
\small\textit{\textbf{Notes}:}  MCC ranges from $-1$ (perfect misclassification) to $+1$ (perfect selection), with $0$ indicating random guessing. ``rank'' refers to the average position of each method when ranked by MCC across all configurations. ``prop. $> 0.8$'' reports the proportion of cases across all 18 designs in which MCC exceeds 0.8. ``prop. 1st'' reports the proportion of cases in which a method ranks first based on MCC performance.
}
\end{table}

To gain further insight into method performance across several $T,n$
combinations, Figure~\ref{fig:MCC}
presents a visual summary of MCC-based
performance under $r=2,5$ and $\phi=0,1$. Detailed results are reported in Tables~\ref{tab:MCC}--\ref{tab:FPR}.

The graphical results reveal clear and systematic differences in performance across methods. For \( r = 2 \) and \( \phi = 1 \), PCA-BMT consistently achieves the highest MCC values across all nine configurations. Its performance improves steadily with larger \( T \) or smaller \( n \), rising from 0.901 in the smallest panel \((25,100)\) to 0.979 in the largest \((100,25)\). This monotonic gain is consistent with the idea that the method benefits from longer time series. At the same time, for any fixed \( T \), expanding the candidate variable set increases the complexity of selection, leading to reduced accuracy.

p-Lasso performs moderately well in this setting, with MCC values increasing from 0.723 to 0.851 as the time-series dimension increases. i-Lasso is more sensitive to \( T \): while it performs relatively poorly for short panels (e.g., 0.529 in \((25,25)\)), its accuracy increases markedly with time series length, closely tracking PCA-BMT in high-\( T \) configurations.

When \( \phi = 0 \), i.e., when the average loading is zero and pooling is less appropriate, differences across methods become more pronounced. PCA-BMT remains stable and continues to dominate, with MCC rising from 0.878 to 0.979 across configurations. p-Lasso, by contrast, is heavily affected: its MCC values remain below 0.71 in all cases and drop as low as 0.403 in \((25,100)\). i-Lasso is more resilient than p-Lasso, showing improved performance with increasing \( T \), and attaining MCC levels close to those of PCA-BMT in large panels.

As the signal dimension increases to \( r = 5 \), the estimation task becomes more challenging due to the higher complexity of the underlying structure. Nonetheless, PCA-BMT maintains strong performance, with MCC rising from 0.683 in a small-\( T \) setting with a large number of candidate factors (four times the number of time series observations \((25,100)\)) to approximately 0.970 in the largest configurations. The method continues to clearly outperform the other approaches.
While both p-Lasso and i-Lasso improve with panel size, they remain substantially less accurate than PCA-BMT. i-Lasso outperforms p-Lasso in nearly all cases and comes close to PCA-BMT when \( T \) is large. However, its gains are more modest compared to the lower-\( r \) case, highlighting the increased difficulty of signal recovery in higher dimensions.

The contrast sharpens when \( \phi = 0 \). PCA-BMT remains stable, with MCC ranging from 0.702 to 0.944, while p-Lasso performs poorly -- especially in short panels -- and i-Lasso shows improvement but consistently underperforms relative to PCA-BMT, unless $T$ is large.

Overall, PCA-BMT exhibits accurate and stable performance across all designs considered. Its robustness to variations in signal dimensionality, loading structure, and sample size suggests strong reliability in high-dimensional settings characterised by complex dependence structures.

\setcounter{figure}{0}
\begin{landscape}
\thispagestyle{empty} 
\vfill
\centering
\begin{minipage}{0.95\linewidth}
\begin{minipage}{0.53\linewidth}
\centering
\includegraphics[height=0.36\textheight]{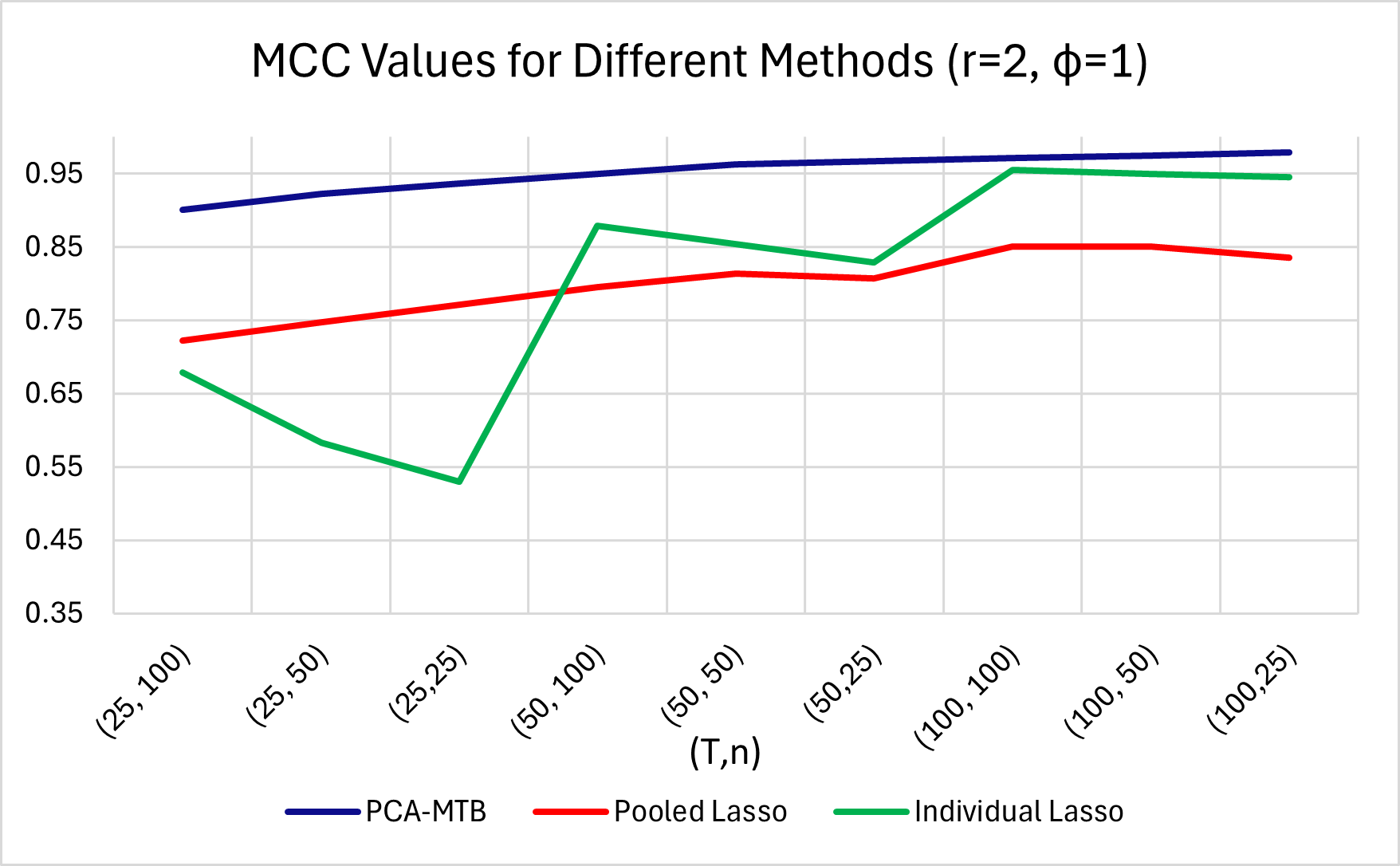}\\
\textbf{MCC}: $r=2$, $\phi=1$
\end{minipage}
\hfill
\begin{minipage}{0.53\linewidth}
\centering
\includegraphics[height=0.36\textheight]{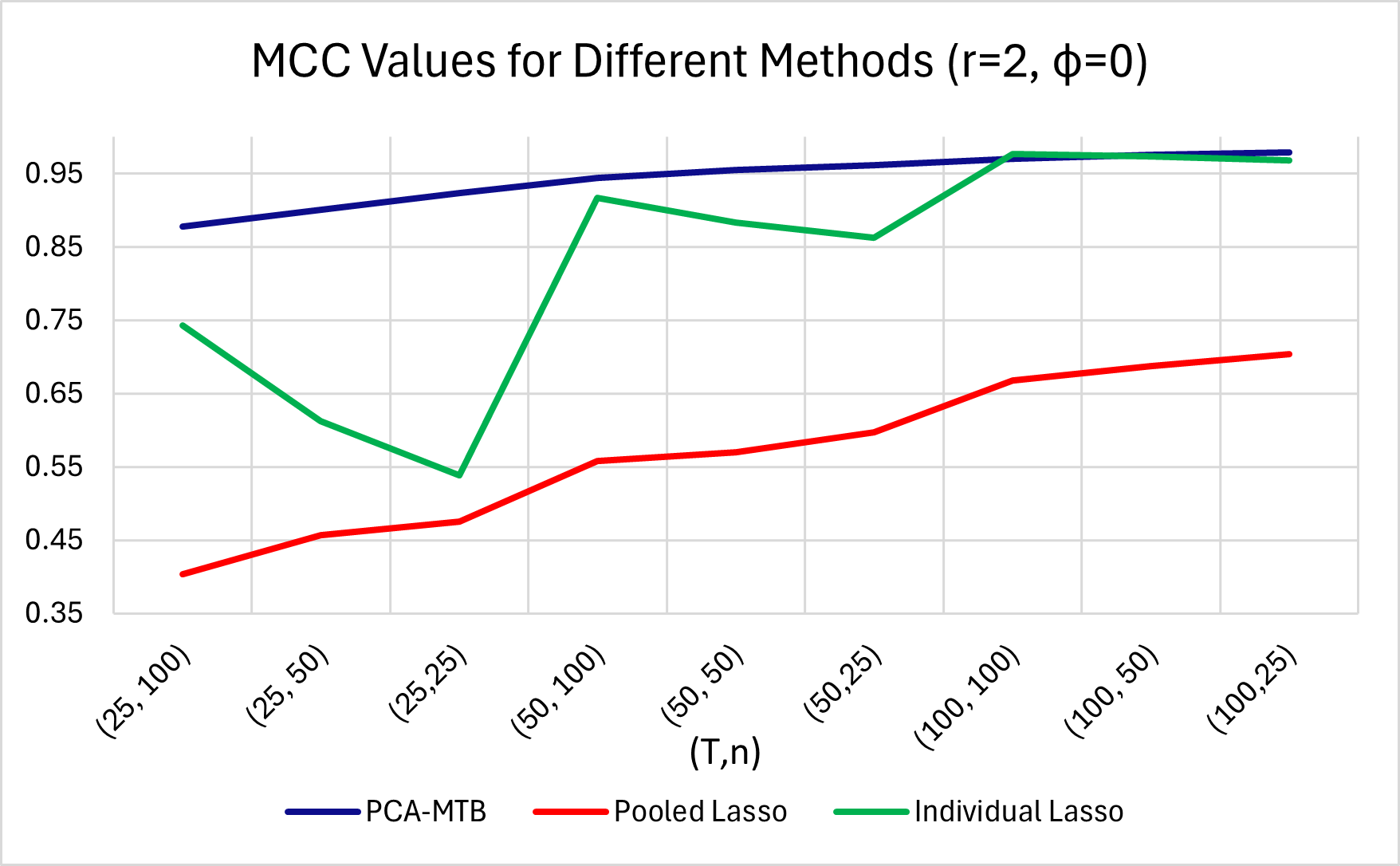}\\
\textbf{MCC}: $r=2$, $\phi=0$
\end{minipage}
\vspace{1em}
\begin{minipage}{0.53\linewidth}
\centering
\includegraphics[height=0.36\textheight]{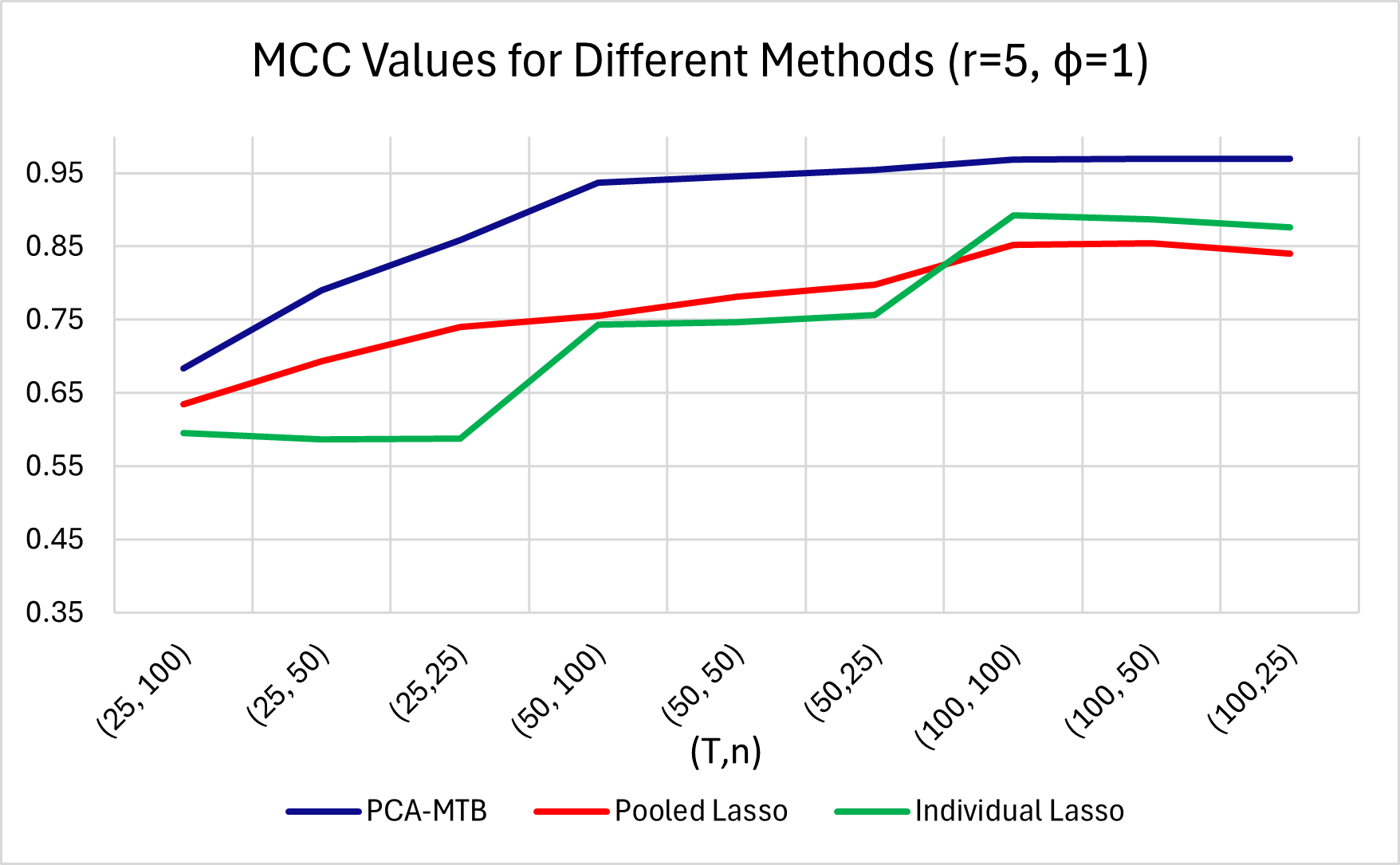}\\
\textbf{MCC}: $r=5$, $\phi=1$
\end{minipage}
\hfill
\begin{minipage}{0.53\linewidth}
\centering
\includegraphics[height=0.36\textheight]{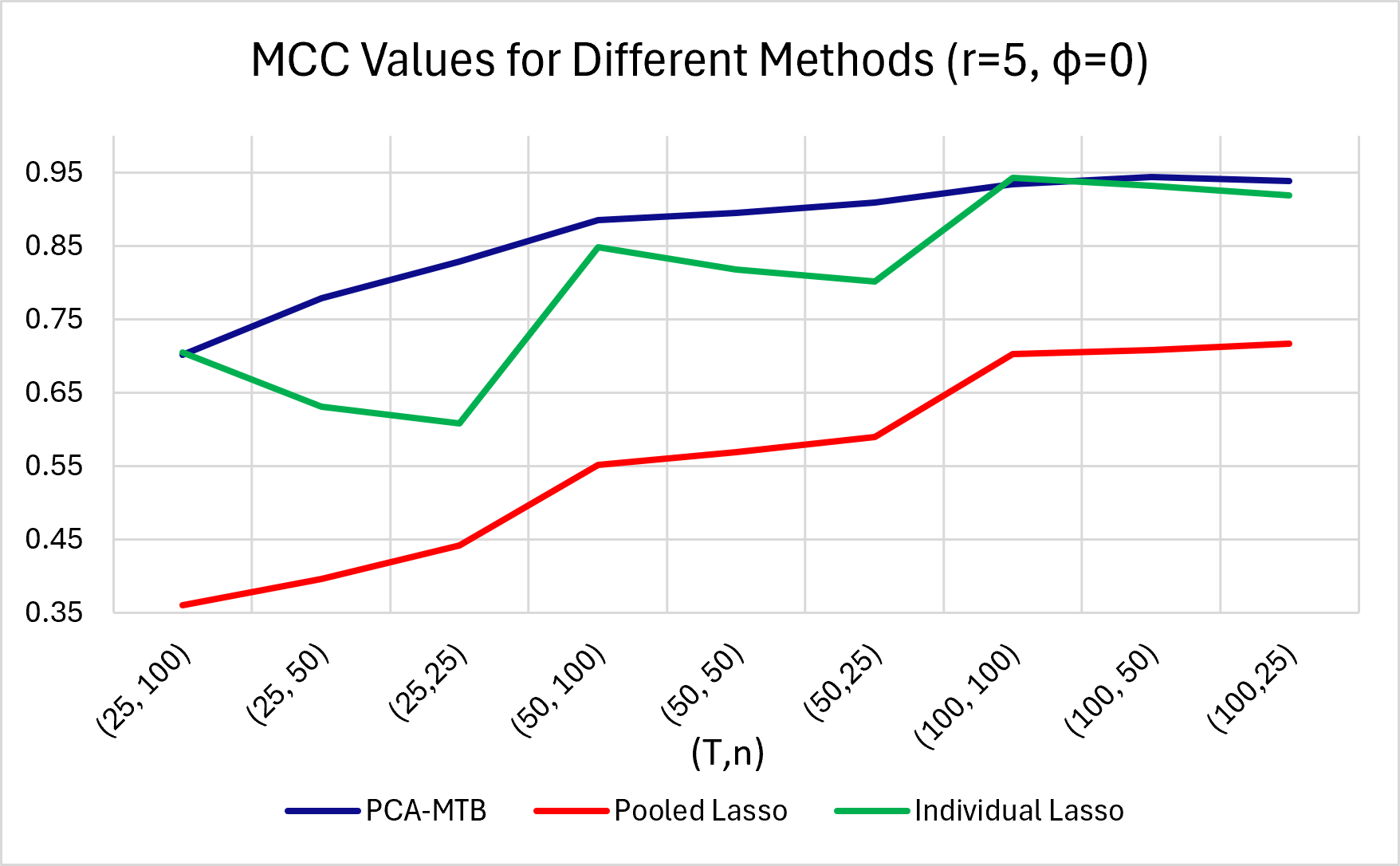}\\
\textbf{MCC}: $r=5$, $\phi=0$
\end{minipage}
\vspace{1em}
\vspace{1em}
\captionof{figure}{MCC Performance Evaluation over different (T,N) values.}
\label{fig:MCC}
\end{minipage}
\vfill
\end{landscape}

\clearpage
\begin{table}[htbp]
\centering
\begin{threeparttable}
\caption{MCC comparison across methods}
\label{tab:MCC}
\begin{tabular}{|l|l|ccc|ccc|ccc|}
\hline
\multicolumn{2}{|c|}{} & \multicolumn{3}{c|}{PCA-MTB} & \multicolumn{3}{c|}{p-Lasso} & \multicolumn{3}{c|}{i-Lasso} \\
\hline
Design & T/n & 25 & 50 & 100 & 25 & 50 & 100 & 25 & 50 & 100 \\
\hline
\multirow{3}{*}{$r=2$, $\phi=1$}
 & 25  & 0.937 & 0.923 & 0.901 & 0.772 & 0.747 & 0.723 & 0.530 & 0.584 & 0.679 \\
 & 50  & 0.967 & 0.964 & 0.951 & 0.808 & 0.814 & 0.796 & 0.830 & 0.854 & 0.879 \\
 & 100 & 0.979 & 0.976 & 0.972 & 0.836 & 0.851 & 0.852 & 0.946 & 0.950 & 0.956 \\
\hline
\multirow{3}{*}{$r=2$, $\phi=0$}
 & 25  & 0.924 & 0.901 & 0.878 & 0.476 & 0.457 & 0.403 & 0.539 & 0.613 & 0.744 \\
 & 50  & 0.962 & 0.956 & 0.945 & 0.598 & 0.570 & 0.558 & 0.863 & 0.884 & 0.918 \\
 & 100 & 0.980 & 0.976 & 0.971 & 0.704 & 0.688 & 0.669 & 0.968 & 0.974 & 0.978 \\
\hline
\multirow{3}{*}{$r=5$, $\phi=1$}
 & 25  & 0.859 & 0.791 & 0.684 & 0.741 & 0.693 & 0.635 & 0.588 & 0.587 & 0.596 \\
 & 50  & 0.955 & 0.946 & 0.937 & 0.798 & 0.782 & 0.756 & 0.757 & 0.747 & 0.744 \\
 & 100 & 0.970 & 0.970 & 0.969 & 0.841 & 0.854 & 0.853 & 0.876 & 0.887 & 0.893 \\
\hline
\multirow{3}{*}{$r=5$, $\phi=0$}
 & 25  & 0.829 & 0.779 & 0.702 & 0.442 & 0.396 & 0.360 & 0.608 & 0.631 & 0.705 \\
 & 50  & 0.910 & 0.896 & 0.886 & 0.590 & 0.569 & 0.552 & 0.802 & 0.819 & 0.849 \\
 & 100 & 0.940 & 0.944 & 0.934 & 0.718 & 0.709 & 0.703 & 0.920 & 0.933 & 0.944 \\
\hline
\end{tabular}
\vspace{0.5em}
\begin{minipage}{\linewidth}
\footnotesize
\textit{\textbf{Notes:}} Matthews Correlation Coefficient (MCC) is defined as
\[
\text{MCC} = \frac{\text{TP} \cdot \text{TN} - \text{FP} \cdot \text{FN}}{\sqrt{(\text{TP} + \text{FP})(\text{TP} + \text{FN})(\text{TN} + \text{FP})(\text{TN} + \text{FN})}},
\]
where TP, FP, TN, and FN denote the number of true positives, false positives, true negatives, and false negatives, respectively. MCC measures the correlation between predicted and true binary outcomes, reflecting how well the selected variables align with the actual ones. It ranges from $-1$ (perfect misclassification) to $+1$ (perfect selection), with 0 indicating random guessing.
\end{minipage}
\end{threeparttable}
\end{table}

\begin{table}[htbp]
\centering
\begin{threeparttable}
\caption{F1 comparison across methods}
\label{tab:F1_comparison}
\begin{tabular}{|l|l|ccc|ccc|ccc|}
\hline
\multicolumn{2}{|c|}{} & \multicolumn{3}{c|}{PCA-MTB} & \multicolumn{3}{c|}{p-Lasso} & \multicolumn{3}{c|}{i-Lasso} \\
\hline
Design & T/n & 25 & 50 & 100 & 25 & 50 & 100 & 25 & 50 & 100 \\
\hline
\multirow{3}{*}{$r=2$, $\phi=1$}
 & 25  & 0.936 & 0.918 & 0.893 & 0.763 & 0.723 & 0.687 & 0.502 & 0.535 & 0.637 \\
 & 50  & 0.966 & 0.961 & 0.946 & 0.801 & 0.798 & 0.772 & 0.826 & 0.844 & 0.868 \\
 & 100 & 0.979 & 0.974 & 0.970 & 0.832 & 0.839 & 0.836 & 0.945 & 0.947 & 0.952 \\
\hline
\multirow{3}{*}{$r=2$, $\phi=0$}
 & 25  & 0.924 & 0.896 & 0.872 & 0.476 & 0.450 & 0.390 & 0.512 & 0.569 & 0.713 \\
 & 50  & 0.961 & 0.953 & 0.940 & 0.596 & 0.559 & 0.544 & 0.861 & 0.877 & 0.911 \\
 & 100 & 0.979 & 0.974 & 0.968 & 0.699 & 0.676 & 0.653 & 0.968 & 0.972 & 0.976 \\
\hline
\multirow{3}{*}{$r=5$, $\phi=1$}
 & 25  & 0.887 & 0.824 & 0.735 & 0.767 & 0.689 & 0.601 & 0.626 & 0.576 & 0.557 \\
 & 50  & 0.960 & 0.950 & 0.939 & 0.820 & 0.783 & 0.740 & 0.782 & 0.747 & 0.727 \\
 & 100 & 0.974 & 0.971 & 0.969 & 0.859 & 0.857 & 0.847 & 0.891 & 0.890 & 0.891 \\
\hline
\multirow{3}{*}{$r=5$, $\phi=0$}
 & 25  & 0.862 & 0.810 & 0.746 & 0.461 & 0.397 & 0.348 & 0.644 & 0.623 & 0.683 \\
 & 50  & 0.919 & 0.901 & 0.891 & 0.611 & 0.567 & 0.536 & 0.823 & 0.821 & 0.843 \\
 & 100 & 0.946 & 0.946 & 0.934 & 0.739 & 0.709 & 0.692 & 0.930 & 0.935 & 0.943 \\
\hline
\end{tabular}
\vspace{0.5em}
\begin{minipage}{\linewidth}
\footnotesize
\textit{\textbf{Notes:}} F1 is the harmonic mean of precision and recall, defined as
\[
\text{F1} = \frac{2 \cdot \text{TP}}{2 \cdot \text{TP} + \text{FP} + \text{FN}},
\]
where TP, FP, and FN denote the number of true positives, false positives, and false negatives, respectively. It balances the trade-off between discovery (recall) and accuracy (precision), and is especially informative under sparsity and class imbalance.
\end{minipage}
\end{threeparttable}
\end{table}

\begin{table}[htbp]
\centering
\begin{threeparttable}
\caption{Model size comparison across methods}
\label{tab:modelsize_comparison}
\begin{tabular}{|l|l|ccc|ccc|ccc|}
\hline
\multicolumn{2}{|c|}{} & \multicolumn{3}{c|}{PCA-MTB} & \multicolumn{3}{c|}{p-Lasso} & \multicolumn{3}{c|}{i-Lasso} \\
\hline
Design & T/n & 25 & 50 & 100 & 25 & 50 & 100 & 25 & 50 & 100 \\
\hline
\multirow{3}{*}{$r=2$, $\phi=1$}
 & 25  & 2.322 & 2.427 & 2.600 & 3.578 & 3.933 & 4.359 & 6.361 & 5.904 & 4.655 \\
 & 50  & 2.172 & 2.204 & 2.286 & 3.220 & 3.276 & 3.505 & 2.969 & 2.864 & 2.722 \\
 & 100 & 2.109 & 2.133 & 2.157 & 3.001 & 2.956 & 2.991 & 2.282 & 2.273 & 2.246 \\
\hline
\multirow{3}{*}{$r=2$, $\phi=0$}
 & 25  & 2.250 & 2.407 & 2.548 & 1.735 & 1.745 & 1.633 & 6.204 & 5.437 & 3.889 \\
 & 50  & 2.179 & 2.215 & 2.293 & 2.024 & 1.918 & 1.886 & 2.750 & 2.666 & 2.467 \\
 & 100 & 2.106 & 2.132 & 2.162 & 2.200 & 2.086 & 1.947 & 2.163 & 2.142 & 2.123 \\
\hline
\multirow{3}{*}{$r=5$, $\phi=1$}
 & 25  & 5.722 & 5.628 & 5.516 & 8.297 & 9.963 & 12.406 & 11.209 & 12.727 & 13.514 \\
 & 50  & 5.412 & 5.511 & 5.617 & 7.385 & 8.031 & 8.892 & 7.948 & 8.623 & 9.051 \\
 & 100 & 5.095 & 5.118 & 5.142 & 6.780 & 6.819 & 6.981 & 6.308 & 6.319 & 6.317 \\
\hline
\multirow{3}{*}{$r=5$, $\phi=0$}
 & 25  & 5.223 & 5.281 & 5.239 & 3.177 & 3.146 & 3.131 & 10.741 & 11.347 & 9.968 \\
 & 50  & 4.952 & 4.978 & 5.021 & 4.063 & 3.864 & 3.918 & 7.272 & 7.324 & 7.018 \\
 & 100 & 4.837 & 4.852 & 4.818 & 4.771 & 4.477 & 4.370 & 5.818 & 5.748 & 5.657 \\
\hline
\end{tabular}
\vspace{0.5em}
\begin{minipage}{\linewidth}
\footnotesize
\textit{\textbf{Notes:}} Model size refers to the number of predictors selected by each method in a given replication. Values shown are the mean model sizes across Monte Carlo iterations.
\end{minipage}
\end{threeparttable}
\end{table}

\begin{table}[htbp]
\centering
\begin{threeparttable}
\caption{TDR comparison across methods}
\label{tab:TDR_comparison}
\begin{tabular}{|l|l|ccc|ccc|ccc|}
\hline
\multicolumn{2}{|c|}{} & \multicolumn{3}{c|}{PCA-MTB} & \multicolumn{3}{c|}{p-Lasso} & \multicolumn{3}{c|}{i-Lasso} \\
\hline
Design & T/n & 25 & 50 & 100 & 25 & 50 & 100 & 25 & 50 & 100 \\
\hline
\multirow{3}{*}{$r=2$, $\phi=1$}
 & 25  & 0.900 & 0.874 & 0.835 & 0.651 & 0.600 & 0.558 & 0.344 & 0.378 & 0.488 \\
 & 50  & 0.946 & 0.937 & 0.913 & 0.698 & 0.696 & 0.664 & 0.725 & 0.753 & 0.790 \\
 & 100 & 0.965 & 0.957 & 0.950 & 0.741 & 0.753 & 0.749 & 0.909 & 0.913 & 0.921 \\
\hline
\multirow{3}{*}{$r=2$, $\phi=0$}
 & 25  & 0.902 & 0.865 & 0.824 & 0.725 & 0.703 & 0.685 & 0.354 & 0.413 & 0.579 \\
 & 50  & 0.941 & 0.929 & 0.908 & 0.756 & 0.755 & 0.746 & 0.777 & 0.802 & 0.855 \\
 & 100 & 0.966 & 0.958 & 0.949 & 0.789 & 0.792 & 0.810 & 0.947 & 0.954 & 0.960 \\
\hline
\multirow{3}{*}{$r=5$, $\phi=1$}
 & 25  & 0.833 & 0.767 & 0.664 & 0.635 & 0.539 & 0.443 & 0.460 & 0.410 & 0.393 \\
 & 50  & 0.933 & 0.915 & 0.898 & 0.707 & 0.656 & 0.601 & 0.651 & 0.605 & 0.582 \\
 & 100 & 0.970 & 0.965 & 0.961 & 0.763 & 0.762 & 0.747 & 0.812 & 0.811 & 0.812 \\
\hline
\multirow{3}{*}{$r=5$, $\phi=0$}
 & 25  & 0.846 & 0.784 & 0.705 & 0.764 & 0.705 & 0.673 & 0.480 & 0.458 & 0.529 \\
 & 50  & 0.937 & 0.917 & 0.900 & 0.794 & 0.783 & 0.762 & 0.708 & 0.706 & 0.740 \\
 & 100 & 0.969 & 0.968 & 0.960 & 0.822 & 0.830 & 0.826 & 0.876 & 0.886 & 0.899 \\
\hline
\end{tabular}
\vspace{0.5em}
\begin{minipage}{\linewidth}
\footnotesize
\textit{\textbf{Notes:}} The True Discovery Rate (TDR), also known as precision, is defined as
\[
\text{TDR} = \frac{\text{TP}}{\text{TP} + \text{FP}},
\]
where TP and FP denote the number of true and false positives, respectively. TDR reflects the proportion of selected variables that are truly relevant. Higher values indicate greater accuracy in identifying true signals while avoiding false discoveries.
\end{minipage}
\end{threeparttable}
\end{table}

\begin{table}[htbp]
\centering
\begin{threeparttable}
\caption{FDR comparison across methods}
\label{tab:FDR_comparison}
\begin{tabular}{|l|l|ccc|ccc|ccc|}
\hline
\multicolumn{2}{|c|}{} & \multicolumn{3}{c|}{PCA-MTB} & \multicolumn{3}{c|}{p-Lasso} & \multicolumn{3}{c|}{i-Lasso} \\
\hline
Design & T/n & 25 & 50 & 100 & 25 & 50 & 100 & 25 & 50 & 100 \\
\hline
\multirow{3}{*}{$r=2$, $\phi=1$}
 & 25  & 0.100 & 0.126 & 0.165 & 0.349 & 0.400 & 0.442 & 0.656 & 0.622 & 0.512 \\
 & 50  & 0.054 & 0.063 & 0.087 & 0.302 & 0.304 & 0.336 & 0.275 & 0.247 & 0.210 \\
 & 100 & 0.035 & 0.043 & 0.050 & 0.259 & 0.247 & 0.251 & 0.091 & 0.087 & 0.079 \\
\hline
\multirow{3}{*}{$r=2$, $\phi=0$}
 & 25  & 0.098 & 0.135 & 0.176 & 0.275 & 0.297 & 0.315 & 0.646 & 0.587 & 0.421 \\
 & 50  & 0.059 & 0.071 & 0.092 & 0.244 & 0.245 & 0.254 & 0.223 & 0.198 & 0.145 \\
 & 100 & 0.034 & 0.042 & 0.051 & 0.211 & 0.208 & 0.190 & 0.053 & 0.046 & 0.040 \\
\hline
\multirow{3}{*}{$r=5$, $\phi=1$}
 & 25  & 0.167 & 0.233 & 0.336 & 0.365 & 0.461 & 0.557 & 0.540 & 0.590 & 0.607 \\
 & 50  & 0.067 & 0.085 & 0.102 & 0.293 & 0.344 & 0.399 & 0.349 & 0.395 & 0.418 \\
 & 100 & 0.030 & 0.035 & 0.039 & 0.237 & 0.238 & 0.253 & 0.188 & 0.189 & 0.188 \\
\hline
\multirow{3}{*}{$r=5$, $\phi=0$}
 & 25  & 0.154 & 0.216 & 0.295 & 0.236 & 0.295 & 0.327 & 0.520 & 0.542 & 0.471 \\
 & 50  & 0.063 & 0.083 & 0.100 & 0.206 & 0.217 & 0.238 & 0.292 & 0.294 & 0.260 \\
 & 100 & 0.031 & 0.032 & 0.040 & 0.178 & 0.170 & 0.174 & 0.124 & 0.114 & 0.101 \\
\hline
\end{tabular}
\vspace{0.5em}
\begin{minipage}{\linewidth}
\footnotesize
\textit{\textbf{Notes:}} The False Discovery Rate (FDR) is defined as
\[
\text{FDR} = \frac{\text{FP}}{\text{TP} + \text{FP}},
\]
where FP and TP denote the number of false and true positives, respectively. FDR reflects the proportion of selected predictors that are in fact irrelevant. Lower values indicate greater reliability in variable selection.
\end{minipage}
\end{threeparttable}
\end{table}

\begin{table}[htbp]
\centering
\begin{threeparttable}
\caption{TPR comparison across methods}
\label{tab:TPR_comparison}
\begin{tabular}{|l|l|ccc|ccc|ccc|}
\hline
\multicolumn{2}{|c|}{} & \multicolumn{3}{c|}{PCA-MTB} & \multicolumn{3}{c|}{p-Lasso} & \multicolumn{3}{c|}{i-Lasso} \\
\hline
Design & T/n & 25 & 50 & 100 & 25 & 50 & 100 & 25 & 50 & 100 \\
\hline
\multirow{3}{*}{$r=2$, $\phi=1$}
 & 25  & 0.996 & 0.995 & 0.995 & 1.000 & 1.000 & 1.000 & 1.000 & 1.000 & 1.000 \\
 & 50  & 1.000 & 1.000 & 1.000 & 1.000 & 1.000 & 1.000 & 1.000 & 1.000 & 1.000 \\
 & 100 & 1.000 & 1.000 & 1.000 & 1.000 & 1.000 & 1.000 & 1.000 & 1.000 & 1.000 \\
\hline
\multirow{3}{*}{$r=2$, $\phi=0$}
 & 25  & 0.970 & 0.966 & 0.962 & 0.520 & 0.495 & 0.433 & 1.000 & 1.000 & 1.000 \\
 & 50  & 0.995 & 0.994 & 0.995 & 0.652 & 0.606 & 0.589 & 1.000 & 1.000 & 1.000 \\
 & 100 & 1.000 & 1.000 & 1.000 & 0.763 & 0.726 & 0.688 & 1.000 & 1.000 & 1.000 \\
\hline
\multirow{3}{*}{$r=5$, $\phi=1$}
 & 25  & 0.943 & 0.867 & 0.755 & 1.000 & 1.000 & 0.999 & 1.000 & 1.000 & 1.000 \\
 & 50  & 0.996 & 0.993 & 0.990 & 1.000 & 1.000 & 1.000 & 1.000 & 1.000 & 1.000 \\
 & 100 & 0.982 & 0.981 & 0.982 & 1.000 & 1.000 & 1.000 & 1.000 & 1.000 & 1.000 \\
\hline
\multirow{3}{*}{$r=5$, $\phi=0$}
 & 25  & 0.878 & 0.827 & 0.749 & 0.437 & 0.382 & 0.338 & 1.000 & 1.000 & 1.000 \\
 & 50  & 0.917 & 0.901 & 0.891 & 0.599 & 0.548 & 0.524 & 1.000 & 1.000 & 1.000 \\
 & 100 & 0.932 & 0.934 & 0.919 & 0.743 & 0.697 & 0.677 & 1.000 & 1.000 & 1.000 \\
\hline
\end{tabular}
\vspace{0.5em}
\begin{minipage}{\linewidth}
\footnotesize
\textit{\textbf{Notes:}} The True Positive Rate (TPR), also known as recall or sensitivity, is defined as
\[
\text{TPR} = \frac{\text{TP}}{\text{TP} + \text{FN}},
\]
where TP and FN denote the number of true positives and false negatives, respectively. TPR measures the proportion of relevant variables correctly identified. A TPR of 1 indicates perfect recovery of all true signals.
\end{minipage}
\end{threeparttable}
\end{table}

\begin{table}[htbp]
\centering
\begin{threeparttable}
\caption{FPR comparison across methods}
\label{tab:FPR}
\begin{tabular}{|l|l|ccc|ccc|ccc|}
\hline
\multicolumn{2}{|c|}{} & \multicolumn{3}{c|}{PCA-MTB} & \multicolumn{3}{c|}{p-Lasso} & \multicolumn{3}{c|}{i-Lasso} \\
\hline
Design & T/n & 25 & 50 & 100 & 25 & 50 & 100 & 25 & 50 & 100 \\
\hline
\multirow{3}{*}{$r=2$, $\phi=1$}
 & 25  & 0.013 & 0.009 & 0.006 & 0.063 & 0.039 & 0.024 & 0.162 & 0.075 & 0.026 \\
 & 50  & 0.007 & 0.004 & 0.003 & 0.049 & 0.026 & 0.015 & 0.036 & 0.017 & 0.007 \\
 & 100 & 0.004 & 0.003 & 0.002 & 0.040 & 0.019 & 0.010 & 0.010 & 0.005 & 0.002 \\
\hline
\multirow{3}{*}{$r=2$, $\phi=0$}
 & 25  & 0.012 & 0.009 & 0.006 & 0.028 & 0.015 & 0.008 & 0.156 & 0.066 & 0.019 \\
 & 50  & 0.008 & 0.005 & 0.003 & 0.029 & 0.014 & 0.007 & 0.028 & 0.013 & 0.005 \\
 & 100 & 0.004 & 0.003 & 0.002 & 0.027 & 0.013 & 0.006 & 0.006 & 0.003 & 0.001 \\
\hline
\multirow{3}{*}{$r=5$, $\phi=1$}
 & 25  & 0.040 & 0.026 & 0.017 & 0.132 & 0.099 & 0.074 & 0.207 & 0.140 & 0.081 \\
 & 50  & 0.017 & 0.011 & 0.007 & 0.095 & 0.061 & 0.039 & 0.098 & 0.066 & 0.039 \\
 & 100 & 0.007 & 0.004 & 0.002 & 0.071 & 0.036 & 0.020 & 0.044 & 0.024 & 0.013 \\
\hline
\multirow{3}{*}{$r=5$, $\phi=0$}
 & 25  & 0.033 & 0.023 & 0.015 & 0.040 & 0.025 & 0.014 & 0.191 & 0.115 & 0.047 \\
 & 50  & 0.015 & 0.009 & 0.006 & 0.043 & 0.023 & 0.013 & 0.076 & 0.042 & 0.019 \\
 & 100 & 0.007 & 0.004 & 0.002 & 0.042 & 0.020 & 0.010 & 0.027 & 0.014 & 0.006 \\
\hline
\end{tabular}
\vspace{0.5em}
\begin{minipage}{\linewidth}
\footnotesize
\textit{\textbf{Notes:}} The False Positive Rate (FPR) is defined as
\[
\text{FPR} = \frac{\text{FP}}{\text{FP} + \text{TN}},
\]
where FP and TN denote the number of false positives and true negatives, respectively. FPR measures the proportion of irrelevant variables that were incorrectly selected. A lower FPR indicates fewer false discoveries.
\end{minipage}
\end{threeparttable}
\end{table}

\clearpage

\putbib[commodities]
\end{bibunit}

\end{document}